\newcommand{\zg}[1]{{\color{blue} [ZG: #1]}}
\newcommand{\wh}[1]{{\color{cyan} [WH: #1]}}
\newcommand{\qz}[1]{{\color{red} [QZ: #1]}}
\theoremstyle{plain}
\newtheorem{theorem}{Theorem}
\newtheorem{proposition}{Proposition}
\newtheorem{lemma}{Lemma}
\newtheorem{corollary}{Corollary}
\crefname{theorem}{Theorem}{Theorems}
\crefname{proposition}{Proposition}{Propositions}
\crefname{corollary}{Corollary}{Corollaries}
\theoremstyle{definition}
\newtheorem{definition}{Definition}
\newtheorem{example}{Example}
\crefname{definition}{Definition}{Definitions}
\crefname{remark}{Remark}{Remarks}
\def\EE{\mathbb{E}}
\def\NN{\mathbb{N}}
\def\PP{\mathbb{P}}
\def\RR{\mathbb{R}}
\def\calA{\mathcal{A}}
\def\calD{\mathcal{D}}
\def\calI{\mathcal{I}}
\def\calN{\mathcal{N}}
\def\calS{\mathcal{S}}
\def\calX{\mathcal{X}}
\newcommand\pr{{\mathbb{P}}} 
\def\1{\mathbbm{1}}
\newcommand\independent{\protect\mathpalette{\protect\independenT}{\perp}}
\def\independenT#1#2{\mathrel{\rlap{$#1#2$}\mkern2mu{#1#2}}}
\newcommand{\argmin}{\mathop{\mathrm{argmin}}}
\theoremstyle{plain}
\def\independenT#1#2{\mathrel{\rlap{$#1#2$}\mkern2mu{#1#2}}}
\definecolor{myblue}{rgb}{.8, .8, 1}
\definecolor{mathblue}{rgb}{0.2472, 0.24, 0.6} 
\definecolor{mathred}{rgb}{0.6, 0.24, 0.442893}
\definecolor{mathyellow}{rgb}{0.6, 0.547014, 0.24}
\newcommand{\No}{{m}}
\newcommand{\FDR}{\text{FDR}}
\newcommand{\NoTask}{{k}}
\def\showComment{0}
\title{\LARGE \bf A constructive approach to selective risk control}
\def\blind{0} 
\author[1]{Zijun Gao}
\author[2]{Wenjie Hu}
\author[3]{Qingyuan Zhao}
\affil[1]{Department of Data Science and Operations, University of Southern California, USA}
\affil[2]{Department of Biostatistics, Epidemiology \& Informatics, University of Pennsylvania, USA}
\affil[3]{Statistical Laboratory, University of Cambridge, UK}
\author[1]{}
\date{}
\begin{document}

\maketitle
\if\blind1
\vspace{-2cm}
\fi
\begin{abstract}
  Many modern applications require using data to select
  the statistical tasks and make valid inference after
  selection. In this article, we provide a unifying approach to
  control for a class of selective risks. Our method is
  motivated by a reformulation of the celebrated Benjamini--Hochberg
  (BH) procedure for multiple hypothesis testing as the fixed point iteration of the  
  Benjamini--Yekutieli (BY) procedure for constructing post-selection
  confidence intervals. 
  Building on this observation, we propose a
  constructive approach to control extra-selection risk (where selection is
  made after decision) by iterating decision strategies that control
  the post-selection risk (where decision is made after selection), and show
  that many previous methods and results are special cases of this
  general framework. 
  Our development leads to two surprising results about the
  BH procedure: (1) in the context of one-sided
  location testing, the BH procedure not only controls the false
  discovery rate at the null but also at other locations \emph{for
    free}; (2) in the context of permutation tests, the BH procedure
  with exact permutation p-values can be well approximated by a
  procedure which only requires a total number of permutations that is
  almost \emph{linear} in the total number of hypotheses.

\end{abstract}

\noindent%
{\it Keywords:}  selection inference, multiple testing, false discovery rate, fixed-point iteration, permutation test 



\section{Introduction} \label{sec:intro}

In classical statistical theory, it is assumed that the
scientific hypotheses of interest and the statistical model are
determined before data analysis. In modern applications, however, data analysts
are often interested in using the data to select suitable models or
hypotheses to be tested. It is widely recognized that classical
inference procedures are usually invalid in presence of
selection. This has led to a vast literature on ``selective
inference''. Under this umbrella term are two broad types of problems:
\begin{description}
    \item[The selection problem:] How can we select promising signals for further
      investigation? 
    \item[The post-selection problem:] How can we make statistical
      inference after some tasks have been selected using the data? 
\end{description}

Although these two types of problems are often studied in
isolation in the literature, they are clearly related. For one thing,
it is natural to expect that good post-selection inference should take
into account of how the selection is made. Moreover, the selection
problem can be viewed as a special case of the post-selection problem,
where the target of inference is the selection itself. 

This article is motivated by a surprising observation that connects the celebrated Benjamini–Hochberg (BH) procedure for controlling the false discovery rate (FDR)---a multiple testing method for the selection problem---and the Benjamini-Yekutieli procedure for controlling the false coverage-statement rate (FCR)---a well-known method for the post-selection problem. In the original article by \textcite{benjamini2005false} that introduces the BY procedure, it is already observed that the output of the BH procedure is a \emph{fixed point} of the BY procedure, that is,
\begin{equation*} 
  \text{BH}(\text{data}) = \text{BY} \circ \text{BH} (\text{data}),
\end{equation*}
where $\circ$ means composition of those procedures (a precise definition will be given later). With this observation, \textcite[section 6.1]{benjamini2005false} uses the FCR controlling property of BY to prove the FDR controlling property of BH. We find that the BH and BY procedures have an even closer connection: BH is exactly the \emph{fixed point iteration} of BY in the sense that
\begin{equation} \label{eq:bh-by-iteration}
  \text{BH}(\text{data}) = \underbrace{\text{BY} \circ \text{BY} \circ \dots \circ \text{BY}}_{\text{until convergence}} (\text{data}).
\end{equation}
We next use a numerical example to clarify what we mean by \eqref{eq:bh-by-iteration} and how it motivates the rest of this article.

\subsection{The BH procedure as a fixed point iteration}
\label{sec:reform-benj-hochb} \label{sec:illustration}

\if\showComment1
\wh{1. iterative procedure. 2. The connection between FCR and FDR: The risk we want to control is FCR. When the selected parameters' confidence intervals don't contain 0 (this is a constraint we need for the selected tasks). When the FCR is controlled under this constraint selected tasks, the FDR is also controlled. 3. Computation acceleration compared to standard BH. Use figures to illustrate.}
\fi

Suppose a data analyst observed $\No$ independent z-statistics, $X_i \sim
\calN(\theta_i, 1)$ and the associated one-sided p-values $P_i = \Phi(X_i)$ for the null hypotheses $H_{i}: \theta_i \ge 0$, $i = 1,\dots,\No$, where $\Phi$ is the cumulative distribution
function of the
standard normal distribution. Let $X_{(1)} \leq \dots \leq X_{(\No)}$
be the order statistics and $P_{(1)} \leq \dots \leq P_{(\No)}$ be the
associated p-values. Given $q\in [0,1]$, the BH procedure rejects the null hypothesis
$H_i:\theta_i \geq 0$ if
\begin{equation}
  \label{eq:bh}
  P_i \leq P^{*},~\text{where}~P^{*} = P_{(I^{*})}~\text{and}~I^{*} =
  \max \left \{i: P_{(i)} \leq (i/\No) q \right \},
\end{equation}
and controls the FDR at level $q$, meaning that
\begin{equation}
  \label{eq:fdr}
  \FDR = \EE\left[ \frac{\sum_{i=1}^{\No} S_i^{*} 1_{\{\theta_i \ge
      0\}}}{1 \vee \sum_{i=1}^{\No} S_i^{*}}  \right] \leq q,
\end{equation}
where $S_i^{*} = 1_{\{P_i \leq P^{*}\}}$. The numerator and
denominator in \eqref{eq:fdr} correspond, respectively, to the number
of false rejections and the total number of rejections (except that here we take the maximum with 1 to avoid issues with $0/0$).

The Benjamini--Yekutieli (BY) procedure is a general method that
addresses the post-selection problem. Suppose a subset $\mathcal{S}
\subseteq \{1,\dots,m\}$ of the indexes of parameters is selected using the data,
and one is interested in constructing confidence intervals for the
selected parameters. Let $S_i = 1_{\{i \in \mathcal{S}\}}$ be the
indicator that $i$ is selected. When specializing to the normal
location problem considered here, the BY procedure computes the
 one-sided confidence intervals $(-\infty, U_i)$ for all $i
\in \mathcal{S}$, where
\begin{equation}
  \label{eq:BY}
  U_i =  X_i + \Phi^{-1}(1 - q |\mathcal{S}| / \No).
\end{equation}
\textcite{benjamini2005false} showed that if the selection procedure
that gives $\mathcal{S}$ is ``stable'' in a sense defined in \Cref{sec:pre-post} below, then the confidence intervals
controls the FCR, that is,
\begin{equation}
  \label{eq:fcr}
  \mathrm{FCR} = \EE\left[ \frac{\sum_{i=1}^{\No} S_i 1_{\{\theta_i \ge
      U_i\}}}{1 \vee \sum_{i=1}^{\No} S_i}  \right] \leq q.
\end{equation}

A careful reader may have already noticed the similarities between the
BH and BY procedures in how the nominal level $q$ is adjusted and the
expression of the error rates. But as we will show next, there is an
even closer connection: the BH procedure is \emph{exactly} the
fixed-point iteration limit of the BY procedure!

\begin{figure}[t]
        \centering
        \begin{minipage}{1\textwidth}
                \centering
                \includegraphics[clip, trim = 0cm 0cm 0cm 0cm, height = 10cm]{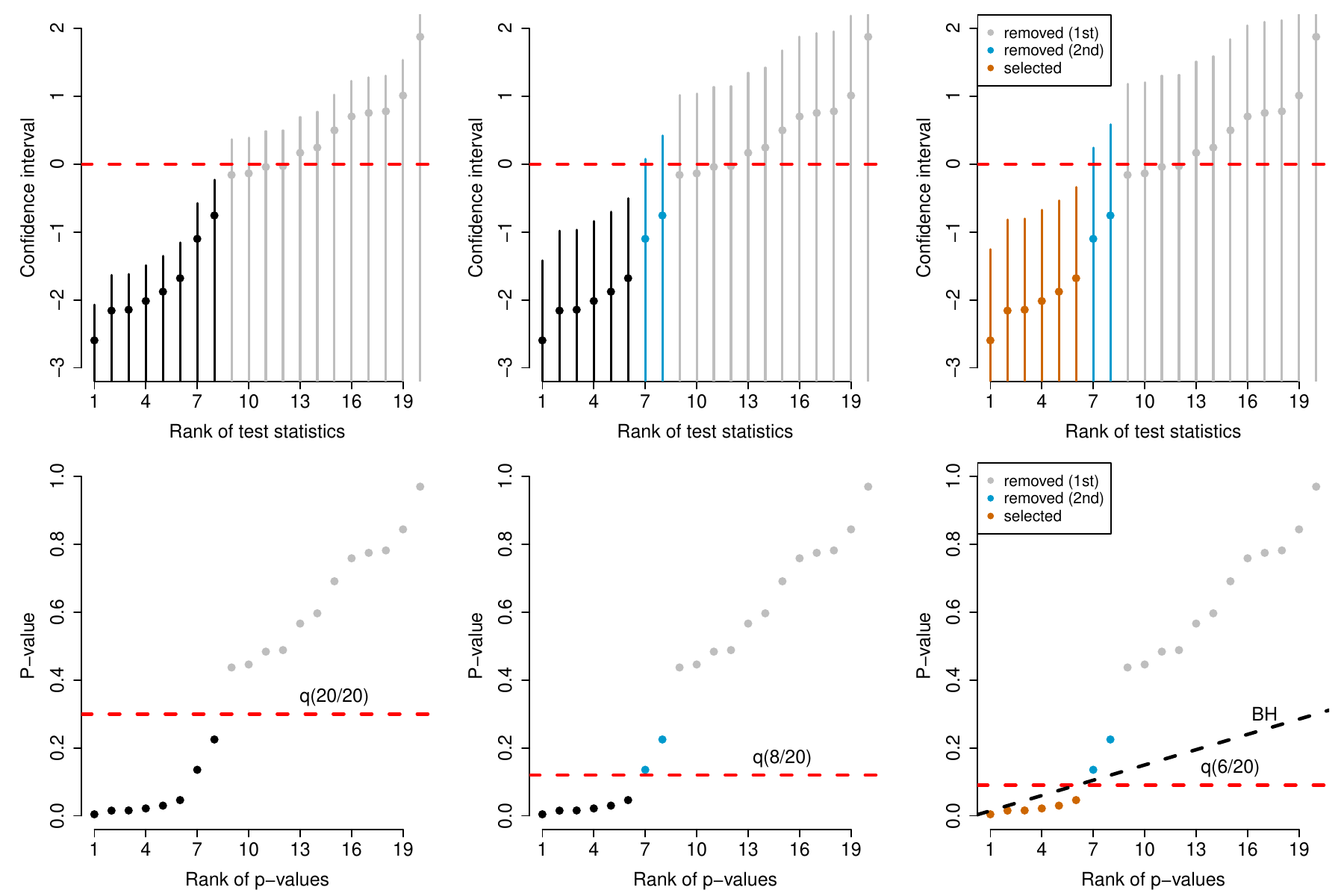}
                \subcaption*{\qquad 1st iteration \qquad\qquad\qquad\qquad\quad 2nd iteration \qquad\qquad\qquad\qquad 3rd iteration }
        \end{minipage}
       \caption{The BH procedure as the fixed point iteration of the BY procedure. In this example, we are interested in testing $H_i: \theta_i \ge 0,~i=1,\dots,20$ using $X_i \sim \mathcal{N}(\theta_i,1)$.
         In the first iteration (left panels), $20$ one-sided confidence intervals are constructed at level $1 - q = 0.7$, and $12$ hypotheses are removed because the corresponding confidence intervals cover zero (top left panel).
         This is equivalent to eliminating the hypotheses with p-values $\geq q = 0.3$ (top right panel).
         In the second iteration (middle panels), $8$ one-sided confidence intervals at level $1-q (8/20)$ are constructed and $2$ more hypotheses 
         are removed.
         This is equivalent to de-selecting the hypotheses with p-values $\geq (8/20)q$.
         In the third iteration (right panels), $6$ one-sided confidence intervals at level $1-(6/20) q$ are constructed and none of them covers zero; in other words, all remaining p-values are below $(6/20)q$.
         In the bottom right panel, a reference line with slope $q/20$ is superimposed, showing the BH procedure rejects exactly those 6 hypotheses.
         }
        \label{fig:iterate.FCR}
\end{figure}

We first explain what we mean by ``iterative limit'' with the help of
a numerical example in \Cref{fig:iterate.FCR} (more data about this example are provided in \Cref{tab:iterative.FCR} in the appendix). Imagine the data
analyst, having observed $X_1 = -2.59,\dots,X_{20} = 1.88$  is interested in
identifying the negative values among $\theta_1,\dots,\theta_{20}$. She
first constructed 20 (unadjusted) one-sided confidence intervals, and
decided to only keep those that do
not cover $0$; this is illustrated by the top left panel of
\Cref{fig:iterate.FCR}.
Initially, the analyst was content with just reporting
these $8$ confidence intervals, but was then warned by another statistician
about the perils of selection bias. Following the advice of that
statistician, the analyst applied the BY procedure
and obtained adjusted confidence intervals for the selected
parameters, illustrated by the middle panel on the top row of
\Cref{fig:iterate.FCR}. She was told by the statistician that the new confidence intervals are valid in the sense of \eqref{eq:fcr}.

But there is an obvious problem: unlike the first unadjusted
confidence intervals, some of the new adjusted confidence intervals
now cover $0$, and the data analyst was tempted to just report the $6$
new confidence intervals 
that do not
cover $0$. Of course, there is another catch: the new confidence
intervals may no longer be valid due to the
\emph{extra selection} (from $8$ to $6$) being made. This time, the
data analyst recognized a clever solution: she can just apply the BY
procedure to the $6$ selected parameters once again! She found that
although the confidence intervals become wider with the additional
adjustment, none of them covers $0$ (top right panel of
\Cref{fig:iterate.FCR}), and she delightedly shared these results with
her colleagues.

The next proposition states that these $6$ parameters are exactly those
that would be rejected by the BH procedure. This is also illustrated in bottom right panel of \Cref{fig:iterate.FCR}.

\begin{proposition} \label{prop:fcr-to-fdr}
  In the one-sided normal location problem described above, let
  $\mathcal{S}^1 = \{1,\dots,m\}$ and consider the fixed point iteration of the BY procedure: 
  \begin{align*}
    U_i^t = X_i + \Phi^{-1}(1 - q |\mathcal{S}^{t}| / \No),~i \in
            \mathcal{S}^t \quad \text{and} \quad
    \mathcal{S}^{t+1} = \{i: U_i^t \leq 0\}, \quad t = 1, 2,\dots.
  \end{align*}
  This iterative procedure converges in finite time, meaning $T =\inf\{t
  \geq 1:S_{t-1} =S_t\} - 1 < \infty$ and $\mathcal{S}^t =
  \mathcal{S}^{T}$ for all $t \geq T$. Moreover, the selected set at
  convergence is given by
  \[
    S^T = \{i: P_i \leq P^{*}\},
  \]
  where $P^{*}$ is the p-value threshold determined by the BH procedure
  given in \eqref{eq:bh}.
\end{proposition}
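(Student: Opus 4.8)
The plan is to reduce the vector-valued BY iteration to a scalar recursion on the cardinalities $s_t := |\mathcal{S}^t|$ and then identify the limit of that recursion with the BH index $I^*$ from \eqref{eq:bh}.

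\emph{Step 1 (reduce the BY update to a p-value threshold; finite-time convergence).} Using the symmetry $\Phi^{-1}(1-u) = -\Phi^{-1}(u)$ and monotonicity of $\Phi$, I would first rewrite the event $U_i^t \le 0$ as $X_i \le \Phi^{-1}(q|\mathcal{S}^t|/\No)$, i.e. as $P_i = \Phi(X_i) \le q|\mathcal{S}^t|/\No$. Thus the iteration becomes
\[
  \mathcal{S}^{t+1} = \left\{ i \in \mathcal{S}^t : P_i \le q|\mathcal{S}^t|/\No \right\} .
\]
In particular $\mathcal{S}^{t+1} \subseteq \mathcal{S}^t$ for all $t$, so $(\mathcal{S}^t)_{t\ge 1}$ is a nonincreasing chain of subsets of the finite set $\{1,\dots,\No\}$ and must stabilize after finitely many steps; this proves the finite-time convergence claim ($T<\infty$ and $\mathcal{S}^t=\mathcal{S}^T$ for $t\ge T$).

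\emph{Step 2 (the iterates are p-value lower sets).} Next I would show by induction on $t$ that each $\mathcal{S}^t$ is downward closed in the p-value order: if $j\in\mathcal{S}^t$ and $P_i\le P_j$ then $i\in\mathcal{S}^t$. The base case holds since $\mathcal{S}^1=\{1,\dots,\No\}$, and the inductive step is immediate from the formula in Step 1. Hence $\mathcal{S}^t$ consists exactly of the indices of the $s_t$ smallest p-values, and the dynamics collapse to the scalar recursion $s_{t+1} = \min\{s_t, g(s_t)\}$, where $g(k) := \#\{ j : P_{(j)} \le qk/\No \}$ is nondecreasing in $k$. Moreover the process has converged at step $t$ precisely when $g(s_t)\ge s_t$, equivalently when $P_{(s_t)} \le q s_t/\No$.

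\emph{Step 3 (identify the limit with $I^*$).} It remains to show $s_T = I^*$. The bound $s_T \le I^*$ is immediate from Step 2, since at convergence $P_{(s_T)}\le q s_T/\No$, so $s_T$ lies in the set over which $I^*$ is the maximum. For the reverse inequality I would prove $s_t \ge I^*$ for every $t$ by induction: $s_1=\No\ge I^*$, and if $s_t\ge I^*$ then, using that $g$ is nondecreasing together with $g(I^*)\ge I^*$ (which holds because $P_{(1)}\le\dots\le P_{(I^*)}\le (I^*/\No)q$ puts all of the $I^*$ smallest p-values below the cutoff $qI^*/\No$), we get $s_{t+1}=\min\{s_t,g(s_t)\}\ge\min\{I^*,g(I^*)\}=I^*$. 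Hence $s_T=I^*$, so $\mathcal{S}^T$ is the set of indices of the $I^*$ smallest p-values, which — the $P_i$ being almost surely distinct — equals $\{i:P_i\le P_{(I^*)}\}=\{i:P_i\le P^*\}$, the BH rejection set.

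\emph{Main obstacle.} The only delicate point is the lower bound $s_t\ge I^*$ in Step 3, i.e. ruling out the iteration ``overshooting'' below $I^*$; this is exactly where the monotonicity of $g$ and the fact that $I^*$ is itself a fixed point of the size recursion are used. The remaining ingredients — the normal-symmetry reduction in Step 1, the lower-set structure in Step 2, and the handling of edge cases (empty selected sets, and the measure-zero event of tied p-values) — are routine.
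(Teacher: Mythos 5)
Your proposal is correct and follows essentially the same route as the paper's proof: rewrite $U_i^t\le 0$ as $P_i\le q|\mathcal{S}^t|/\No$, get finite convergence from the contracting iteration, prove $|\mathcal{S}^t|\ge I^*$ by induction using that the $I^*$ smallest p-values lie below the cutoff $qI^*/\No$, and get $|\mathcal{S}^T|\le I^*$ from the fixed-point condition together with the maximality of $I^*$. Your extra scaffolding (the lower-set structure and the scalar recursion $s_{t+1}=\min\{s_t,g(s_t)\}$) is just a repackaging of the same two-sided bound, made necessary because you restrict the update to $i\in\mathcal{S}^t$ whereas the paper applies the threshold to all indices.
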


\if\showComment1
\qz{Selective inference is important. What does selective inference broadly mean? What are some common kinds of selective inference
problems? Two types of decisions: (1) selection (FWER, FDR, etc.); (2)
More general inference after selection (conditional post-selection;
PoSI, debiased-LASSO, FCR). (3) Third type (not much work): both the selection and
post-selection inference are of interest.}
\fi

\if\showComment1
\wh{(1) and (2) can be seen as special cases of (3). In the third type, sometimes the selection may depend on the inference results and the inference results also depend on the selection.
 Then it is natural to consider  iterating the selection procedure  and the inference procedure to simultaneously perform selection and post-selection inference. When the post-selection inference is in fact the same as selection, (3) degenerates to (1); When the selection doesn't depend on the inference results, that is, we don't need to change our selection after inference, (3) degenerates to (2).  (The aim of third type is to control inference risk (multiple inference risks in \Cref{sec:FDR.curve}) and make the selection satisfy some constraints/properties )}
\zg{Perhaps move \Cref{fig:iterative} here.}
\fi

\if\showComment1
\qz{What we focus on: selective risk control. In
particular, select a subset of statistical task using the data, and the selective
risk we use is the expected average loss for the decisions on the
selected tasks.}
\fi

\if\showComment1
Next \zg{Perhaps add this in \Cref{sec:illustration}}: Start with a ``surprising'' re-formulation of Benjamini-Hochberg, a
multiple testing procedure that controls FDR, via \cite{benjamini2005false}, a post-selection inference procedure that
controls FCR.
\fi

\subsection{Related work, contributions, and organization}
\label{sec:overview-article}

Motivated by the observation about the BH procedure in \eqref{eq:bh-by-iteration}, the rest of this article develops a constructive approach to selective risk control in compound decision problems. 
Before outlining that approach, let us first briefly review some related work because similar ideas have been discussed in the multiple testing literature for a long time. 

As mentioned earlier, \textcite[section 6.1]{benjamini2005false} used the fact that the output of BH is a fixed point of BY to prove the FDR controlling property of BH. \textcite{weinstein2020selective} consider combining the BY procedure with a selection criterion to construct selective confidence intervals that meet specific criteria, such as not-covering-zero.
In another
closely related work, \textcite{blanchard2008two} provided a set of
two sufficient conditions for FDR control; among them is a fixed point
condition termed ``self-consistency''. \textcite{madar16_fastl} showed
that the BH procedure has a fast implementation through a basic iteration, but they did not seem to realize that
basic iteration is precisely the BY procedure. \textcite{barberPfilterMultilayerFalse2017}
proposed a BH-like threshold that can control multiple FDRs
simultaneously and described an iterative algorithm that is very much like the coordinate-wise (each FDR is considered as a coordinate) generalization of the fixed point iteration
described above. 

However, we could not find any previous work that
explicitly points out that \emph{BH is precisely the fixed-point
iteration of BY}. In informal discussion, it was remarked to us that BH and BY have a ``dual'' relationship, but that description is not accurate. Hypothesis testing (the task of BH) and constructing confidence intervals (the task of BY) have a duality in classical statistical theory \parencite{neyman1937outline}, but the precise connection between BH and BY is the fixed-point iteration in \eqref{eq:bh-by-iteration}. 

The main purpose of this article is to extend this fixed-point iteration perspective of BH to a much wider class of problems.
More specifically, we make the following contributions in the rest of this article:
\begin{enumerate}
    \item We extend the above reformulation of the BH procedure to general compound decision problems. This general perspective allows us to unify many existing procedures in multiple testing and post-selection inference in a single framework.
    
    \item We use this framework to develop new procedures that control multiple selective risks at the same time. Through the lens of this new framework, we find a surprising "free lunch theorem" that shows the original BH procedure not only controls the FDR for the standard null hypotheses $H_i: \theta_i \geq 0$ but also the FDR for $H_i: \theta_i \geq c$ for a range of $c$ at other locations (albeit at different levels).
    
    \item The iterative perspective can be used to accelerate the BH procedure. The acceleration is particularly useful when each p-value is computationally expensive. 
    We show that if each p-value is obtained by a permutation test, a total number of $O(\No \log^2\No)$ ($\No$ is the number of hypothesis) permutations would suffice to provide accurate approximations to the BH procedure with exact permutation p-values. 
\end{enumerate}

In \Cref{sec:constr-appr-post} below, we describe the formal compound decision
framework and three types of risks: the non-selective or pre-selection
risk (e.g.\ type I error), the post-selection risk (e.g.\ FCR), and
the extra-selection risk (e.g.\ FDR). These new concepts will be introduced and distinguished in the context of a game between a selection agent and a decision agent that generalizes the observation in \Cref{sec:reform-benj-hochb}.
We then show how a procedure that controls the extra-selection risk
can be constructed by iterating a procedure that controls the post-selection risk. 
In \Cref{sec:multiple-risks}, we use this constructive approach to develop
new methods that control multiple (or even an infinite number of) selective risks at
the same time. 
This requires an extension of the game described in
\Cref{sec:constr-appr-post} to multiple selection and decision agents. 
In \Cref{sec:accelerate.permutation}, we discuss how the  observation in \Cref{sec:reform-benj-hochb} can
be used to accelerate the BH procedure when the p-values are obtained from expensive permutation tests.
In \Cref{sec:discussion}, we will conclude the article with some discussion on future research directions. Technical proofs can be found in Supplementary Materials. Replication code for the real data analyses in \Cref{sec:protein-data,sec:cancer.data} can be found at \url{https://github.com/ZijunGao/A-constructive-approach-to-selective-risk-control-in-compound-decision-problems}. We provide an R Shiny App that computes the improved FDR curve in \Cref{sec:FDR.curve} (\url{https://zijungao.shinyapps.io/FDRCurve/}).

\section{A constructive approach to selective risk control}
\label{sec:constr-appr-post}
In this section,  we first introduce three selective risks: \emph{pre-selection
  risk}, \emph{post-selection risk} and \emph{extra-selection risk} in a decision theoretical framework in \Cref{sec:non-pre-post}. Then we give a constructive approach to selective risk control by the following two steps:
\begin{enumerate}
\item use individual decision rules that control the pre-selection
  risk to design a method that controls the post-selection risk (``pre
  to post''), details in  \Cref{sec:pre-post} ;
\item use fixed point iteration of the post-selection method from the previous step to control the extra-selection risk  (``post to extra''), details in \Cref{sec:post-extra}.
\end{enumerate}

\subsection{Selective risks in compound decision problems}
\label{sec:non-pre-post}

We first describe a compound decision setting that does not involve
adaptive selection using the data. Suppose we have observed some data
$\bm X$ taking values in $\calX$ from a distribution $P_\theta$, where $\theta \in \Theta$ is an unknown parameter.
Suppose further that  we have $m$ statistical tasks.
For task $i$, a decision rule $d_i: \calX \rightarrow
\calD_i$ is defined as a procedure that takes in the data $\bm X$ and
returns a decision $D_i = d_i(\bm X)$ in a decision space $\calD_i$. Each decision is evaluated
based on a loss function $\ell_i: \calD_i \times \Theta \rightarrow
[0,1]$. 
In this paper we assume the loss is
bounded between $0$ and $1$.
The \emph{risk} of a decision procedure $d_i$ is
defined as
\begin{align}\label{eq:non.selective.risk.individual}
  r_i(d_i,\theta) := \EE_{\theta}\left[\ell_i(d_{i}(\bm X),  \theta)\right],
\end{align}
where $\EE_{\theta}$ denotes  the expectation with respect to $\bm X
\sim P_{\theta}$.

Often, we are interested in controlling the risk at
some level. We say a collection of decision rules
$\{d_i(\cdot;q): q \in [0,1]\}$ controls the $\ell_i$-risk
if its risk is  bounded above by the target level
$q$ under any distribution $P_\theta$, that is,
\begin{align}\label{eq:control}
    \sup_{\theta \in \Theta} r_i(d_i(\cdot; q), \theta)
    \le q, \quad \forall~ q \in [0,1].
\end{align}

\begin{example}[Multiple testing] \label{exam:fwer}
  In multiple testing, the $i$-th statistical task is to test a null hypothesis $H_i:\theta \in \Theta_i$, and the decision space $\calD_i$ is $\{0,1\}$, where $D_i = 1$ means the null hypothesis $H_i$ is rejected. The loss function is given by
  \begin{align*}
    \ell_i(d_i,\theta) =
    \begin{cases}
    1, & \text{if}~\theta \in \Theta_i~\text{and}~d_i = 1,\\
    0, & \text{otherwise}.
    \end{cases}
  \end{align*}
  In the simplest setting, the $i$-th test is done by using a single
  p-value $P_i$. The definition that the decision
  procedure $d_i$ controls the (non-selective) risk essentially means
  the p-value $P_i$ is valid in the sense that $\PP_{\theta}(P_i \leq
  q) \leq q$ for all $\theta \in \Theta_i$ and $q \in
  [0,1]$. In simultaneous hypothesis testing, a commonly-used measure
  of risk is the family-wise error
  rate $\text{FWER}=\EE_\theta [\max_{i} \ell_i(d_i,\theta)]$.
\end{example}

\begin{table}[tbp]
    \centering
    \begin{tabular}{p{5cm}|l}
    \toprule
    Notation & Description\\
    \midrule
    $\bm X \in \calX$         & data \\
    $\theta \in \Theta$     & parameter of the distribution of $\bm X$  \\ \midrule
    $d_i: \calX \to \calD_i$     & decision rule of task $i$  \\
    $D_i = d_i(\bm X)$     & decision of task $i$  \\
    $\tilde{d}_i: \{0,1\}^m \times \mathcal{X} \to \mathcal{D}_i$     & decision strategy of task $i$ \\ \midrule
    $s_i: \calX \to \{0,1\}$ or & selection rule or extended selection rule    \\
    $s_i: \{0,1\}^{\No} \times \calX \to \{0,1\}$ & \\
    $S_i = s_i(\bm X)$ or $s_i(\bm w, \bm X)$ & selection indicator: $S_i = 1$ means task $i$ is selected \\ 
    $\tilde{s}_i: \mathcal{D} \times \mathcal{X} \to \{0,1\}$ & selection strategy \\
    $\bm S = (S_i)$  & selection indicator vector \\
    $\calS \in 2^{[\No]}$     & set of selected tasks \\
    $|\calS|$ or $|\bm S|$     & number of selected tasks\\     $\bm S \preceq \bm S'$ & $S_i \le S_i'$, $i \in [\No]$ \\ \midrule
    $\ell_i(D_i, \theta)$     & loss of task $i$ associated with the decision $D_i$ \\
    $r(\bm d, \bm s, \theta)$     & risk associated with decision rule $\bm d$ and selection rule $\bm s$ \\
    $r^{\mathrm{pre/post/extra}}(\bm d, \theta)$    & pre/post/extra-selection risk of decision rule $\bm d$ \\        \bottomrule
    \end{tabular}
    \caption{Notation in this article. Decision rule/strategy share the same domain and codomains as the selection rule/strategy and are denoted using $d$ instead of $s$.}
    \label{tab:notations}
\end{table}

In this paper, we are interested in the problem of selective
inference, where one may use the data to select which decisions
could or should be made. In other words, the risk is a function of
both the decision rule and a selection rule $\bm s: \calX \to
\{0,1\}^m$. A selective inference problem is then characterized by how
the risk takes the selection rule $\bm s$ into account. In this paper,
we consider the following risk function that averages the losses among
the selected tasks:
\begin{equation}
  \label{eq:selective.risk}
  r(\bm d, \bm s, \theta) := \EE_{\theta}\left[ \frac{\sum_{i \in \mathcal{S}}
    \ell_i(D_i, \theta)}{1 \vee |\mathcal{S}|} \right] = \EE_{\theta}\left[
  \frac{\sum_{i=1}^m s_i(\bm X) \ell_i(d_i(\bm X),\theta)}{1 \vee \sum_{i=1}^m
    s_i(\bm X)} \right],
\end{equation}
where $\bm D = \bm d(\bm X) \in \calD := \calD_1 \times \dots \times
\calD_m $, $\bm S = \bm s(\bm X) = (s_1(\bm X), \ldots, s_m(\bm X))$, and $\mathcal{S} = \{i
\in [m]: S_i := s_i(\bm X) = 1\}$ is the set of selected tasks. This does not
include the FWER
in \Cref{exam:fwer} because FWER takes the maximum instead of the
average of the incurred losses.

When no data-adpative selection is made, that is $\mathcal{S} = [m]$, we say $r(\bm d,
\bm s, \theta)$ in \eqref{eq:selective.risk} is
a \emph{pre-selection} risk and denote it as
$r^{\mathrm{pre}}$. Controlling the
pre-selection risk is straightforward if the
individual decision rules control the corresponding risks. That is,
if \eqref{eq:control} is true for all $i \in [m]$, making the
same decisions will control the pre-selection risk because
\[
  r^{\mathrm{pre}}(\bm d, \theta) = r(\bm d, \bm s, \theta) =
  \frac{1}{m}
\sum_{i=1}^m r(d_i(\cdot; q), \theta) \leq
\frac{1}{m} \sum_{i=1}^m q = q.
\]

Without any detail about the selection rule $\bm s$ and the
decision rule $\bm d$, not much can be said about the selection risk
other than it can be much larger than the nominal level if no
adjustment is made. As an example, consider the multiple
testing problem in \Cref{exam:fwer} with decision rule
$d_i = \1_{\{P_i \leq q\}}$, which controls the non-selective
risk at $q$ if $P_i$ is a valid p-value for the $i$-th null
hypothesis. However, if we average the loss among
the rejected hypotheses, that is, if we let $\mathcal{S} = \{i \in
[m]: D_i = 1\}$ (so $\bm s = \bm d$), then
\[
  r(\bm d, \bm s, \theta) = \EE_{\theta} \left[
    \frac{|\mathcal{S}|}{|\mathcal{S}|
      \vee 1} \right] = \PP_{\theta}(|\mathcal{S}| \geq
  1)
\]
when all the null hypotheses are true (so the numerator is
$|\mathcal{S}|$). This converges to $1$ if the
individual p-values are independent and the number of hypotheses
converges to infinity.

We further distinguish selective inference problems by the order of
selection and decision (note that they are not necessarily
ordered in general). One type of selective inference problems assume that the
decision is made after the selection; we will call this the
``post-selection'' problem. Another type of problems assume
that the selection is made after the decision (or the decision itself
is about selection); we will call this the ``extra-selection''
problem. To understand the different between post- and extra-selection problems,
consider a game betwen two agents called Ganon and Link
(\Cref{fig:one-iteration}). The first agent, Ganon, selects a set
of tasks $\mathcal{S}^1$, and the second agent, Link, uses the data to
perform those tasks and make decisions $\bm D^1$. After receiving
Link's decision, Ganon may further select a subset $\mathcal{S}^2
\subseteq \mathcal{S}^1$. In each step, both agents are allowed to use
the data $\bm X$. Two notions of selection risks can be defined according
to which selected set of tasks is used.

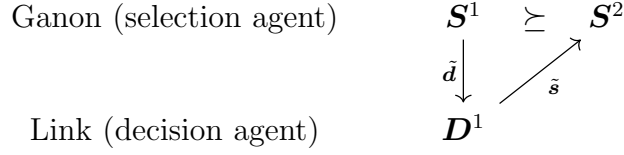
\begin{figure}[t] \centering
\begin{tikzcd}
  \text{Ganon (selection agent)} &
  \bm S^1 \arrow[d, "\tilde{\bm d}"'] \arrow[r,phantom, sloped, "\succeq"] & \bm S^2
  \\
  \text{Link (decision agent)} & \bm D^1 \arrow[ur, "\tilde{\bm s}"'] &
\end{tikzcd}
\caption{An illustration of the post-selection and extra-selection
  inference problems.}
\label{fig:one-iteration}
\end{figure}

Formally, suppose Ganon initially uses a
selection rule $\bm s^1:\mathcal{X} \to \{0,1\}^m$, Link uses a decision strategy $\tilde{\bm d}:\{0,1\}^m \times \mathcal{X} \to \mathcal{D}$,
and after observing Link's decision, Ganon applies a selection strategy
$\tilde{\bm s}: \mathcal{D} \times \mathcal{X} \to \{0,1\}^m$. Let
\[
  \bm S^1 = \bm s^1(\bm X),~\bm D^1 = \tilde{\bm d} \circ \bm s^1(\bm X)
  := \tilde{\bm d}(\bm S^1, \bm X), ~\text{and}~\bm S^2 = \tilde{\bm s} \circ
  \tilde{\bm d} \circ \bm s^1(\bm X) := \bm s(\bm D^1, \bm X), \\
\]
where in using the composition notation $\circ$ we omit the dependence
of $\tilde{\bm d}$ and $\tilde{\bm s}$ on $\bm X$. Let
$\mathcal{S}^1$ and $\mathcal{S}^2$ be
the selected sets associated with $\bm S^1$ and $\bm S^2$,
respectively, and let $\bm d^1 = \tilde{\bm d} \circ \bm s^1$. In this
context, there are two notions of
selective
risk for the decision $\bm D^1 = \bm d^1(\bm X)$: the \emph{post-selection} risk
averages the losses over the first selected set of tasks:
\[
  r^{\text{post}}(\bm d^1, \theta) = r(\bm d^1,
  \bm s^1,
  \theta) = \EE_{\theta}\left[ \frac{\sum_{i \in \mathcal{S}^1}
    \ell_i(D^1_i, \theta)}{1 \vee |\mathcal{S}^1|} \right],
\]
and the \emph{extra-selection} risk averages over the second selected set:
\[
  r^{\text{extra}}(\bm d^1, \theta) = r(\bm d^1, \tilde{\bm s}
  \circ \bm d^1, \theta) = \EE_{\theta}\left[ \frac{\sum_{i \in \mathcal{S}^2}
    \ell_i(D^1_i, \theta)}{1 \vee |\mathcal{S}^2|} \right].
\]

\begin{example}[FCR] \label{exam:FCR}
The FCR in \eqref{eq:fcr} 
is a kind of post-selection risk. Each task is about an unknown
parameter $\theta_i$. The decision space contains all intervals
$\mathcal{D}_i = \{d_i = (a_i, b_i): -\infty \le a_i \le b_i \le \infty\}$,
and the loss function is $\ell(d_i,\theta_i) = \1_{\{\theta_i \notin
  (a_i,  b_i)\}}$. The method of \textcite{benjamini2005false}
outputs a confidence interval for each selected parameter $\theta_i$,
$i \in \calS^1$ with confidence level $q|\calS^1|/\No$.
\end{example}

\begin{example}[FDR] \label{exam:FDR}
The FDR in \eqref{eq:fdr} is a kind of extra-selection risk.
Following the observation in \Cref{prop:fcr-to-fdr} and the notation
in \Cref{exam:fwer} where $\bm X = (P_1, \ldots, P_m)$, the BH procedure can be
formulated as \Cref{algo:extra.selective.risk} in \Cref{sec:constr-appr-post} with decision
strategy $\tilde{d}_i(\bm S, \bm X) = \1_{\{P_i \le q |\bm S|/m\}}$,
which corresponds to using the adjusted confidence level in
\Cref{exam:FCR}, and the ``identity'' selection strategy
$\tilde{s}_i(\bm D, \bm X) = D_i$.
\end{example}
By definition, $r^{\mathrm{post}}$
reduces to $r^{\mathrm{pre}}$ if $\mathcal{S}^1 = [m]$ (with
probability 1), i.e.\
Ganon makes no initial selection using the data.
Similarly, $r^{\mathrm{extra}}$ reduces to $r^{\mathrm{post}}$ if
$\mathcal{S}^{2} = \mathcal{S}^1$ (with
probability 1), i.e.\ Ganon makes no further
selection after observing Link's decision.

\subsection{From post-selection risk control to extra-selection risk
  control}
\label{sec:post-extra}

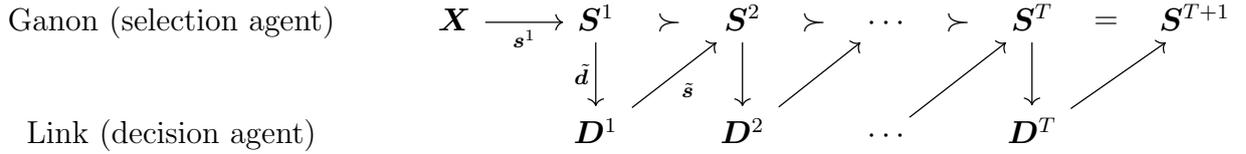
\begin{figure}[t] \centering
\begin{tikzcd}
  \text{Ganon (selection agent)} &
  \bm X
  \arrow[r, sloped, "{\bm s}^{1}"']
  &
  \bm S^1 \arrow[d,"\tilde{\bm d}"']  \arrow[r,phantom, sloped, "\succ"] &
  \bm S^2 \arrow[d]
  \arrow[r,phantom, sloped, "\succ"] & \cdots \arrow[r,phantom, sloped, "\succ"] &
  \bm S^{T} \arrow[d] \arrow[r,phantom, sloped, "="] & \bm S^{T+1}
  \\
  \text{Link (decision agent)} &
  &
  \bm D^1 \arrow[ur,"\tilde{\bm s}"'] & \bm D^2 \arrow[ur] & \cdots
  \arrow[ur] & \bm D^{T} \arrow[ur] 
  & 
\end{tikzcd}
\caption{An illustration of the iterative algorithm for
  extra-selection risk control. 
  The dependence of $\bm D^t$ and $\bm S^{t+1}$ on $\bm X$ for $t \geq
  1$ is omitted in this figure.
}
\label{fig:iterative}
\end{figure}

Consider the game between Ganon (selection agent) and Link (decision
agent) described in \Cref{sec:non-pre-post}. To generalize the
observation that the BH procedure is the iterative limit of the BY
procedure, 
we let this game continue until convergence
(\Cref{fig:iterative}). Specifically, suppose Ganon
starts this game by selecting a subset of tasks using a selection rule
$\bm s^{1}$, so $\bm S^1 = \bm s^1(\bm X)$.
In iteration $t \geq 1$, Link applies a decision strategy $\tilde{\bm d}:
\{0,1\}^m \times \calX
\to \calD$ and obtains a decision $\bm D^t = \tilde{\bm d}(\bm S^t, \bm X)$. Next,
Ganon applies a selection strategy $\tilde{\bm s}: \calD \times
\calX \to \{0,1\}^m$ and obtains a new set of tasks $\mathcal{S}^{t+1} = \{i \in
[m]: S^{t+1}_i = 1\}$ where $\bm S^{t+1} = \tilde{\bm s}(\bm D^t, \bm X)$. It is
required that Ganon can never bring back a task that he has already deselected, that is, $\mathcal{S}^{t+1} \subseteq \mathcal{S}^t$ (with
probability 1).
This game is completed the first time Ganon does not change his
selected set:
\[
  T = \inf \{t \geq 1: \calS^{t-1} = \calS^{t}\} - 1.
\]

The next result shows that taking the decision $\bm D^T$ controls the
extra-selection risk as long as the post-selection risk is controlled
at each step. To state the result, we introduce the following
notation for the cumulative selection and decision rules:
\[
  \bm s^t 
  =
  (\tilde{\bm s} \circ \tilde{\bm d})^{t-1} \circ \bm s^1,~t \geq
  2~\text{and}~
\bm d^t = 
  \tilde{\bm d} \circ \bm
                                           s^t,~t \geq 1,
\]
where in using the composition notation $\circ$, we omit the dependence
of the selection strategy $\tilde{\bm s}$ and decision strategy $\tilde{\bm d}$ on the
data. It follows that $\bm S^t = \bm s^t(\bm X)$ and $\bm D^t = \bm d^t(\bm X)$.

\begin{algorithm}[t]\caption{Extra-selection risk control}\label{algo:extra.selective.risk}
    \begin{algorithmic}
      \STATE \textbf{Input}: data $\bm X$, 
      first selection rule $\bm s^1:\calX \to \{0,1\}^m$, selection
      strategy
      $\tilde{\bm s}: \calD \times \calX \to \{0,1\}^m$, decision
      strategy $\tilde{\bm d}: \{0,1\}^m \times \calX \to \calD$.
        \STATE $t \leftarrow 1$ \\
         \STATE $\bm S^1 \leftarrow \bm s^1(\bm X)$ \\
        \REPEAT
        \STATE $\bm D^t \leftarrow \tilde{\bm d}(\bm S^t, \bm X)$ \\
        \STATE $\bm S^{t+1} \leftarrow \tilde{\bm s}(\bm D^t, \bm X)$
        \\
        \STATE $t \leftarrow t+1$ \\
        \UNTIL $\bm S^{t} = \bm S^{t-1}$
      \STATE $T \leftarrow t - 1$
      \STATE \textbf{Output}: $\bm D^T$, $\calS^{T+1}$.
    \end{algorithmic}
\end{algorithm}

\begin{theorem}\label{theo:extra.selective.risk}
  In the setting described above, consider the decision rule $\bm d^T$
  and decision rule $\bm s^{T+1}$ as defined in
  \Cref{algo:extra.selective.risk}. Suppose that 
  \begin{enumerate}
  \item [(i)] the selection rule $\tilde{\bm s} \circ \tilde{\bm d}$ is
    contracting in the sense that
    \begin{equation}
      \label{eq:contracting.selection}
      \tilde{\bm s} \circ \tilde{\bm d}(\bm S) := \tilde{\bm s}
      (\tilde{\bm d}(\bm S, \bm X), \bm X) \preceq \bm
      S~\text{with probability 1 for all}~\bm S \in \{0,1\}^m;
    \end{equation}
  \item [(ii)] the decision rule $\bm d^t$ controls the post-selection risk at level $q$ for all $t$, that is,
    \begin{equation}
      \label{eq:post.control.t}
    r^{\mathrm{post}}(\bm d^t,\theta) = r(\bm d^t, \bm
      s^t,\theta) \leq q,~\forall \theta \in \Theta,t \geq 0.
    \end{equation}
  \end{enumerate}
  Then the decision rule $\bm d^T$ controls the extra-selection risk
  at level $q$:
  \[
    r^{\mathrm{extra}}(\bm d^T,\theta) = r(\bm d^T, \bm
      s^{T+1},\theta) \leq q,~\forall \theta \in \Theta, t \geq 0.
  \]
\end{theorem}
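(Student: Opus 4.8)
The plan is to reduce the extra-selection risk of $\bm d^T$ --- which is an expectation involving the \emph{random} stopping index $T$ --- to the post-selection risk of $\bm d^N$ for a single \emph{deterministic} index $N$ that almost surely upper bounds $T$; the latter is controlled outright by hypothesis (ii). The randomness of $T$ is the only thing one must handle with care, since one is not licensed to apply post-selection control ``at time $T$'' directly.

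First I would establish that the iteration terminates by a deterministic time. By the contracting hypothesis (i), $\mathcal{S}^{t+1} \subseteq \mathcal{S}^t$ with probability one for all $t \geq 1$, so $(|\mathcal{S}^t|)_{t\geq 1}$ is non-increasing; and since $\mathcal{S}^{t+1} \subseteq \mathcal{S}^t$, one has $|\mathcal{S}^{t+1}| = |\mathcal{S}^t|$ if and only if $\mathcal{S}^{t+1} = \mathcal{S}^t$. Thus the nonnegative-integer sequence $(|\mathcal{S}^t|)$ strictly decreases until it first repeats, and as $|\mathcal{S}^1| \leq m$ this happens no later than step $N := m+1$; hence $T \leq N$ almost surely (which in particular gives the finite-time convergence asserted in the statement). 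Next, because $\tilde{\bm d}$ and $\tilde{\bm s}$ are fixed maps of $\bm X$, the identity $\bm S^{T+1} = \bm S^T$ propagates: $\bm D^{T+1} = \tilde{\bm d}(\bm S^{T+1},\bm X) = \tilde{\bm d}(\bm S^T,\bm X) = \bm D^T$, hence $\bm S^{T+2} = \tilde{\bm s}(\bm D^{T+1},\bm X) = \bm S^{T+1}$, and inductively $\bm S^u = \bm S^T = \bm S^{T+1}$ and $\bm D^u = \bm D^T$ for every $u \geq T$, almost surely.

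With $N = m+1$ fixed as above, $N \geq T$ almost surely, so the previous paragraph gives $\bm D^N = \bm D^T$ and $\mathcal{S}^N = \mathcal{S}^{T+1}$ almost surely; that is, $\bm d^N$ and $\bm d^T$ (resp.\ $\bm s^N$ and $\bm s^{T+1}$) agree as data-dependent rules up to a null set. Consequently
\[
  r^{\mathrm{post}}(\bm d^N,\theta) = \EE_\theta\!\left[\frac{\sum_{i \in \mathcal{S}^N}\ell_i(D_i^N,\theta)}{1 \vee |\mathcal{S}^N|}\right] = \EE_\theta\!\left[\frac{\sum_{i \in \mathcal{S}^{T+1}}\ell_i(D_i^T,\theta)}{1 \vee |\mathcal{S}^{T+1}|}\right] = r^{\mathrm{extra}}(\bm d^T,\theta).
\]
Now apply hypothesis (ii) at $t = N$: $r^{\mathrm{extra}}(\bm d^T,\theta) = r^{\mathrm{post}}(\bm d^N,\theta) \leq q$ for all $\theta \in \Theta$, which is the claim.

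I expect no essential obstacle: the argument is entirely about converting the random stopping time into a deterministic horizon, and once the steps above are in place the conclusion is immediate from (ii). The only points needing a little care are the measurability of the iterates (routine, since each is a finite composition of measurable maps) and trivial edge cases such as $\mathcal{S}^1 = \emptyset$, where the game stops at $T = 1$ and the bound $T \leq N$ holds vacuously.
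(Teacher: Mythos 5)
Your proof is correct and follows essentially the same route as the paper's: both arguments use the contracting property to show the iteration stabilizes by the deterministic horizon $m+1$, identify the extra-selection risk of $\bm d^T$ with the post-selection risk at that deterministic index, and then invoke hypothesis (ii) there to sidestep the randomness of $T$. Your write-up is somewhat more explicit about the propagation $\bm S^u = \bm S^T$ for $u \geq T$, but the substance is identical.
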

\begin{proof}
  Because the selection is contracting, it is easy to see that $T \leq
  m + 1$ (with probability 1) and $\bm d^T = \bm d^{m+1} = \bm d^{m+2} =
  \cdots$. So the decision rule $\bm d^T$ is
  a well defined map from $\calX$ to $\calD$. By definition of
  $T$, we have $\bm S^T = \bm S^{T+1}$. Therefore,
  \[
    r^{\mathrm{extra}}(\bm d^T,\theta) =
    \EE_{\theta}\left[ \frac{\sum_{i \in \mathcal{S}^{T+1}}
        \ell_i(D^T_i, \theta)}{1 \vee |\mathcal{S}^{T+1}|} \right] =
    \EE_{\theta}\left[ \frac{\sum_{i \in \mathcal{S}^{T}}
        \ell_i(D^T_i, \theta)}{1 \vee |\mathcal{S}^{T}|} \right] =
    r^{\mathrm{post}}(\bm d^T,\theta).
  \]
  Because $T$ is random, we cannot directly use the condition (\textit{ii})
  to claim that the post-selection risk of $\bm d^T$ is less than
  $q$. However, because $\bm d^T = \bm d^{m+1}$, we have
  $r^{\mathrm{post}}(\bm d^T,\theta) = r^{\mathrm{post}}(\bm
  d^{m+1},\theta) \leq q$ as desired.
\end{proof}

In the scenario of multiple testing, \Cref{algo:extra.selective.risk} is a
constructive way to produce ``self-consistent'' selection
\parencite{blanchard2008two}. Relatedly,
\textcite{benjamini2014selective} suggested a method that can control
post-selection risk in the context of multiple testing with group
structures. They claimed without proof that one can define the
selection rule as selecting ``the largest number of families so that all
selected families will have a rejection when applying the selective
inference adjustment''
\parencite[Remark 5]{benjamini2014selective}. Since this requires the
selection rule to use post-selection information, this definition appears somewhat circular; at least it is not immediately clear how such a selection rule can be computed. In fact, this directly motivates
the iterative approach presented in this article. See
\Cref{exam:FDR.FWE} below for further discussion. 

\if\showComment1
Remark: algorithm may be understood as a fixed-point
iteration. increasing selection is also fine. If non-monotone
selection is allowed or the extended selection/decision rules are
allowed to change over the iteration, a more advanced argument (e.g.\
using martingales) may be needed.

Remark: Unifies many procedures in the literature. Relation to
self-consistency and dependency control conditions.
\zg{
    In the scenario of multiple testing, the procedure can be viewed as a constructive way to produce self-consistent selection \cite{blanchard2008two}, and a concrete recipe to ``select the largest number of families (tasks) so that all selected families will have a rejection when applying the selective inference adjustment at level $q$'' in \cite{benjamini2014selective}.
    In \cite{barber2017pfilter}, an iterative algorithm is employed to determine the set of hypotheses ensuring individual and group FDR control respecting a consistency constraint.
}
\fi

\subsection{From pre-selection risk control to post-selection risk control}
\label{sec:pre-post}

\if\showComment1
\qz{How about calling $s: \calX \to \{0,1\}^m$ a selection rule,
  $\tilde{s}: \calD \times \calX \to \{0,1\}^m$ a selection
  strategy, and $s: \{0,1\}^m \times \calX \to \{0,1\}^m$ an extended
  selection rule?}

\zg{Definitions: PosT-Selecion-Decision: PTSD
\begin{itemize}
    \item $\bm s$: selection rule
    \item $\bm s^t = \tilde {\bm s} \circ \tilde {\bm d} \circ \bm s^{t-1}$: selection rule
    \item $\bm S^t$: selection/selected set at step t
    \item $\tilde {\bm s}$: selection strategy
    \item $\bm d$: decision rule
    \item $\bm d^t = \tilde {\bm d} \circ \tilde {\bm s} \circ \bm d^{t-1}$: decision rule
    \item $\bm D^t$: decision at step t
    \item $\tilde {\bm d}$: decision strategy
\end{itemize}
}
\fi

Given decision rules that control non-selective risks, one may hope to
control the post-selection risk by adjusting the nominal level
according to the selection event. This is described in
\Cref{algo:post.selective.risk} which defines a decision rule $\bm
d^1$. Next we review two variants of the adjustment
function $f$ in this algorithm that have appeared in a series of papers
by Benjamini and coauthors.

\begin{algorithm}[t]\caption{General procedure for post-selection risk
    control}\label{algo:post.selective.risk}
    \begin{algorithmic}
         \STATE \textbf{Input}: data $\bm X$, risk level $q \in
         [0,1]$, first selection rule $\bm s^1: \calX \to
         \{0,1\}^m$, (non-selective) decision rules $d_i:\calX \times [0,1] \to
         \calD_i,i=1,\dots,m$, adjustment function $f: \NN \to \RR$. 
         \\
         \STATE $\bm S^1 \leftarrow \bm s^1(\bm X)$ \\
         \FOR{$i \leftarrow 1$ \TO $m$}
        \STATE $D_i^1 \leftarrow d_i(\bm X;
        q|\calS^1|/f(m))$
        \ENDFOR
      \STATE \textbf{Output}:
      $\calS^1$, $\bm D^1$.
    \end{algorithmic}
\end{algorithm}


The first procedure is based on the observation in
\textcite[Theorem 4]{benjamini2005false} that the FCR of the BY
procedure is always bounded by $q (m / |\mathcal{S}|)
\sum_{i=1}^m (1/i)$, regardless of the dependence structure
of the p-values. We extend this observation to
general loss functions and prove in the Supplementary Materials (see \Cref{prop:arbitrary dependence} there) that \Cref{algo:post.selective.risk} with the adjustment function $f(m) = m\sum_{i=1}^m (1/i)$ always controls the post-selection risk.


For tasks associated with independent data, a more lenient adjustment
is possible. In particular, we may use the adjustment function $f(m) =
m$, which corresponds
to the adjustment factor in the BY procedure \eqref{eq:BY}. To fix notation, suppose each task $i$ is associated with the data
$X_i$, where $X_1,\dots,X_m$ have no overlap and are independent. 
We also assume the non-selective decision rule  $d_i(\bm X; q)$ at the nominal level $q$ for task $i$ only depends on $X_i$, e.g.\ whether a p-value $P_i \le q$,
and with an abuse of notation we will 
denote it as $d_i(X_i; q)$.
 We denote $\bm X_{-i} =
(X_0,X_1,\dots,X_{i-1},X_{i+1},\dots,X_{\No})$, where $X_0$ contains
extra data that can be used in the selection and we assume $X_0
\independent (X_1,\dots,X_{\No})$.


A key notion here is stability of the selection rule.
For functions $A,
B, C$ of $\bm X$, we say $A$ is \emph{variationally independent} of $B$ given
$C$ and write $A \perp\!\!\!\!\perp_v B \mid C$ if the image of $A$
does not depend on the value of $B$ given the value of $C$; see
\textcite[Section
2.2]{constantinouExtendedConditionalIndependence2017} for a formal
definition. 

\begin{definition}[{Stable selection rule}]\label{defi:stable}
  Consider a deterministic selection rule $s: \calX \to \{0,1\}^m$ and let $\bm S = \bm s(\bm X)$ and the
  selected set $\calS $ be defined as in \Cref{sec:non-pre-post}. We
  say $s$ is \emph{stable}, if $\bm S_{-i} \perp\!\!\!\!\perp_v X_i \mid S_i = 1,
    \bm X_{-i}$ for all $i =1,\ldots,m$.
  For an extended selection rule $\bm s:\{0,1\}^m \times \calX \to \{0,1\}^m$, we say it is  stable
  if $\bm s(\bm w,\cdot)$ is  stable for every $\bm w \in
  \{0,1\}^{\No}$.
\end{definition}

Intuitively, stability means that data used for one decision does not affect whether the other
tasks are selected. 
The stability property first appeared in
\textcite{benjamini2005false} as a way to simplify a more general
FCR-controlling procedure, this was later called ``simple selection''
in \textcite{benjamini2014selective} and ``stable selection'' in
\textcite{bogomolov2018assessing}.
Most deterministic step-up, step-down selection rules are stable \parencite{benjamini2014selective}. Some of the following results only require weaker forms of stability and more details are provided in Supplementary Materials. 



The next result shows that with independent data for each task and a
stable selection rule, the more lenient adjustment function $f(m) = m$
is sufficient to control the post-selection risk. This result
is essentially the same as Theorem 1 of
\textcite{benjamini2014selective}, 
although we give a simplified and more general proof in the Supplementary Materials 
using the iterated law of expectation. 

\begin{theorem}[Post-selection risk control under independence and stability]\label{theo:post.selective.risk}
  Suppose $X_0$, $X_1$, \ldots, $X_m$ are independent, the selection
  rule
  $\bm s^1$ is stable, and $d_i$ only uses $X_i$ and controls the $\ell_i$-risk
  for all $i =1,\ldots, m$. Then \Cref{algo:post.selective.risk} with
  the adjustment function $f(m) = m$ controls
  the post-selection risk, that is,
\begin{align}\label{eq:risk.selected}
  r^{\mathrm{post}}(\bm d^1, \theta) = \EE_{\theta}\left[\frac{\sum_{i \in \calS^1}
  \ell_i\left(D_i^1,\theta\right)}{1 \vee
  |\calS^1|}\right] \le q,~\forall \theta \in \Theta~\text{and}~q \in [0,1].
\end{align}
\end{theorem}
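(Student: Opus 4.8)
The plan is to follow the classical Benjamini--Yekutieli-style bound, organized around the tower property (the ``iterated law of expectation'') with respect to the data $\bm X_{-i}$ used by all tasks other than $i$. First I would expand the post-selection risk by linearity over the $m$ tasks:
\[
  r^{\mathrm{post}}(\bm d^1, \theta)
  = \EE_{\theta}\!\left[ \frac{\sum_{i \in \calS^1} \ell_i(D_i^1, \theta)}{1 \vee |\calS^1|} \right]
  = \sum_{i=1}^{m} \EE_{\theta}\!\left[ \frac{S_i^1\, \ell_i\!\big(d_i(X_i;\, q|\calS^1|/m),\, \theta\big)}{1 \vee |\calS^1|} \right],
\]
where $S_i^1$ is the selection indicator for task $i$ and $D_i^1 = d_i(X_i;\, q|\calS^1|/m)$. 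It then suffices to bound each summand by $q/m$. On the event $\{S_i^1 = 1\}$ we have $|\calS^1| \ge 1$, so $1 \vee |\calS^1|$ may be replaced by $|\calS^1|$; off that event the summand vanishes because of the factor $S_i^1$, so in particular the degenerate case $|\calS^1| = 0$ causes no trouble.

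Next I would condition on $\bm X_{-i} = (X_0, X_1, \dots, X_{i-1}, X_{i+1}, \dots, X_m)$. The crucial use of stability is that, since $\bm s^1$ is deterministic, variational independence $\bm S_{-i}^1 \perp\!\!\!\!\perp_v X_i \mid S_i^1 = 1, \bm X_{-i}$ forces $\bm S_{-i}^1$ to be \emph{constant} in $X_i$ on the event $\{S_i^1 = 1\}$ once $\bm X_{-i}$ is fixed; hence $|\calS^1| = 1 + \sum_{j \ne i} S_j^1$ equals a deterministic, $\bm X_{-i}$-measurable value $\kappa = \kappa(\bm X_{-i}) \in \{1, \dots, m\}$ on that event (when $\PP_\theta(S_i^1 = 1 \mid \bm X_{-i}) = 0$ the summand contributes nothing and $\kappa$ need not be defined). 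Therefore, using that $d_i$ depends only on $X_i$, that $\kappa$ is fixed given $\bm X_{-i}$, that $X_i$ is independent of $\bm X_{-i}$, and finally the $\ell_i$-risk control hypothesis applied at the admissible level $q\kappa/m \in [0,1]$,
\begin{align*}
  \EE_{\theta}\!\left[ \frac{S_i^1\, \ell_i\!\big(d_i(X_i;\, q|\calS^1|/m),\, \theta\big)}{1 \vee |\calS^1|} \,\Big|\, \bm X_{-i} \right]
  &= \frac{1}{\kappa}\, \EE_{\theta}\!\left[ S_i^1\, \ell_i\!\big(d_i(X_i;\, q\kappa/m),\, \theta\big) \,\Big|\, \bm X_{-i} \right] \\
  &\le \frac{1}{\kappa}\, \EE_{\theta}\!\left[ \ell_i\!\big(d_i(X_i;\, q\kappa/m),\, \theta\big) \,\Big|\, \bm X_{-i} \right]
  = \frac{1}{\kappa}\, r_i\!\big(d_i(\cdot;\, q\kappa/m),\, \theta\big)
  \le \frac{1}{\kappa} \cdot \frac{q\kappa}{m} = \frac{q}{m}.
\end{align*}
Taking expectation over $\bm X_{-i}$ and summing over $i = 1, \dots, m$ yields $r^{\mathrm{post}}(\bm d^1, \theta) \le m \cdot (q/m) = q$, which is \eqref{eq:risk.selected}.

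\textbf{Main obstacle.} Everything is routine except the rigorous exploitation of stability: one must justify that, for a deterministic stable selection rule, conditioning on $\bm X_{-i}$ and restricting to $\{S_i^1 = 1\}$ genuinely pins down $|\calS^1|$ as a constant $\kappa$ in $X_i$, so that (a) $\kappa$ can be pulled outside the conditional expectation and (b) the data-dependent adjusted level $q|\calS^1|/m$ becomes a deterministic number legitimately fed into the pre-selection risk-control inequality for $d_i$. Unpacking the definition of variational independence in this deterministic setting is the only delicate point; the linearity expansion, the replacement of $1 \vee |\calS^1|$ by $|\calS^1|$ on $\{S_i^1 = 1\}$, the bound $S_i^1 \le 1$, and the identification of the conditional expectation of $\ell_i(d_i(X_i;\cdot),\theta)$ with the non-selective risk $r_i$ via independence of $X_i$ and $\bm X_{-i}$ are all standard. (A weaker, one-sided form of stability---namely that $\{S_i^1 = 1\}$ can only \emph{shrink} as $X_i$ changes, or that it suffices to condition on the event $\{|\calS^1| = \kappa, S_i^1 = 1\}$---would also work and is what the Supplementary Materials version presumably uses; the argument above is unchanged.)
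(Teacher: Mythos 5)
Your proof is correct, and it establishes the theorem as stated. The skeleton is the same as the paper's --- expand by linearity over tasks, bound each summand by $q/m$, drop the indicator $S_i^1 \le 1$, and invoke the individual $\ell_i$-risk control at the adjusted level --- but the treatment of the stability step is genuinely different. You exploit the fact that $\bm s^1$ is deterministic, so that variational independence literally pins down $|\calS^1|$ as an $\bm X_{-i}$-measurable constant $\kappa$ on the event $\{S_i^1 = 1\}$; this lets you substitute the fixed level $q\kappa/m$ into the pre-selection risk bound after conditioning on $\bm X_{-i}$. The paper instead proves the result under the strictly weaker hypothesis of \emph{weak stochastic stability}: it introduces an auxiliary variable $Z_i$ distributed as the conditional law of $|\calS^1|$ given $S_i^1 = 1$ and $\bm X_{-i}$, defines the extended count $|\calS^1|^{(-i)}$ that agrees with $|\calS^1|$ on $\{S_i^1=1\}$ and equals $Z_i$ otherwise, proves $|\calS^1|^{(-i)} \independent X_i \mid \bm X_{-i}$ (\Cref{lemm:independence}), and then runs essentially the same chain of inequalities with the random $|\calS^1|^{(-i)}$ playing the role of your constant $\kappa$. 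Your argument is more elementary and entirely sufficient for the deterministic stable case in the theorem statement; the paper's coupling buys the extension to randomized and only weakly stable selection rules, which is what allows the same proof to be reused for the permutation-test result (\Cref{prop:permutation}), where the selection depends on the randomly generated permutations. Your closing parenthetical guesses the right direction (a relaxation of stability), though the supplement's relaxation is distributional rather than the monotonicity-type condition you suggest.
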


The main new result in this section is that stability is closed under
two common operations: composition and intersection. Recall that for a
selection rule $\bm s:\calX \to \{0,1\}$ and an extended selection rule $\bm s':\{0,1\}^m \times
\calX \to \{0,1\}^m$, their composition is defined as
$\bm s'\circ \bm s (\bm X) := \bm s'(\bm s(\bm X), \bm X)$. For two selection rules $\bm s$ and $\bm s': \calX \to
\{0,1\}^m$, we define their intersection $\bm s \cap \bm s'$ as the
element-wise product of $\bm s(\bm X)$ and $\bm s'(\bm X)$, that is, a
task is selected by $\bm s \cap \bm s'$ if and only if it is selected
by both $\bm s$ and $\bm s'$.

\begin{proposition}\label{prop:composite}
For composition and intersection defined above, the following holds:
  \begin{itemize}
      \item[(1)] (Composition of stable selection rules) Let $\bm s:\calX \to \{0,1\}$ be a selection rule and $\bm s':\{0,1\}^m \times
  \calX \to \{0,1\}^m$ be an extended selection rule such that $\bm s'(\bm w; \bm X) \preceq \bm w$ for all $\bm w \in \{0, 1\}^m$. If both $\bm s$ and $\bm s'$ are stable, so is $\bm s' \circ \bm s$.
  \item[(2)]  (Intersection of stable selection
  rules) Let $\bm s$ and $\bm s':\calX \to \{0,1\}^m$ be two selection rules.
    If both $\bm s$ and $\bm s'$ are stable, so is $\bm s \cap \bm s'$.
  \end{itemize}
  
  \end{proposition}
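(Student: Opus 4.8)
The plan is to reduce everything to an elementary reformulation of stability that sidesteps working with variational independence directly. For a deterministic rule $\bm s$, write $\bm X = (X_i, \bm X_{-i})$, recalling $\bm X_{-i} = (X_0, X_1, \dots, X_{i-1}, X_{i+1}, \dots, X_m)$. I claim that $\bm S_{-i} \perp\!\!\!\!\perp_v X_i \mid S_i = 1, \bm X_{-i}$ is equivalent to: for every fixed value $\bm x_{-i}$ of $\bm X_{-i}$, the map $x_i \mapsto \bm s(x_i, \bm x_{-i})_{-i}$ is constant on the slice $A_i(\bm x_{-i}) := \{x_i : s_i(x_i, \bm x_{-i}) = 1\}$. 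Indeed, in this deterministic setting $\bm S_{-i}$ is a function of $(X_i,\bm X_{-i})$, so the set of jointly achievable values of $(\bm S_{-i}, X_i)$ given $\bm X_{-i} = \bm x_{-i}$ and $S_i = 1$ is the graph $\{(\bm s(x_i,\bm x_{-i})_{-i},\, x_i) : x_i \in A_i(\bm x_{-i})\}$, and a graph equals the product of its two projections precisely when the first coordinate is constant. I would establish this equivalence first and then phrase both parts as the assertion that a certain coordinate map is constant on a certain slice.

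Part (2) (intersection) is then immediate. Let $\bm T = \bm s \cap \bm s'$, so $T_j(\bm X) = s_j(\bm X)\, s_j'(\bm X)$ for each $j$. Fix $i$ and $\bm x_{-i}$. The slice $\{x_i : T_i = 1\}$ equals $A_i(\bm x_{-i}) \cap A_i'(\bm x_{-i})$, where $A_i, A_i'$ denote the $S_i = 1$ and $S_i' = 1$ slices. By stability of $\bm s$, the vector $\bm S_{-i}$ is constant on $A_i(\bm x_{-i})$, hence on the smaller set $A_i \cap A_i'$; similarly $\bm S_{-i}'$ is constant on $A_i \cap A_i'$ by stability of $\bm s'$. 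Therefore $\bm T_{-i}$, the entrywise product of $\bm S_{-i}$ and $\bm S_{-i}'$, is constant on $\{x_i : T_i = 1\}$, i.e.\ $\bm s \cap \bm s'$ is stable.

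For part (1) (composition), let $\bm S = \bm s(\bm X)$ and $\bm T = \bm s'(\bm S, \bm X) = (\bm s' \circ \bm s)(\bm X)$. The hypothesis $\bm s'(\bm w; \cdot) \preceq \bm w$ for all $\bm w$ gives $\bm T \preceq \bm S$, so for fixed $i$ and $\bm x_{-i}$ the slice $B_i := \{x_i : T_i = 1\}$ is contained in $A_i(\bm x_{-i}) = \{x_i : S_i = 1\}$; we may assume $B_i \neq \emptyset$. By stability of $\bm s$, the whole vector $\bm S$ is constant on $A_i(\bm x_{-i})$ (its $-i$ part by stability, its $i$-th entry being identically $1$ there); call this constant value $\bm w$. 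Then on $B_i$ we have $\bm T = \bm s'(\bm w, \bm X)$, so there $\bm T$ agrees with the ordinary (non-extended) selection rule $\bm s'(\bm w, \cdot)$, which is stable because $\bm s'$ is. Applying the reformulated stability of $\bm s'(\bm w, \cdot)$: the map $x_i \mapsto \bm s'(\bm w,(x_i,\bm x_{-i}))_{-i}$ is constant on $\{x_i : \bm s'(\bm w,(x_i,\bm x_{-i}))_i = 1\}$; and $B_i$ lies inside this last set, since on $B_i$ we have $\bm S = \bm w$ and $T_i = 1$. Hence $\bm T_{-i} = \bm s'(\bm w, \bm X)_{-i}$ is constant on $B_i$, i.e.\ $\bm s' \circ \bm s$ is stable.

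The only genuinely delicate point is this last step of part (1): before replacing the extended rule $\bm s'$ by the ordinary rule $\bm s'(\bm w, \cdot)$ and invoking its stability, one must know that $\bm S$ is pinned to a single value $\bm w$ along the slice $B_i$. This is exactly where the monotonicity hypothesis $\bm s'(\bm w; \bm X) \preceq \bm w$ enters: it forces $B_i \subseteq \{x_i : S_i = 1\}$, on which $\bm S$ is constant by stability of $\bm s$. Without monotonicity the composed rule need not inherit stability. Everything else is routine bookkeeping with slices and entrywise products, which I would not spell out in detail.
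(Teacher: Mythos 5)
Your proof is correct, and for part (1) it is essentially the paper's own argument: the paper also fixes $\bm X_{-i}$, takes two data points with the composed selection indicator equal to $1$, uses the contraction hypothesis $\bm s'(\bm w;\bm X)\preceq \bm w$ to deduce $S_i=1$ for both, invokes stability of $\bm s$ to pin the full vector $\bm S$ to a single value $\bm w$, and then applies stability of the frozen rule $\bm s'(\bm w,\cdot)$. Your ``constant on the slice'' reformulation of variational independence for deterministic rules is exactly the working form the paper uses implicitly, and you correctly isolate the one delicate point (that $\bm S$ must be pinned before $\bm s'$ can be frozen). For part (2) your route is genuinely more elementary: you argue directly that $\bm S_{-i}$ and $\bm S'_{-i}$ are each constant on the intersected slice $\{x_i: S_i=S_i'=1\}$, so their entrywise product is too. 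The paper instead proves the stronger stochastic version (allowing externally randomized, conditionally independent selection rules) by computing $\Prob(\bm s\cap\bm s'(\bm X)=\bm w\mid S_i=1,S_i'=1,\bm X)$ and showing it is free of $X_i$, then specializes back to the deterministic case. Your argument suffices for the proposition as stated (which concerns deterministic stable rules) and is shorter; the paper's buys the stochastic extension needed elsewhere in the supplement. No gaps.
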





Suppose $\tilde{\bm d}$ in \Cref{theo:extra.selective.risk} is defined as  the decision strategy in
\Cref{algo:post.selective.risk} with adjustment function $f(m) = m$, and we additionally assume
$\tilde{\bm s} \circ \tilde{\bm d}$ is stable and contracting. By the \Cref{prop:composite} (1), the composite selection rule $\bm{s}^t$ remains stable, so $\bm{d}^t$ controls the  post-selection risk by \Cref{theo:post.selective.risk}. Thus, the two conditions in \Cref{theo:extra.selective.risk} are satisfied, and \Cref{algo:extra.selective.risk} controls the extra-selection risk. 
This result is summarized in the following corollary which generalizes the FDR-controlling property of BH to compound decision problems.


\begin{corollary}[Extra-selection risk control under independence and stability]\label{coro:extra.selective.risk}
Suppose $X_0$, $X_1$, \ldots, $X_m$ are independent, the first selection rule $\bm s^1$ is stable, $d_i$ controls the $\ell_i$-risk
  for all $i =1,\ldots, m$, and $\tilde{d}_i(\bm S, \bm X)= d_i(X_i; q|\bm S|/m)$;
 suppose further
that the extended
selection rule $\tilde{\bm s} \circ \tilde{\bm d}$ is stable and contracting.
Then \Cref{algo:extra.selective.risk} controls the extra-selection risk in the sense that
\begin{align}\label{eq:extra.risk.selected}
  r^{\mathrm{extra}}(\bm d^T, \theta) = \EE_{\theta}\left[\frac{\sum_{i \in \calS^T}
  \ell_i\left(D_i^T,\theta\right)}{1 \vee
  |\calS^T|}\right] \le q,~\forall \theta \in \Theta~\text{and}~q \in [0,1].
\end{align}
\end{corollary}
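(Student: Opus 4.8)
The plan is to check the two hypotheses of \Cref{theo:extra.selective.risk} and then read off its conclusion, using \Cref{prop:composite} and \Cref{theo:post.selective.risk} as the two workhorses. Hypothesis (i) of \Cref{theo:extra.selective.risk} — that $\tilde{\bm s}\circ\tilde{\bm d}$ is contracting — is assumed outright in the corollary, so the entire argument reduces to verifying hypothesis (ii): that the cumulative decision rule $\bm d^t=\tilde{\bm d}\circ\bm s^t$ controls the post-selection risk at level $q$ for every $t\ge 1$, where $\bm s^t=(\tilde{\bm s}\circ\tilde{\bm d})^{t-1}\circ\bm s^1$.

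First I would prove, by induction on $t$, that each cumulative selection rule $\bm s^t:\calX\to\{0,1\}^m$ is stable. The base case $t=1$ is an assumption. For the inductive step, write $\bm s^{t+1}=(\tilde{\bm s}\circ\tilde{\bm d})\circ\bm s^t$: the inner rule $\bm s^t$ is stable by the inductive hypothesis, and the outer extended rule $\tilde{\bm s}\circ\tilde{\bm d}$ is stable by assumption and satisfies $(\tilde{\bm s}\circ\tilde{\bm d})(\bm w,\bm X)\preceq\bm w$ because it is contracting. \Cref{prop:composite}(1) then applies and gives that $\bm s^{t+1}$ is stable.

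Next I would identify $\bm d^t$ with the decision rule produced by \Cref{algo:post.selective.risk}. Since $\tilde d_i(\bm S,\bm X)=d_i(X_i;q|\bm S|/m)$, the rule $\bm d^t=\tilde{\bm d}\circ\bm s^t$ assigns $D_i^t=d_i(X_i;q|\calS^t|/m)$, which is exactly the output of \Cref{algo:post.selective.risk} run with first selection rule $\bm s^t$, the non-selective rules $d_i$, and adjustment function $f(m)=m$. Because $X_0,X_1,\dots,X_m$ are independent, $\bm s^t$ is stable by the previous step, and each $d_i$ uses only $X_i$ and controls the $\ell_i$-risk, \Cref{theo:post.selective.risk} yields $r^{\mathrm{post}}(\bm d^t,\theta)=r(\bm d^t,\bm s^t,\theta)\le q$ for all $\theta\in\Theta$ and $t\ge1$, which is precisely hypothesis (ii).

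With both hypotheses in hand, \Cref{theo:extra.selective.risk} gives $r^{\mathrm{extra}}(\bm d^T,\theta)=r(\bm d^T,\bm s^{T+1},\theta)\le q$, and the well-definedness of $\bm d^T$ together with finite termination of \Cref{algo:extra.selective.risk} are already part of that theorem's proof via the contracting property. By definition of the stopping time, $\bm S^T=\bm S^{T+1}$, so $\calS^{T+1}=\calS^T$; substituting into the definition of the extra-selection risk yields exactly \eqref{eq:extra.risk.selected}. The only step requiring genuine care is the inductive preservation of stability, which is where \Cref{prop:composite}(1) — and the matching of its ``$\bm s'(\bm w,\bm X)\preceq\bm w$'' hypothesis with the corollary's contracting assumption — does the real work; everything else is bookkeeping.
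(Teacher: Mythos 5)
Your proof is correct and follows essentially the same route as the paper: the paper's justification is the short paragraph preceding the corollary, which likewise invokes \Cref{prop:composite}(1) to propagate stability to the cumulative selection rules $\bm s^t$, applies \Cref{theo:post.selective.risk} to obtain post-selection risk control of each $\bm d^t$, and then concludes via \Cref{theo:extra.selective.risk}. Your explicit induction on $t$ and the observation that the contracting assumption supplies the $\bm s'(\bm w,\bm X)\preceq\bm w$ hypothesis of \Cref{prop:composite}(1) merely spell out details the paper leaves implicit.
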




\subsection{Examples}\label{sec:example}

Many existing methods in multiple testing that control FDR-like risks
can be obtained from combining \Cref{algo:extra.selective.risk}
(``post to extra'') with \Cref{algo:post.selective.risk} (``pre to
post''). We give a few examples below and include more in the supplementary file. Without further specification, we adopt $\calS^1 = [\No]$.

\begin{example}[Structural constraints]\label{exam:group.size.balance}
    Companies are increasingly turning to machine learning techniques, such as automated resume screening, semantic matching, and AI-assisted interviews, to efficiently process large pools of applicants \parencite{faliagka2012application, shehu2016adaptive}. 
    In this context, a qualified candidate is considered a true discovery, and controlling the FDR ensures that the proportion of unqualified candidates in the screened pool remains low. Moreover, promoting diversity among new hires is essential, with a focus on maintaining equitable representation across different subgroups, such as gender.

    Mathematically, we formulate this as \Cref{exam:FDR} with additional structural constraints. 
  For simple illustration, suppose each task belongs to one of two
  categories; let $X_{0i} \in \{1,2\}$ denote the category that $i$-th
  task is in. These categories could represent groups such as female and male, or different racial groups. The number of categories can be generalized to any finite number.
  For practical reasons, the rejected set
  needs to satisfy a balance constraint $n_1(\bm D)/n_2(\bm D) \in [1/\Gamma, \Gamma]$,
  where $n_j(\bm D)$ denotes the number of rejected hypotheses in $\bm
  D$ that belong to category  $j$, $j = 1, 2$, and $\Gamma > 1$ is
  some pre-specified parameter. The decision space, decision strategy, and loss function remain
  identical to those of \Cref{exam:FDR}, but we modify the selection
  strategy to incorporate the balance constraint by deselecting
  least significant p-values in the category with excessive
  rejections. 
\end{example}

\begin{example}[FDR of FWE]\label{exam:FDR.FWE}
  \textcite{benjamini2014selective} considered statistical hypotheses
  with a group structure. 
  Here, we consider a rather stringent risk---the expected
  proportion of discovered groups with any false discovery.
  Let the $i$-th statistical task, $i \in [m]$, be testing a group
  of $n_i$ null
  hypotheses, $H_{ij}:\theta \in \Theta_{ij}$, $j \in [n_i]$. Each
  task is associated with the decision space $\calD_i :=
  \{0,1\}^{n_i}$. Following the convention in \Cref{exam:fwer},
  rejecting $H_{ij}$ is denoted by $D_{ij} = 1$. Let the loss function
  for each task be the family-wise error (FWE):
  \[
    \ell_i(d_i, \theta) =
    \begin{cases}
    1, & \text{if there exists}~j \in [n_i] \text{~such that~} \theta \in
         \Theta_{ij}~\text{and}~d_{ij} = 1,\\
    0, & \text{otherwise}.
    \end{cases}
  \]
  Suppose a group is selected if one of its hypotheses is rejected,
  which corresponds to the following selection strategy
  \[
    \tilde s_i(\bm D, \bm X) = 1 - \prod_{j=1}^{n_i} (1 - D_{ij}) =
    \begin{cases}
      1, & \text{if there exists}~j \in [n_i] \text{~such that~}
           D_{ij} = 1, \\
      0, & \text{otherwise}.
    \end{cases}
  \]
  Thus, the extra-selection risk is the expectation of the average FWE
  over the selected groups, or one may say, the FDR of FWE. Let
  $\bm d_i(\cdot;q): \mathcal{X} \to \{0,1\}^{n_i}$ be any
  FWER-controlling method for task $i$ at the nominal level $q$. This can then be plugged into
  \Cref{algo:extra.selective.risk} and \Cref{algo:post.selective.risk}
  to control the FDR of FWE. This procedure iteratively applies the
  FWER-controlling method $\bm d_1,\dots,\bm d_m$ to the groups, deselects
  the groups with no rejected hypotheses, and adjusts the nominal
  level of the FWER-controlling method according to how many groups
  are left. 






\end{example}

The Supplementary Materials contains more examples that further demonstrate the flexibility of our approach to control extra-selection (FDR-like) risks.




\section{Simultaneous control of multiple selective risks}\label{sec:multiple-risks}

Next, we extend the iterative approach of selective risk control in
\Cref{sec:constr-appr-post} to the setting where we would like to
control $k$ selective risks at the same time. Specifically, for each $c
\in [\NoTask]$, suppose we have already designed  a decision strategy $\tilde{\bm d}_c$ and a selection
strategy $\tilde{\bm s}_c$
such that iterating them in \Cref{algo:extra.selective.risk} controls
the extra-selection risk defined by a loss function $\ell_c$. That is,
for some prespecified risk level $q(c), c \in [\NoTask]$, we have
\begin{align}\label{eq:multi.risk}
  r^{\mathrm{extra}}(\bm d_c,
  \theta) := \EE_{\theta}\left[\frac{\sum_{i
  \in \calS_c}
    \ell_{c,i}\left(D_{c,i},\theta\right)}{1 \vee
    |\calS_c|}\right] \le q(c)~\text{for all}~\theta \in \Theta~\text{and}~c \in [k],
\end{align}
where $\bm d_c$ is the decision rule and $\calS_c$ is the selected set
of tasks at convergence.

However,
there is no guarantee that the selected sets $\mathcal{S}_c$, $c \in
[\NoTask]$ will be the same, so the $k$ selective risks are generally
defined by averaging over different tasks. Thus, a natural question
is: is there a way to ensure ``cross-consistency'', that is the set of
selected tasks is the same for all different risks?

\begin{figure}[t] \centering
\begin{tikzpicture}[align=center,node distance=2cm]

  \node  (S1) {$\bm S^1$};
  \node at (-2, -2) (D1) {$\bm D_1^1$};
  \node at (-2,-2.5) (L1) {Link 1};
  \node at (0, -2) (D2) {$\bm D_2^1$};
  \node at (0, -2.5) (L2) {Link 2};
  \node at (2, -2) (D3) {$\bm D_3^1$};
  \node at (2, -2.5) (L3) {Link 3};

  \node at (4, -2) (S21) {$\bm S_1^2$};
  \node at (4,-2.5) (G21) {Ganon 1};
  \node at (6, -2) (S22) {$\bm S_2^2$};
  \node at (6,-2.5) (G22) {Ganon 2};
  \node at (8, -2) (S23) {$\bm S_3^2$};
  \node at (8,-2.5) (G23) {Ganon 3};
  \node at (6, 0) (S2) {$\bm S^2$};
  \node [right of=S1, xshift = 1cm] (S4) {$\succeq$};

  \draw [->] (S1) -- (D1) node[midway,left] {$\tilde{\bm d}_1$};
  \draw [->] (S1) -- (D2) node[midway,left] {$\tilde{\bm d}_2$};
  \draw [->] (S1) -- (D3) node[midway,right] {$\tilde{\bm d}_3$};
  \draw [->, bend right] (L1) to (G21);
  \node at (1,-3.8) {$\tilde{\bm s}_1$};
  \draw [->, bend right] (L2) to (G22);
  \node at (3,-3.8) {$\tilde{\bm s}_2$};
  \draw [->, bend right] (L3) to (G23);
  \node at (5.5,-3.8) {$\tilde{\bm s}_3$};
  \draw [-] (S21) -- (S2);
  \draw [-] (S22) -- (S2) node[midway,left] {$\cap$};
  \draw [-] (S23) -- (S2);

\end{tikzpicture}
\caption{An illustration of the post-selection and extra-selection
  inference problems with multiple losses via parallel intersection.}
\label{fig:multi.one-iteration}
\end{figure}
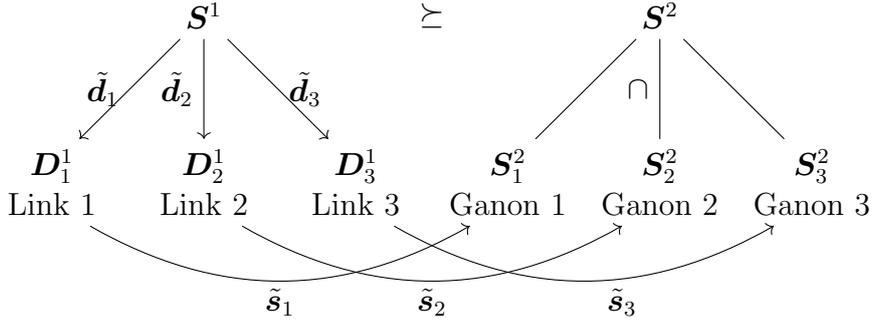

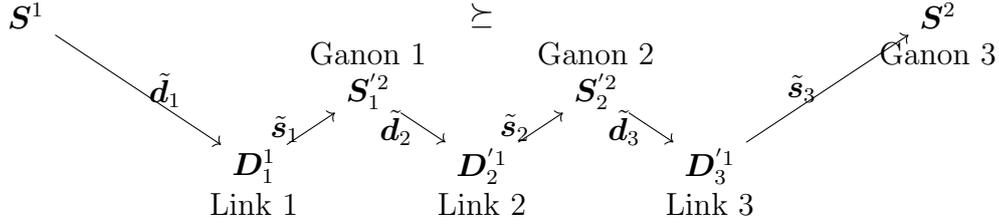
\begin{figure}[t] \centering
\begin{tikzpicture}[align=center,node distance=2cm]

  \node at (-6, 0) (S1) {$\bm S^1$};
  \node at (-3, -2) (D1) {$\bm D_1^1$};
  \node at (-1.5, -1) (S21) {$\bm S_1^{'2}$};
  \node at (0, -2) (D2) {$\bm D_2^{'1}$};
  \node at (1.5, -1) (S22) {$\bm S_2^{'2}$};
  \node at (3, -2) (D3) {$\bm D_3^{'1}$};
  \node at (6, 0) (S2) {$\bm S^2$};

  \node at (0, 0) (S4) {$\succeq$};
  \node (L1) at (-3,-2.5) {Link 1};
  \node (G21) at (-1.5,-0.5) {Ganon 1};
  \node (L2) at (0,-2.5) {Link 2};
  \node (G22) at (1.5,-0.5) {Ganon 2};
  \node (L3) at (3,-2.5) {Link 3};
  \node at (6, -0.5) {Ganon 3};

  \draw [->] (S1) -- (D1) node[midway,right] {$\tilde{\bm d}_1$};
  \draw [->] (D1) -- (S21) node[midway,left] {$\tilde{\bm s}_1$};
  \draw [->] (S21) -- (D2) node[midway,left] {$\tilde{\bm d}_2$};
  \draw [->] (D2) -- (S22) node[midway,left] {$\tilde{\bm s}_2$};
  \draw [->] (S22) -- (D3) node[midway,left] {$\tilde{\bm d}_3$};
  \draw [->] (D3) -- (S2) node[midway,left] {$\tilde{\bm s}_3$};
\end{tikzpicture}
\caption{An illustration of the post-selection and extra-selection
  inference problems with multiple losses via sequential composition.}
\label{fig:multi.one-iteration.2}
\end{figure}

\subsection{General methodology: A multi-agent game}
\label{sec:gener-meth-multi}

We show that ``cross-consistency'' can be achieved by a simple
modification to \Cref{algo:extra.selective.risk}. Intuitively, our
solution involves several decision agents, each equipped
with a decision strategy $\tilde{\bm d}_c$, and several selection
agents, each equipped with a selection strategy $\tilde{\bm s}_c$, $c
\in [k]$ in a larger game. This game can be set up in at least two
ways. In the ``parallel intersection'' setup
(\Cref{fig:multi.one-iteration}), all the decision agents
simultaneously apply their strategies to a selected set, then all
the selection agents apply their strategies to the corresponding
decisions. Finally, ``cross-consistency'' is ensured at the end of
each turn by taking the intersection of all selected
sets. Alternatively, in the ``sequential composition'' setup
(\Cref{fig:multi.one-iteration.2}), the $k$ pairs of decision and
selection agents take turns to apply their strategies, and a single
selected set is maintained. In both settings, the game is terminated
when no agent changes their decision or selection.

Next, we formally define these two procedures. For simplicity, suppose
no initial selection is made and $\mathcal{S}^1 = [m]$. The parallel
intersection procedure can be described as iterating between the
following ``aggregated'' strategies:
\begin{align}
        \tilde{\bm d}_{\text{par}}: \{0, 1\}^{\No} \times \calX &\to \prod_{c =
                         1}^{\NoTask} \calD_c,& \quad \tilde{\bm s}_{\text{par}}: \prod_{c = 1}^{\NoTask} \calD_c \times \calX \to& \{0, 1\}^{\No}, \tag*{}\\
       (\bm S', \bm X) &\mapsto \left(\tilde{\bm d}_c\left(\bm S', \bm
                         X\right)\right)_{c \in [\NoTask]},& \quad
        (\bm D_1,\dots, \bm D_{\NoTask}) \mapsto& \bigcap_{c =
                                                  1}^{\NoTask}
                                                  \tilde{\bm
                                                  s}_c\left(\bm
          D_c, \bm X \right), \label{eq:multi.aggregation}
\end{align}
See \Cref{fig:multi.one-iteration} for a demonstration. The sequential
composition procedure can be described as iterating the extended
selection rule $\bm s_{\text{seq}} = \tilde{\bm s}_k \circ
\tilde{\bm d}_k \circ \dots \circ \tilde{\bm s}_1 \circ \tilde{\bm d}_1: \{0,1\}^m
\times \calX \to \{0,1\}^m$; see \Cref{fig:multi.one-iteration.2}. The
pseudo-code for these two procedures can be found in
\Cref{app:sec:multiple-risks}.
We show in \Cref{prop:iterate.equivalence} that the two procedures generally converge to the same selected set and decisions. 

The next result extends \Cref{coro:extra.selective.risk}
(extra-selection risk control under independence and stability) to the
case of multiple risks. It almost immediately follows from the
observations that stability
is closed under composition  and intersection
(\Cref{prop:composite}). The extension to the more conservative adjustment function $f(m) = m \sum_{i=1}^m (1/i)$ (which does not require independence or stability) is straightforward.

\begin{corollary}\label{coro:multi.extra.selective.risk}
Assume the conditions in \Cref{coro:extra.selective.risk} hold for
$\bm s^1$ and all $\tilde{\bm s}_c \circ \tilde{\bm d}_c$, $c \in
[k]$. Then \Cref{algo:extra.selective.risk} with first selection
rule $\bm s^1$, decision strategy $\tilde{\bm d}_{\text{par}}$, and
selection strategy $\tilde{\bm s}_{\text{par}}$
controls all extra-selection risks with the same selected set of
tasks, that is \eqref{eq:multi.risk} is satisfied with $\mathcal{S}_c
$ being the same converged selected set for all $c$.
\end{corollary}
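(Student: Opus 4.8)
The plan is to recognize the parallel-intersection procedure as an instance of the single game analyzed in \Cref{theo:extra.selective.risk} --- namely \Cref{algo:extra.selective.risk} run with decision strategy $\tilde{\bm d}_{\text{par}}$ and selection strategy $\tilde{\bm s}_{\text{par}}$ --- and then to run the argument of that theorem once for each loss $\ell_c$. Following the notation of \Cref{sec:post-extra}, write $\bm s^1$ for the first selection rule and $\bm s^t = (\tilde{\bm s}_{\text{par}} \circ \tilde{\bm d}_{\text{par}})^{t-1} \circ \bm s^1$ for the cumulative selection rule after $t$ rounds, so $\bm S^t = \bm s^t(\bm X)$ and the $c$-th block of decisions at round $t$ is $\bm d_c^t := \tilde{\bm d}_c \circ \bm s^t$. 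Two things must be checked: (i) the aggregated one-step selection $\tilde{\bm s}_{\text{par}} \circ \tilde{\bm d}_{\text{par}}$ is contracting; and (ii) for every $c \in [k]$ and every $t \ge 1$, the decision rule $\bm d_c^t$ controls the post-selection risk for the loss $\ell_c$ at level $q(c)$. Granting these, the proof of \Cref{theo:extra.selective.risk} --- which never uses that the decision is scalar-valued --- applies verbatim to the $c$-th block and loss $\ell_c$, so $r^{\mathrm{extra}}(\bm d_c^T, \theta) \le q(c)$, where $T$ is the single convergence time of the aggregated iteration and the average is over the common selected set $\calS^{T+1} = \calS^T$. This is exactly \eqref{eq:multi.risk} with $\calS_c = \calS^T$ for all $c$.

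Claim (i) is immediate: each $\tilde{\bm s}_c \circ \tilde{\bm d}_c$ is contracting by the hypotheses imported from \Cref{coro:extra.selective.risk}, hence $\tilde{\bm s}_{\text{par}} \circ \tilde{\bm d}_{\text{par}}(\bm w, \bm X) = \bigcap_{c=1}^{k} (\tilde{\bm s}_c \circ \tilde{\bm d}_c)(\bm w, \bm X) \preceq (\tilde{\bm s}_1 \circ \tilde{\bm d}_1)(\bm w, \bm X) \preceq \bm w$ for all $\bm w$. The substance is claim (ii), whose engine is the statement that every cumulative rule $\bm s^t$ is stable; I would prove this by induction on $t$ using \Cref{prop:composite}. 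The base case is the assumed stability of $\bm s^1$. For the inductive step, observe that $\tilde{\bm s}_{\text{par}} \circ \tilde{\bm d}_{\text{par}}$, read as an extended selection rule, is stable: its fiber at any fixed $\bm w$ equals $\bigcap_{c=1}^k (\tilde{\bm s}_c \circ \tilde{\bm d}_c)(\bm w, \cdot)$, an intersection of ordinary stable selection rules (stability of an extended rule being fiberwise, by \Cref{defi:stable}), hence stable by iterating \Cref{prop:composite}(2). Since this extended rule is also $\preceq$-contracting by (i), \Cref{prop:composite}(1) --- applied with the ordinary stable rule $\bm s^t$ and the extended stable rule $\tilde{\bm s}_{\text{par}} \circ \tilde{\bm d}_{\text{par}}$ --- shows that $\bm s^{t+1} = (\tilde{\bm s}_{\text{par}} \circ \tilde{\bm d}_{\text{par}}) \circ \bm s^t$ is stable.

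With $\bm s^t$ stable, claim (ii) follows from \Cref{theo:post.selective.risk}: for fixed $c$ and $t$, the rule $\bm d_c^t = \tilde{\bm d}_c \circ \bm s^t$ is exactly the output $\bm D^1$ of \Cref{algo:post.selective.risk} run with first selection rule $\bm s^t$, non-selective rules $d_{c,i}$ (each using only $X_i$ and controlling the $\ell_{c,i}$-risk), adjustment function $f(m) = m$, and level $q(c)$, because $D^t_{c,i} = \tilde{d}_{c,i}(\bm S^t, \bm X) = d_{c,i}(X_i;\, q(c)|\bm S^t|/m)$; since $X_0,\dots,X_m$ are independent, \Cref{theo:post.selective.risk} gives $r^{\mathrm{post}}(\bm d_c^t, \theta) \le q(c)$ for all $\theta$ and $t$. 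Finally, contracting forces $T \le m+1$ with probability one and $\bm s^t = \bm s^{m+1}$ as maps for all $t \ge m+1$, so $\bm d_c^T = \bm d_c^{m+1}$; combined with $\bm S^T = \bm S^{T+1}$ (which identifies the $\ell_c$-extra-selection risk of $\bm d_c^T$ with its $\ell_c$-post-selection risk, as in the proof of \Cref{theo:extra.selective.risk}), this yields $r^{\mathrm{extra}}(\bm d_c^T, \theta) = r^{\mathrm{post}}(\bm d_c^{m+1}, \theta) \le q(c)$. The sequential-composition variant gives the same conclusion via \Cref{prop:iterate.equivalence}, and swapping \Cref{theo:post.selective.risk} for \Cref{prop:arbitrary dependence} (and $f(m) = m$ for $f(m) = m\sum_{i=1}^{m} 1/i$) yields the version requiring neither independence nor stability.

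The step I expect to cost the most care is the stability induction: one must keep the fiberwise meaning of stability for extended selection rules straight, apply \Cref{prop:composite}(2) fiber by fiber (it is stated for ordinary rules $\calX \to \{0,1\}^m$), and verify the $\preceq$-contracting hypothesis needed by \Cref{prop:composite}(1). Everything afterward --- identifying $\bm d_c^t$ with an instance of \Cref{algo:post.selective.risk}, invoking \Cref{theo:post.selective.risk}, and the post-to-extra conversion via finite convergence --- is routine bookkeeping already present in the single-risk proofs.
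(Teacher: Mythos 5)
Your argument is correct and follows exactly the route the paper intends: the paper only remarks that the corollary ``almost immediately follows'' from the closure of stability under composition and intersection (\Cref{prop:composite}), and your write-up fills in that sketch with the same ingredients --- contraction of the aggregated selection, an induction showing each cumulative rule $\bm s^t$ is stable via \Cref{prop:composite}, per-loss post-selection control from \Cref{theo:post.selective.risk}, and the post-to-extra conversion of \Cref{theo:extra.selective.risk}. No discrepancy with the paper's approach.
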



The general methodology outlined in this section can be applied to
obtain novel selective inference procedures that control multiple
selective risks at the same time. See \Cref{exam:partial.conjunction.map} in the Supplementary Materials for an illustration.

\subsection{Application: FDR curve}\label{sec:FDR.curve}

Next we discuss an application of the general methodology for
multiple selective risks. Continuing from \Cref{exam:fwer,exam:FDR},
let the $i$-th hypothesis be $H_{i,0}: \theta_i \geq 0$. Suppose,
however, that we are not just interested in testing $H_{i,0}$ but also
$H_{i,c}: \theta_i \geq c$ for a range of values of $c \in \calI
\subseteq \RR$. This may be useful, for example, if we are concerned
about the ``practical significance'' of a treatment, or if we simply
would like to know whether standard FDR methods (such as the
BH procedure) fail to control the selective risk when the null
hypotheses are changed slightly.

We consider the setting that the $i$-th task is also associated with an
independent test statistic $X_i \in \calX_i$ and a p-value function
indexed by $c$, $p_i: \calX_i \times \calI \to [0,1]$, such
that $P_{i,c} = p_{i}(X_i; c)$ is a valid p-value for $H_{i,c}$. This
is the case, for example, if $X_i$ is a z-score.
We can now define a FDR at every location $c \in \mathcal{I}$ and aim to
control it under a given curve $q(c)$, that is,
\begin{align*}
    \text{FDR}(c) := \EE[\text{FDP}(c)] \le q(c), \quad
    \text{FDP}(c):= \frac{\sum_{i=1}^m \1_{\{H_{i,c}  = 0\}}\1_{\{i \in
  \calS\}}}{1 \vee |\calS|}, \quad c \in \calI,
\end{align*}
where $H_{i,c} = 0$ means the null $H_{i,c}$ is true and $\mathcal{S}$
is the selected (rejection) set.

When $\mathcal{I}$ is finite, we can directly apply the parallel
intersection procedure in \Cref{sec:gener-meth-multi} to control the
FDR curve. At each iteration, we deselect task $i$ if
$
  P_{i,c} > q(c) |\mathcal{S}|/m~\text{for some}~c \in
  \mathcal{I},
$
where $\mathcal{S}$ is the current selected set. 
The method is equivalent to applying the BH procedure to the following ``adjusted p-value''
\begin{equation}\label{eq:p.value.FDR.curve}
   P_{i,\sup} = \sup_{c \in \calI} \frac{p_{i}(X_i; c)}{q(c)}, \quad i \in [\No]
 \end{equation}
 and an ``adjusted FDR level'' $q = 1$ (note that $P_{i,\sup}$ can be larger than
 $1$); see \Cref{algo:FDR.curve} in the Supplementary Materials. 
 This connection is useful when $\mathcal{I}$ is infinite and \Cref{algo:extra.selective.risk} can not be applied.
The proposition below shows that the $FDR(c)$ is controlled by $q(c)$ for $c\in \calI$ when we apply the BH procedure to $P_{i,\sup}$. In fact, this procedure provides a stronger notion of simultaneous
FDR control than \Cref{eq:multi.risk}.
\begin{proposition}\label{prop:FDR.curve}
    If $X_1,\dots,X_m$ are independent and $p_{i}(X_i; c)$ is a valid
    p-value for $H_{i,c}$ for all $i \in [\No]$, $c \in \calI$,
    then $\mathcal{S}^T$ from the BH procedure with $P_{i,\sup}$ in Eq.~\eqref{eq:p.value.FDR.curve} satisfies
    \begin{align*}
       \sup_{c \in \calI} \frac{\text{FDR}(c)}{q(c)}
        \le \EE\left[\sup_{c \in \calI}
        \frac{\sum_{i=1}^m \1_{\{H_{i,c}  = 0\}}\1_{\{i \in
      \calS^T\}}}{q(c) (1 \vee
      |\calS^T|)}\right]
        \le 1.
    \end{align*}
\end{proposition}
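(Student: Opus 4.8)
The plan is to dispatch the first inequality immediately and then spend all the effort on the second. For the first: writing $R := |\calS^T|$ and $N(c) := \sum_{i=1}^m \1_{\{H_{i,c}=0\}}\1_{\{i\in\calS^T\}}$, we have $\text{FDR}(c)/q(c) = \EE[N(c)/(q(c)(1\vee R))] \le \EE[\sup_{c'\in\calI} N(c')/(q(c')(1\vee R))]$ for every fixed $c$, and the right-hand side does not depend on $c$, so taking the supremum over $c\in\calI$ on the left gives the claim.

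For the second inequality, the idea is to \emph{pull the supremum over $c$ inside the sum over tasks} before taking expectations, which replaces the family $\{p_i(X_i;c)\}_c$ by a single random variable to which the classical Benjamini--Hochberg leave-one-out argument applies. Since $N(c)/(q(c)(1\vee R))$ is a finite sum of nonnegative terms, neither $\1_{\{i\in\calS^T\}}$ nor $R$ depends on $c$, and $H_{i,c}=0$ iff $c\le\theta_i$,
\[
  \sup_{c\in\calI}\frac{N(c)}{q(c)(1\vee R)} \;\le\; \sum_{i=1}^m \frac{\1_{\{i\in\calS^T\}}}{(1\vee R)\,q_i^\star},\qquad q_i^\star := \inf\{q(c): c\in\calI,\ c\le\theta_i\},
\]
with the $i$-th term read as $0$ when $\{c\in\calI: c\le\theta_i\}=\emptyset$. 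It thus suffices to show $\EE[\1_{\{i\in\calS^T\}}/((1\vee R)\,q_i^\star)]\le 1/m$ for each $i$ with $q_i^\star>0$ and then sum. Fix such an $i$ and let $R_i$ be the number of rejections of the BH procedure at level $1$ applied to $(P_{1,\sup},\dots,P_{i-1,\sup},0,P_{i+1,\sup},\dots,P_{m,\sup})$. By the standard leave-one-out property of BH, $R_i\ge 1$, $R_i$ is a function of $\{X_j:j\ne i\}$ (hence independent of $X_i$), $\{i\in\calS^T\}=\{P_{i,\sup}\le R_i/m\}$, and $R=R_i$ on that event. Since $P_{i,\sup}\ge \tilde P_i := \sup\{p_i(X_i;c)/q(c): c\in\calI,\ c\le\theta_i\}$, we get $\1_{\{i\in\calS^T\}}/(1\vee R)\le \1_{\{\tilde P_i\le R_i/m\}}/R_i$. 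The one-line lemma is that $\tilde P_i$ is super-uniform up to the factor $q_i^\star$: the event $\{\tilde P_i\le t\}$ forces $p_i(X_i;c)\le q(c)t$ for \emph{every} null location $c\le\theta_i$ simultaneously, so by validity of $p_i(\cdot;c)$ for $H_{i,c}$, $\PP(\tilde P_i\le t)\le \inf_{c\le\theta_i}\PP(p_i(X_i;c)\le q(c)t)\le q_i^\star t$. Conditioning on $R_i$ and using independence,
\[
  \EE\!\left[\frac{\1_{\{i\in\calS^T\}}}{(1\vee R)\,q_i^\star}\right]\;\le\;\EE\!\left[\frac{\PP(\tilde P_i\le R_i/m\mid R_i)}{R_i\,q_i^\star}\right]\;\le\;\EE\!\left[\frac{q_i^\star (R_i/m)}{R_i\,q_i^\star}\right]=\frac1m,
\]
and summing over $i$ (empty-set indices contributing $0$) gives $\EE[\sup_{c}N(c)/(q(c)(1\vee R))]\le 1$.

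The crux --- and the reason a naive ``control $\text{FDR}(c)$ for each $c$ and then union-bound'' fails once $\calI$ is infinite --- is the supremum over $c$ sitting inside the expectation; the step that makes it tractable is exchanging $\sup_c$ with $\sum_i$ and observing that the resulting per-task quantity is driven by the single statistic $\tilde P_i$, whose tail is controlled uniformly over the null locations by $q_i^\star$, after which the usual BH leave-one-out computation goes through verbatim. Two minor technical points remain: attainment/positivity of $q_i^\star$ (assume $q$ is continuous on a compact $\calI$, or replace the minimiser by a minimising sequence and apply Fatou's lemma), and the $0/0$ conventions (when $R=0$ the numerator vanishes; tasks $i$ with $\theta_i<\inf\calI$ never supply a null location).
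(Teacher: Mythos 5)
Your proof is correct and follows essentially the same strategy as the paper's: both exchange $\sup_{c}$ with $\sum_{i}$, reduce each task to its least favourable null location (your $q_i^\star=\inf\{q(c):c\in\calI,\,c\le\theta_i\}$ is the paper's $q(c_i^*)$ with $c_i^*=\argmin\{q(c):H_{i,c}=0\}$), and then bound the resulting per-task term by $1/m$ via a Benjamini--Hochberg leave-one-out argument exploiting independence. The only differences are cosmetic: you write out the classical leave-one-out computation with $R_i$ explicitly, whereas the paper delegates that step to its proof of \Cref{theo:post.selective.risk}, and you are somewhat more careful about non-attainment of the infimum (and the degenerate case $q_i^\star=0$), which the paper's use of $\argmin$ glosses over.
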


The standard BH procedure (applied to $P_{i,0}$, $i
\in [m]$) essentially requires that the FDR curve is controlled under
the step function $q_\text{BH}(c) := q \cdot \1_{\{c \ge 0\}} + 1
\cdot \1_{\{c < 0\}}$ for $\calI' = \{0\}$.
The next proposition shows, perhaps
surprisingly, that the BH procedure applied to $P_i = p_i(X_i; 0)$ also offers FDR control at other locations. 
\begin{proposition}\label{prop:q(c).free}
    Suppose the p-value function $p_i(x_i; c)$ is increasing in $x_i$ and decreasing in $c$ for all $i\in[\No]$.
    Then applying the BH procedure to $P_{i,0} = p_i(X_i, 0)$ at level
    $q$ produces the same rejection set as \Cref{algo:FDR.curve} with
    the following FDR curve:
    \begin{align}\label{eq:enhanced.level.curve}
        q^*(c) = 1 \wedge \left(\sup_{i \in [\No]} \sup_{x_i: q/\No \le p_{i}(x_i; 0) \le q} q \cdot \frac{p_{i}(x_i; c)}{p_{i}(x_i; 0)}\right) \le q_\text{BH}(c), \quad c \in \calI.
    \end{align}
    Thus, the BH procedure controls the FDR curve under $q^{*}(c)$.

    More generally, for arbitrary $\calI'$, the rejection set of applying \Cref{algo:FDR.curve} to control FDR$(c')$ at level $q(c')$ for $c' \in \calI'\subseteq \calI$ actually controls the FDR$(c)$ at a lower level compared to $q_\text{BH}(c) := 1 \wedge \inf_{c' \in \calI'} q(c') \cdot \1_{\{c \ge c'\}}$,
    \begin{align}\label{eq:enhanced.level.curve.2}
        q^*(c) = 1 \wedge \left(\inf_{c' \in \calI'} \sup_{i \in [\No]} \sup_{x_i: q(c')/\No \le p_{i}(x_i; c') \le q(c')} q(c') \cdot \frac{p_{i}(x_i; c)}{p_{i}(x_i; c')}\right) \le q_\text{BH}(c), \quad c \in \calI.
    \end{align}
\end{proposition}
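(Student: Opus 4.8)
The plan is to prove the rejection-set identity first and then read off the inequality $q^*(c)\le q_\text{BH}(c)$ and FDR-curve control as easy consequences. I would begin with two reformulations: the BH procedure at level $q$ on $\{P_{i,0}\}$ is the same as the BH procedure at level $1$ on $\{P_{i,0}/q\}$, and \Cref{algo:FDR.curve} run with the curve $q^*(\cdot)$ is, by construction, the BH procedure at level $1$ applied to $P_{i,\sup}=\sup_{c\in\calI}p_i(X_i;c)/q^*(c)$, a location $c$ with $q^*(c)=1$ being inactive. Write $\hat{\mathcal R}$ for the BH-at-$q$ rejection set on $\{P_{i,0}\}$ and $R=|\hat{\mathcal R}|$, so that $\hat{\mathcal R}=\{i:P_{i,0}\le qR/\No\}$ with the threshold $qR/\No$ always in $[q/\No,q]$. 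Evaluating $q^*$ at $c=0$ gives $q^*(0)=q$; hence the $c=0$ term shows $P_{i,\sup}\ge P_{i,0}/q$, and by monotonicity of BH in its p-values the output of \Cref{algo:FDR.curve} is contained in $\hat{\mathcal R}$. The substance is the reverse inclusion.

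For that, the key step is to show $P_{i,\sup}\le R/\No$ for every $i\in\hat{\mathcal R}$: this makes $R$ a feasible rejection count for BH on $\{P_{i,\sup}\}$, which, combined with the previous paragraph, forces the two counts (and hence the two rejection sets) to coincide. Fix $i\in\hat{\mathcal R}$, so $p_i(X_i;0)\le qR/\No$, and fix an active location $c$ (i.e.\ $q^*(c)<1$). If $q/\No\le p_i(X_i;0)\le q$, then $(i,X_i)$ lies in the index set of the supremum defining $q^*(c)$ in \eqref{eq:enhanced.level.curve}, so $q^*(c)\ge q\,p_i(X_i;c)/p_i(X_i;0)$ (the truncation $1\wedge(\cdot)$ cannot bind, since $q^*(c)<1$), whence $p_i(X_i;c)\le q^*(c)\,p_i(X_i;0)/q\le q^*(c)R/\No$. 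If instead $p_i(X_i;0)<q/\No$, I would use monotonicity of $p_i(\cdot;0)$ to replace $X_i$ by a boundary point $\tilde x_i\ge X_i$ with $p_i(\tilde x_i;0)=q/\No$; this point lies in the index set, so $q^*(c)\ge\No\,p_i(\tilde x_i;c)\ge\No\,p_i(X_i;c)$ and $p_i(X_i;c)\le q^*(c)/\No\le q^*(c)R/\No$. Taking the supremum over $c$ (the $c=0$ term being $p_i(X_i;0)/q\le R/\No$) yields $P_{i,\sup}\le R/\No$.

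The two remaining assertions are then short. For $c\ge0$, $p_i(\cdot;c)$ decreasing in $c$ makes every ratio in \eqref{eq:enhanced.level.curve} at most $1$, so $q^*(c)\le q=q_\text{BH}(c)$; for $c<0$, $q_\text{BH}(c)=1\ge q^*(c)$ trivially. Since the BH rejection set equals the output of \Cref{algo:FDR.curve} with curve $q^*(\cdot)$, FDR-curve control $\text{FDR}(c)\le q^*(c)$ for all $c\in\calI$ follows immediately from \Cref{prop:FDR.curve}. For the general statement \eqref{eq:enhanced.level.curve.2} I would run the same argument with the single location $0$ replaced by each $c'\in\calI'$: the rejection set $\hat{\mathcal R}$ of \Cref{algo:FDR.curve} on $\calI'$ with levels $q(\cdot)$ satisfies $p_i(X_i;c')\le q(c')|\hat{\mathcal R}|/\No$ for every $i\in\hat{\mathcal R}$ and $c'\in\calI'$, and the same two-case analysis (with $c'$ in place of $0$ and $q(c')$ in place of $q$) gives $p_i(X_i;c)\le q^*(c)|\hat{\mathcal R}|/\No$ for every $i\in\hat{\mathcal R}$ and $c\in\calI$; self-consistency together with independence and the stability of the step-up set $\hat{\mathcal R}$ then yields $\text{FDR}(c)\le q^*(c)$ by \Cref{prop:FDR.curve} (equivalently \Cref{coro:extra.selective.risk}).

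The hard part is the calibration step in the middle paragraph: one must check that the truncation $1\wedge(\cdot)$ and, in particular, the lower endpoint $q/\No$ of the supremum defining $q^*$ are tuned precisely so that every task BH rejects survives all the auxiliary constraints at the other locations, and when $p_i(X_i;0)<q/\No$ this genuinely uses the monotonicity of $p_i$ in $x_i$ (plus enough regularity to attain, or approximate, the boundary value $p_i(\tilde x_i;0)=q/\No$; note that dropping this lower endpoint would collapse $q^*(c)$ to $1$ for all $c<0$ in the Gaussian example). One should also make explicit the convention that $q^*(c)=1$ deactivates the location $c$ in \Cref{algo:FDR.curve}, and verify it is consistent with the statement of \Cref{prop:FDR.curve}.
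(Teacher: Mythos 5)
Your argument is correct and follows essentially the same route as the paper's own proof: the easy inclusion via $q^*(0)=q$ (so $P_{i,\sup}\ge P_{i,0}/q$), and the reverse inclusion by showing $P_{i,\sup}\le R/\No$ for every BH rejection, where your two-case analysis ($p_i(X_i;0)\ge q/\No$ versus $p_i(X_i;0)<q/\No$) is exactly the paper's substitution $x_i=X_i\vee\underline{\tau}_i$ with $p_i(\underline{\tau}_i;0)=q/\No$, followed by the same appeal to self-consistency and \Cref{prop:FDR.curve}. The caveat you flag at the end --- how the truncation $1\wedge(\cdot)$ in \eqref{eq:enhanced.level.curve} interacts with \Cref{algo:FDR.curve} at locations where it binds --- is likewise left implicit in the paper, whose displayed computation simply drops the truncation, so your proof shares rather than introduces that gap.
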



To compute $ q^*(c) $, for~\eqref{eq:enhanced.level.curve.2}, if the p-value function is based on the CDF of a distribution with a monotone likelihood ratio, and $\calI' = \{c'\}$, the supremum over $x_i$ is guaranteed to occur at the boundary of the set of $\{x_i: q/m \le p_i(x_i;c') \le q\}$, that is the $x_i$ satisfying $q(c')/m = p_i(x_i;c')$ or $q = p_i(x_i;c')$.
If $\calI'$ contains multiple values, the problem is effectively solving multiple sub-problems with singleton $\calI'$ and then taking the infimum.

\subsection{Real data analysis: protein data} \label{sec:protein-data}

To illustrate this "free lunch" theorem, we use a real proteomic dataset \parencite{shuster2022situ} that is analyzed in \textcite{gao2023simultaneous}. The purpose of the original study was to identify candidate cell membrane proteins that influence dendrite morphogenesis in Purkinje cells, and a total of $4753$ proteins were detected \parencite{shuster2022situ}.
For each detected protein, the abundance of each protein under the treatment condition (horseradish peroxidase (HRP) + $\text{H}_2\text{O}_2$) and the control condition (HRP only) are measured.
The measurements across proteins are regarded as independent.
\textcite{shuster2022situ} are interested in plasma membrane proteins ($740$ in total), and they argue the nuclear, mitochondrial, or cytoplasmic proteins ($2067$ in total), referred to as internal negative controls in \parencite{gao2023simultaneous}, are not affected by the treatment.

To keep the notations consistent, we let $\theta_i := -(\mu_{\text{trt}} - \mu_{\text{cnt}})$ be the negative difference of the expected treatment abundance and the expected control abundance, and the hypotheses we test are $H_{i,c}: \theta_i \ge c$.
Following \textcite{gao2023simultaneous}, we obtain an empirical null distribution $\calN(\hat{\delta}, \hat{\sigma}^2)$ from the internal negative proteins, where $\hat{\delta}$ is the sample mean of the abundance differences of the internal negative proteins, and $\hat{\sigma}$ is a robust estimate of standard deviation using median absolute deviation. 
We use the difference normalized by $\hat{\delta}$ and $\hat{\sigma}$ as test statistic denoted by $X_i$, and compute the p-value for $H_{i,c}$ as $\Phi(X_i - c)$.

We apply \Cref{algo:FDR.curve} to $p(X_i; c)$ with various pre-specified $\calI'$ and $q(c)$ and compute the improved FDR level $q^*(c)$.
By the fact that the ratio $p(x; c)/p(x; c')$ is increasing in $x$ for $c > c'$ and decreasing in
$x$ for $c < c'$, $c' \in \calI'$, we obtain
\begin{align}\label{eq:FDR.curve}
    \begin{split}
        q^{*}(c)
        = 1 \wedge \inf_{c' \in \calI'} q(c')
        \cdot \left(\frac{\Phi(\Phi^{-1}(q(c')) - c)}{ \Phi(\Phi^{-1}(q(c')) - c')} \vee \frac{ \Phi(\Phi^{-1}(q(c')/\No) - c)}{\Phi(\Phi^{-1}(q(c')/\No) - c')}\right).
     \end{split}
\end{align}

We remark that $q_{\text{BH}}(c)$, $q^*(c)$ only relies on $q(c)$ and the p-value functions and is independent of the data.
The number of rejections $R$ depends on the data $X_i$.
\Cref{fig:FDR.curve.protein} shows the number of rejections $R$, as well as the comparison between $q^*(c)$ and $q_{\text{BH}}(c)$.
In (a), we adopt $\calI' = \{0\}$,
and $q^{*}(c)$ is substantially lower than $q_\text{BH}(c)$ for $c \in (-0.5,0) \cup (1, 1.5)$.
This shows that the BH procedure may offer much stronger
control of the FDR curve than $q_\text{BH}(c)$.
From (a) to (d), as we include new constraints,  $q^*(c)$ lowers significantly especially for $c \in [-2, -0.5]$, implying substantially stronger FDR$(c)$ controls.
In the protein data analysis, $R$ only decreases approximately $25\%$, indicating that the rejection set obtained from applying BH to $p_i(X_i; 0)$ (panel (a)) consists of a considerable proportion of proteins with strong signals ($\theta_i < -1.5$ or $\mu_{\text{trt}} - \mu_{\text{cnt}} > 1.5$).

\begin{figure}[tbp]
        \centering
        \begin{minipage}{0.25\textwidth}
                \centering
                \includegraphics[clip, trim = 0cm 0cm 0cm 0cm, height = 4cm]{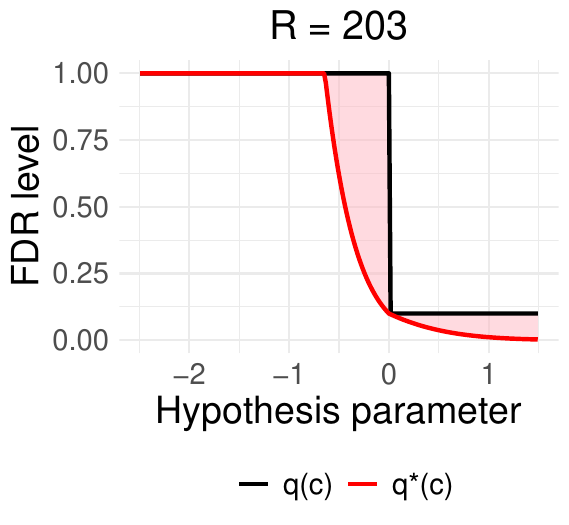}
                \subcaption{$\calI' = \{0\}$}
        \end{minipage}
        \begin{minipage}{0.23\textwidth}
                \centering
                \includegraphics[clip, trim = 0.75cm 0cm 0cm 0cm, height = 4cm]{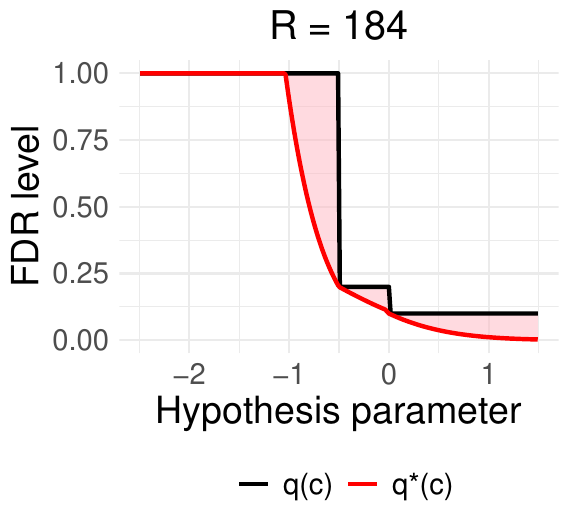}
                \subcaption{$\calI' = \{-0.5,0\}$}
        \end{minipage}
        \begin{minipage}{0.23\textwidth}
                \centering
                \includegraphics[clip, trim = 0.75cm 0cm 0cm 0cm, height = 4cm]{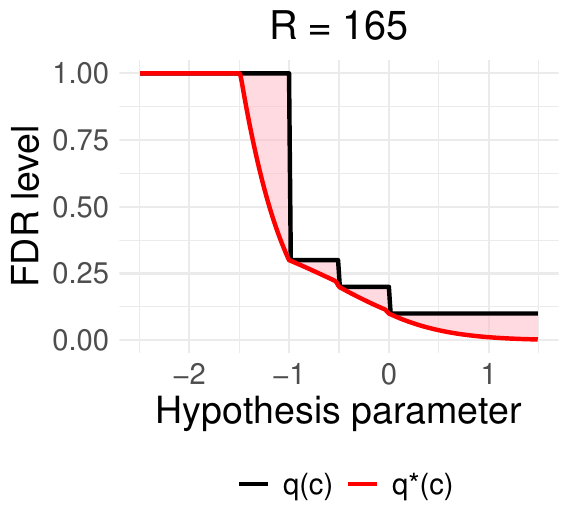}
                \subcaption{$\calI' = \{-1,-0.5,0\}$}
        \end{minipage}
        \begin{minipage}{0.26\textwidth}
                \centering
                \includegraphics[clip, trim = 0.75cm 0cm 0cm 0cm, height = 4cm]{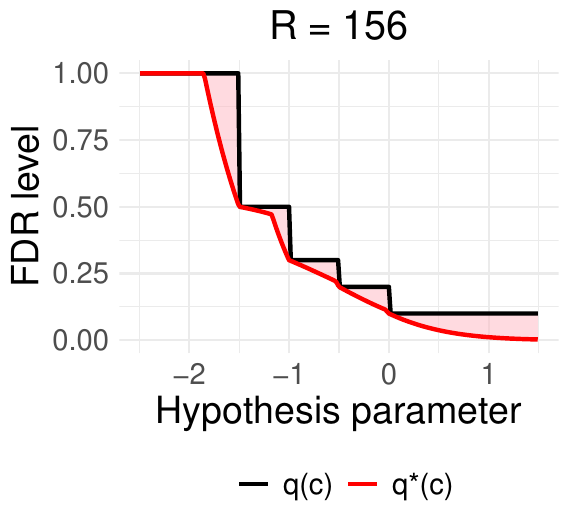}
                \subcaption{\scriptsize $\calI' = \{-1.5,-1,-0.5,0\}$}
        \end{minipage}
        \caption{{An illustration of FDR curve control using a proteomics dataset.
        We consider four $\calI'$, expanding from (a) to (d), and pre-specified FDR levels $q(c)$: $q(-1.5) = 0.5$, $q(-1) = 0.3$, $q(-0.5) = 0.2$, $q(0) = 0.1$.
        In panel (a), the rejection set is that of the BH procedure applied to the p-values for the standard null hypothesis $H_{i,0}$ (no treatment effect for protein $i$).
        }
        }
        \label{fig:FDR.curve.protein}
\end{figure}

\section{Accelerated BH with permutation tests}\label{sec:accelerate.permutation}

A permutation test assesses the significance of an observed statistic by comparing it to a reference distribution calculated from randomly permuting the original data. 
However, a major challenge in making permutation tests more widely applicable is their computational complexity, since the number of possible permutations increases exponentially in the sample size.
In this section, we show that
only a total of $O(\No \log^2\No)$ permutations and test statistic evaluations are required for combining the BH procedure with permutation tests.
The complexity $O(\No \log^2\No)$ scales only linearly in the number of hypotheses and is independent of the sample size associated with each individual hypothesis.

\subsection{Method and FDR control}

Suppose each hypothesis is associated with a test statistic $T_i$ obtained from $N_i$ data points $X_i$.
We denote the p-value exhausting all possible permutations by $P_i^\infty = {\sum_{j = 1}^{N_i!} \1_{\{T_i \le T_{ij}\}}}/{N_i!}$, where $T_{ij}$ represents the test statistic of a permuted sample.
For any $M \in \NN$, let $P_i(M)$ be a Monte Carlo approximation of the exact permutation p-value $P_i^{\infty}$ based on $M$ randomly sampled permutations,
\begin{align}\label{eq:permutation.p.value}
    P_i(M) = \frac{1 + \sum_{j = 1}^{M} \1_{\{T_i \le T_{ij}\}}}{1 + M}.
\end{align}
Under the null hypothesis, $P_i(M)$ stochastically dominates $U(0,1)$.


The standard approach
of applying the BH procedure to the approximating permutation p-values
uses the same number of permutations for each hypothesis.
In contrast, we propose an accelerated approach which, in the iteration, progressively increases the resolution of the approximating permutation p-values whose tasks are still selected.
Hypotheses with larger $P_i^{\infty}$ will be deselected in early iterations, leading to a smaller number of permutations being consumed.
The details are summarized in \Cref{algo:permutation}.

\begin{algorithm}[h]\caption{Extra-selection risk control with permutation test}\label{algo:permutation}
    \begin{algorithmic}
        \STATE \textbf{Input}: data $\bm X$, target FDR level $q \in [0,1]$, a decreasing positive integer sequence $M_r$, $r \in [\No]$.
        \STATE $t \leftarrow 1$ \\
        \STATE $\mathcal{S}^1 \leftarrow \{1,\dots,m\}$ \\
         \REPEAT
        \STATE
        For $i \in \calS^t$, generate an additional $M_{|\calS^t|} - M_{|\calS^{t-1}|}$ permutations (we set $M_{|\calS^0|} = 0$) and compute the corresponding test statistic $T_{ij}$, $M_{|\calS^{t-1}|} < j \le M_{|\calS^t|}$.
        \STATE Compute $P_i(M_{|\calS^t|})$, $i \in \calS^t$ according to Eq.~\eqref{eq:permutation.p.value} based on $T_{ij}$, $1 \le j \le M_{|\calS^t|}$.
        \STATE $\calS^{t+1} \leftarrow \{i \in \calS^{t}: P_i(M_{|\calS^t|}) \le q|\calS^{t}|/\No\}$ \\
        \STATE $t \leftarrow t+1$ \\
        \UNTIL $\mathcal{S}^{t} = \calS^{t-1}$
      \STATE $T \leftarrow t - 1$
      \STATE \textbf{Output}: $\calS^T$.
    \end{algorithmic}
\end{algorithm}

\if\showComment1
\begin{algorithm*}\caption{An iterative perspective of BH}
    \begin{algorithmic}
        \STATE \textbf{Input}: p-values $P_i$, $i \in [\No]$, FDR level $q \in [0,1]$.\\
        \STATE \textbf{Initialization}: $\calS^1 = [\No]$.\\
        \FOR{$t = 1 : \No$}
         \STATE Update $\calS^{t+1} \leftarrow \{i \in \calS^t: P_i \le q|\calS^t|/\No\}$.
         \STATE If $\calS^{t+1} = \calS^{t}$, break and output $\calS^t$.
        \ENDFOR
    \end{algorithmic}
\end{algorithm*}
\fi


\begin{proposition}\label{prop:permutation}
    If $X_i \independent \bm X_{-i}$ for $H_i = 0$ and the permutations are independently generated across all tasks,
    then \Cref{algo:permutation} controls FDR.
\end{proposition}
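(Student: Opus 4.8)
The plan is to present \Cref{algo:permutation} as an instance of the generic \Cref{algo:extra.selective.risk} and then quote \Cref{coro:extra.selective.risk}. The first move is to enlarge the data associated with each task so that all the randomness is ``baked in'' before the iteration begins: let $\Pi_i$ denote the (at most $M_1$) i.i.d.\ uniform permutations that task $i$ may ever consume, all sampled up front, and set $\tilde X_i := (X_i,\Pi_i)$, with $X_0$ trivial. By the independence assumptions of the proposition, $\tilde X_1,\dots,\tilde X_m$ are independent \emph{at every null task}, which, as explained below, is all the independence the argument uses. The point that legitimizes the reformulation is that \Cref{algo:permutation} draws permutations \emph{nestedly} --- at step $t$ it only \emph{appends} $M_{|\calS^t|}-M_{|\calS^{t-1}|}$ new permutations --- so the Monte Carlo p-value $P_i(M_r)$ produced incrementally equals the one computed from the first $M_r$ entries of the pre-drawn batch $\Pi_i$; hence $r\mapsto P_i(M_r)$ is a fixed measurable function of $\tilde X_i$.

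With this data, I would instantiate \Cref{algo:extra.selective.risk} by taking the first selection rule $\bm s^1\equiv(1,\dots,1)$ (so $\calS^1=[\No]$); the non-selective decision rule $d_i(\tilde X_i;\alpha):=\1_{\{P_i(M_{r(\alpha)})\le\alpha\}}$, where $r(q\rho/\No)=\rho$ for $\rho\in[\No]$ and $r(\cdot)$ is immaterial elsewhere; the decision strategy $\tilde d_i(\bm S,\bm X):=S_i\,d_i(\tilde X_i;\,q|\bm S|/\No)$; and the identity selection strategy $\tilde s_i(\bm D,\bm X):=D_i$. Because $M_r$ is decreasing in $r$ while $|\calS^t|$ is non-increasing along the iteration, the resolution $M_{|\calS^t|}$ is non-decreasing in $t$, and one checks directly that the update $\bm S^{t+1}=\tilde{\bm s}(\tilde{\bm d}(\bm S^t,\bm X),\bm X)$ is precisely $\calS^{t+1}=\{i\in\calS^t:P_i(M_{|\calS^t|})\le q|\calS^t|/\No\}$; here the explicit restriction ``$i\in\calS^t$'', encoded by the factor $S_i$, is doing real work, since the p-value is re-evaluated at a finer resolution at each step and so need not be monotone. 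Consequently the two algorithms share the same output $\calS^T$, and the extra-selection risk of the decision rule $\bm d^T$ for the loss $\ell_i(D_i,\theta)=\1_{\{H_i=0\}}D_i$ equals the $\FDR$ in \eqref{eq:fdr}.

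It then remains to check the hypotheses of \Cref{coro:extra.selective.risk}. (i) \emph{Validity}: the $+1$ correction in \eqref{eq:permutation.p.value} makes $P_i(M)$ stochastically dominate $U(0,1)$ under $H_i=0$ for every $M$ (see the remark following \eqref{eq:permutation.p.value}), so $r_i(d_i(\cdot;\alpha),\theta)=\1_{\{H_i=0\}}\,\PP_\theta(P_i(M_{r(\alpha)})\le\alpha)\le\alpha$, and each $d_i$ controls the $\ell_i$-risk. (ii) \emph{Contracting}: $\tilde s_i(\tilde{\bm d}(\bm S,\bm X),\bm X)=S_i\1_{\{P_i(M_{|\bm S|})\le q|\bm S|/\No\}}\le S_i$, so $\tilde{\bm s}\circ\tilde{\bm d}$ never re-selects a deselected task. (iii) \emph{Stability}: $\bm s^1$ is trivially stable, and for each fixed $\bm w$ the one-step map $\bm S'=\tilde{\bm s}(\tilde{\bm d}(\bm w,\cdot),\cdot)$ satisfies $S'_j=w_j\1_{\{P_j(M_{|\bm w|})\le q|\bm w|/\No\}}$ for $j\ne i$, a function of $\tilde X_j$ and the constant $|\bm w|$ alone, so $\bm S'_{-i}$ is variationally independent of $X_i$ given $\bm X_{-i}$; hence $\tilde{\bm s}\circ\tilde{\bm d}$ is stable, and by \Cref{prop:composite}(1) every cumulative rule $\bm s^t$ is stable. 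Finally, although \Cref{coro:extra.selective.risk} (and \Cref{theo:post.selective.risk}) are stated under full independence of the task data, their proof via the iterated law of expectation uses that independence only at the null tasks, because $\ell_i$ vanishes off the nulls; this is exactly what the proposition grants, so the corollary applies and yields $r^{\mathrm{extra}}(\bm d^T,\theta)=\FDR\le q$.

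The step I expect to be the main obstacle is the bookkeeping in the reformulation: one must verify that the \emph{changing} Monte Carlo resolution across iterations can be folded into a single non-selective decision rule by letting it depend on the \emph{size} of the currently selected set --- which is what forces both the monotone sequence $M_r$ and the nested sampling scheme --- and that inserting this $|\bm S|$-dependence preserves stability and the contracting property. Once the reformulation and these checks are in place, the $\FDR$ bound is an immediate consequence of \Cref{coro:extra.selective.risk}.
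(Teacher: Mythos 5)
Your proposal is correct and follows essentially the same route as the paper: the paper likewise enlarges the per-task data to $\tilde X_i = (P_i(M_r))_{r\in[\No]}$ (equivalent to your pre-drawn, nested permutation batches), notes that the algorithm's selection is stable with respect to this enlarged data and that each $P_i(M)$ stochastically dominates $U(0,1)$ under the null, and then reruns the post-selection argument of \Cref{theo:post.selective.risk} followed by the post-to-extra step of \Cref{coro:extra.selective.risk}, using independence only at the null tasks. The only difference is presentational --- the paper re-executes the proof of \Cref{theo:post.selective.risk} directly rather than formally instantiating \Cref{algo:extra.selective.risk} with a resolution-dependent decision rule as you do --- and your observations about the nested sampling and the explicit restriction $i\in\calS^t$ match the paper's reasoning.
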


The expectation in the FDR of \Cref{prop:permutation} takes into account both the randomness of the observed data $\bm X$ and that of the permutations generated.
The FDR control in this result holds for arbitrary permutation number sequence $M_r$, $r \in [\No]$ in \Cref{algo:permutation}.
\Cref{prop:permutation} can be proved by adapting the argument of \Cref{theo:post.selective.risk} and details are provided in the supplementary file.


\subsection{Computation complexity}

\Cref{algo:permutation} controls the FDR for any sequence $M_r$ by \Cref{prop:permutation},
and we provide a specific recommendation in this section.
\Cref{algo:permutation} with the recommended sequence denoted by $M_{r,q,\varepsilon, \delta, \No}$ yields no less rejections compared to the BH procedure applied to $P_i^{\infty}$ at the reduced FDR level $q/(1+\delta)$ while substantially reducing the number of permutations and test statistic evaluations involved.
Let $\calS^\infty(q)$ denote the rejection set of applying BH to $P_i^\infty$ at level $q$, and $\calS(q; (M_{r})_{r \in [\No]})$ denote the rejection set of applying \Cref{algo:permutation} at level $q$ with the permutation number sequence $M_r$.

\begin{proposition}\label{prop:permutation.power}

For $\No \ge 2$, and $\varepsilon \le 0.5$, $0 < \delta \le 1$, define for $r \in [\No]$,
\begin{align}\label{defi:permutation.no}
    M_{r,q,\varepsilon, \delta, \No}
    = C_{\varepsilon, \delta, \No} \left(\frac{rq}{\No}\right)^{-1}, \quad C_{\varepsilon, \delta, \No} = 2 \left(\log\left(\frac{1}{\varepsilon}\right) + \log(\No) \right) \frac{1 + 4\delta/3 + \delta^2/3}{\delta^2}.
\end{align}
    \begin{itemize}
        \item [(a)] With probability at least $1 - \varepsilon$,
        \begin{align*}
            \calS^\infty(q/(1+\delta)) \subseteq \calS(q; (M_{r,q,\varepsilon, \delta, \No})_{k \in [\No]}).
        \end{align*}
        \item [(b)]
        \Cref{algo:permutation} with $M_{r,q,\varepsilon, \delta, \No}$ requires a maximum of
        \begin{align*}
            \frac{C_{\varepsilon, \No, \delta}}{q} \cdot \No \left(\log(\No) + 1\right)
        \end{align*}
        permutations and test statistic computations.
    \end{itemize}
\end{proposition}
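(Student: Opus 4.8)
The plan is to prove (a) and (b) separately: (a) is the substantive part and combines a monotonicity observation with a single conditional concentration bound, while (b) is a deterministic harmonic-sum count. Throughout write $C := C_{\varepsilon,\delta,\No}$ and $M_r := C\No/(rq)$, so that $r M_r q/\No = C$ for every $r$ and $M_r$ decreases in $r$; let $B_i(m)$ count how many of the first $m$ permutations of task $i$ exceed $T_i$, so $P_i(m) = (1+B_i(m))/(1+m)$ and $m \mapsto B_i(m)$ is nondecreasing. For (a) I would condition on $\bm X$, which determines the exact p-values $P_i^\infty$, the target set $\calS^{*} := \calS^\infty(q/(1+\delta))$, and $R^{*} := |\calS^{*}|$ (the case $R^{*}=0$ is vacuous). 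Since $\calS^{*}$ is the BH rejection set at level $q/(1+\delta)$ it obeys the self-consistency identity $\calS^{*} = \{i : P_i^\infty \le \tfrac{q}{1+\delta}R^{*}/\No\}$, so every $i \in \calS^{*}$ has $P_i^\infty \le \tfrac{qR^{*}}{(1+\delta)\No}$. The key step I would establish first is that, for a fixed $i \in \calS^{*}$, the deselection event at running-set size $r$, namely $\{P_i(M_r) > qr/\No\} = \{B_i(M_r) > qr/\No + C - 1\}$, is \emph{monotone}: if $R^{*} \le r_2 < r_1$ then $M_{r_1} < M_{r_2}$ forces $B_i(M_{r_1}) \le B_i(M_{r_2})$ while the threshold $qr/\No + C - 1$ increases in $r$, so deselection at $r_1$ implies deselection at $r_2$. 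Hence on the single event $E := \bigcap_{i \in \calS^{*}}\{P_i(M_{R^{*}}) \le qR^{*}/\No\}$ no $i \in \calS^{*}$ is ever deselected while the current set has size $\ge R^{*}$, and a one-line induction on the iteration index (base case $\calS^1 = [\No]$) gives $\calS^{*} \subseteq \calS^t$ for all $t$; in particular $\calS^\infty(q/(1+\delta)) \subseteq \calS^T = \calS(q;(M_r)_r)$ on $E$.

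It then remains to show $\PP(E^c) \le \varepsilon$. Conditionally on $\bm X$, the permutations of task $i$ are i.i.d.\ uniform, so $B_i(M_{R^{*}}) \sim \mathrm{Bin}(M_{R^{*}}, P_i^\infty)$ with conditional mean at most $M_{R^{*}}\cdot\tfrac{qR^{*}}{(1+\delta)\No} = C/(1+\delta)$ for $i \in \calS^{*}$, and the bad event $\{P_i(M_{R^{*}}) > qR^{*}/\No\}$ forces $B_i(M_{R^{*}})$ to be essentially $C$ (precisely $> qR^{*}/\No + C - 1$, an $O(1)$ discrepancy). A Bernstein/Chernoff bound with mean $\le C/(1+\delta)$ and deviation $s = C - C/(1+\delta) = C\delta/(1+\delta)$ gives conditional probability at most $\exp\!\bigl(-s^2/(2(C/(1+\delta)+s/3))\bigr) = \exp\!\bigl(-\delta^2 C/(2(1+\tfrac{4\delta}{3}+\tfrac{\delta^2}{3}))\bigr) = \varepsilon/\No$, where the last equality is exactly the definition of $C_{\varepsilon,\delta,\No}$; summing over the $R^{*} \le \No$ indices of $\calS^{*}$ and taking expectation over $\bm X$ finishes (a). I expect the monotonicity reduction to be the crux: without it one would need a union bound over all (task, iteration) pairs, paying an extra factor $\No$ that the stated $C$ is not large enough to absorb. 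A minor nuisance is reconciling the threshold $qR^{*}/\No + C - 1$ with $C$ and the integer-rounding of $M_r$, which I would absorb into the slack already present in the Bernstein estimate.

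Part (b) is a deterministic count, uniform over realizations. Let $\No = r_1 > r_2 > \cdots > r_T \ge 0$ be the successive sizes $|\calS^1|,\dots,|\calS^T|$ produced by \Cref{algo:permutation} (they strictly decrease until convergence), with the convention $M_{r_0} := 0$. At iteration $t$ each of the $r_t$ selected tasks receives $M_{r_t} - M_{r_{t-1}}$ fresh permutations and one test-statistic evaluation each, so the total is $\sum_{t=1}^T r_t(M_{r_t} - M_{r_{t-1}})$. Regrouping by the last iteration in which each task is selected — a task deselected at step $t<T$ ends with $M_{r_t}$ permutations and there are $r_t - r_{t+1}$ such tasks, while the $r_T$ survivors end with $M_{r_T}$ — this equals $r_T M_{r_T} + \sum_{t=1}^{T-1}(r_t - r_{t+1})M_{r_t}$, which after substituting $M_{r_t} = C\No/(qr_t)$ becomes $\tfrac{C\No}{q}\bigl(\1_{\{r_T \ge 1\}} + \sum_{t=1}^{T-1}\tfrac{r_t - r_{t+1}}{r_t}\bigr)$. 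Since the intervals $(r_{t+1}, r_t]$ partition $(r_T, \No]$ and $1/r_t \le 1/j$ for $j \le r_t$, the sum is at most $\sum_{j=r_T+1}^{\No} 1/j$, so the grand total is at most $\tfrac{C\No}{q}\sum_{j=1}^{\No}\tfrac1j \le \tfrac{C\No}{q}(\log\No + 1)$ — the claimed bound, which holds for every admissible decreasing sequence and manifestly does not depend on the per-hypothesis sample sizes $N_i$.
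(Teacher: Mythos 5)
Your proof is correct and follows essentially the same route as the paper's: in part (a) both arguments collapse the union of deselection events over all running-set sizes into a single Bernstein/Chernoff bound per task by exploiting monotonicity of the partial permutation counts against the constant threshold $qrM_r/\No = C_{\varepsilon,\delta,\No}$ (you anchor at $r = R^* = |\calS^\infty(q/(1+\delta))|$ after conditioning on $\bm X$, the paper at the per-task index $\kappa_i \le R^*$), with the same calibration of $C_{\varepsilon,\delta,\No}$ and the same $O(1)$ slack absorbing the ``$-1$'' via the hypotheses $\No\ge 2$, $\varepsilon\le 0.5$, $\delta\le 1$. Part (b) is the identical Abel-summation/harmonic-sum count, so there are no substantive differences.
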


We use $M_r$ to denote $M_{r,q,\varepsilon, \delta, \No}$ for notation simplicity.
By the definition~\eqref{eq:permutation.p.value} of permutation p-values, the minimum permutation p-value with $M_r$ evaluations is given by
\begin{align*}
    \frac{1}{1+M_r}
    = \frac{1}{C_{\varepsilon, \delta, \No}} \cdot \frac{rq}{\No}.
\end{align*}
When there are currently $r$ tasks selected, the rejection threshold is $rq/\No$, and at least $M_r$ (up to constant) permutations and test statistic evaluations are required to produce an approximating permutation p-value that can possibly fall below the threshold and result in the associated task remaining selected, which motivates our selection of $M_r$.

\Cref{prop:permutation.power} demonstrates that a total of $O(\No \log^2(\No))$ test statistic computations is sufficient to produce a selected set containing the oracle counterpart at a slightly reduced level $q/(1+\delta)$.
In contrast, the exact $P_i^{\infty}$ requires $\sum_{i = 1}^\No N_i!$ test statistic evaluations in total.
In \Cref{exam:permutation.no.lower.bound} of the supplementary file, we use anti-concentration bounds to prove that a slightly smaller number $O(\No \log(\No))$ of test statistic computations are necessary to recover $\calS^\infty(q/(1+\delta))$, matching our upper bound $O(\No \log^2(\No))$ up to a logarithmic term.
\cite{zhang2019adaptive} propose an algorithm which aims to recover the BH rejection set based on a pre-specified $M$ permutations.
The algorithm is shown to be of complexity $O(\No\sqrt{M})$ averaging over the generation of $P_i^{\infty}$ and the worst case $O(\No M)$.
As the authors remarked, $M$ is typically of at least the same order as $\No$, which gives the average rate $O(\No^{3/2})$ and the worst case $O(\No^2)$.


\subsection{Real data analysis: prostate cancer data}\label{sec:cancer.data}

We apply the BH procedure combined with permutation tests to the prostate cancer dataset \parencite{singh2002gene, efron2012large, efron2021computer}.
The dataset comes from a microarray study involving $102$ men, consisting of $52$ prostate cancer patients and $50$ normal controls.
In the downstream hypothesis testing, we treat the cancer status as fixed.
A total of $\No = 6033$ gene expression levels are measured for each participant.
The goal is to test for each gene the null hypothesis that the distribution of the gene's expression within the cancer group is identical to that of the normal group.

We consider three approaches: the first one using the p-values obtained from a parametric model following \textcite[Chapter 15.2]{efron2021computer}. The other two are based on permutation tests:
our proposed accelerated approach (\Cref{algo:permutation}), and the standard approach where a pre-fixed number of permutations and test statistic evaluations are carried out independently for each gene.
The FDR level is fixed at $q = 0.1$. 

The BH procedure combined with parametric p-values yields $28$ rejections.
Our proposal (hyper-parameters $\delta = 0.3$ and $\varepsilon = 0.2$) consumes an average of $894$ permutations and test statistic computations per gene and results in $29$ rejections containing all the $28$ genes discovered using the parametric p-values.
In contrast, according to \Cref{fig:permutation.number}, the standard approach with no more than $2000$ permutations per gene yields no rejections; the standard approach with $2400$ permutations per gene yields $28$ rejections, containing only $22$ out of the $28$ genes discovered using the parametric p-values. In the supplementary file, we evaluate our proposal with different hyper-parameters (\Cref{fig:permutation.number.hyperparameter}).

\begin{figure}[tbp]
        \centering
        \begin{minipage}{0.4\textwidth}
                \centering
                \includegraphics[clip, trim = 0.5cm 0cm 0cm 0cm, height = 6cm]{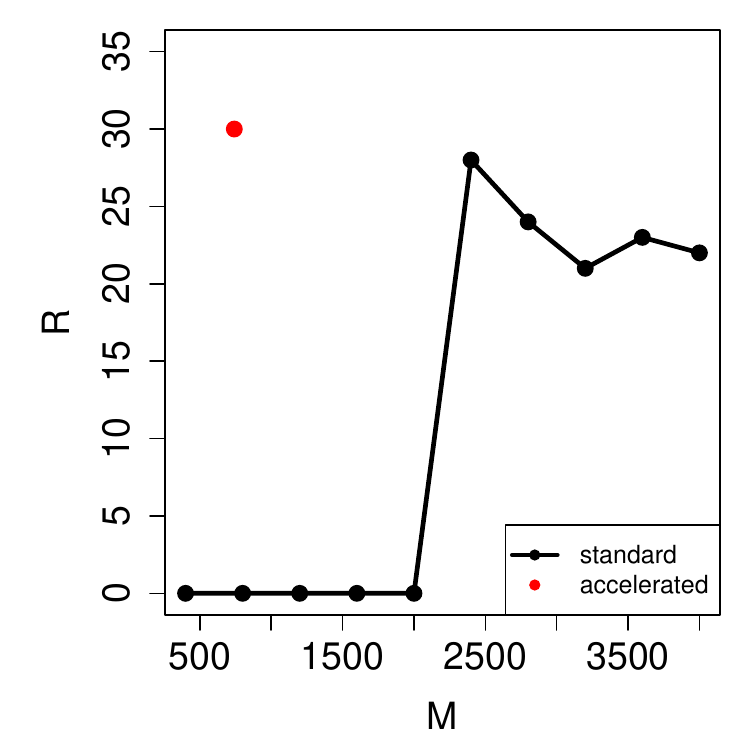}
        \end{minipage}
        \caption{
        \Cref{algo:permutation} applied to the prostate cancer data.
        We display the number of rejections obtained by the standard approach of combining the BH procedure with permutation tests with various numbers of permutations per gene $M \in [400, 800, \ldots, 4000]$. We superimpose the result (dot in the left upper corner) of \Cref{algo:permutation} ($\varepsilon = 0.2$, $\delta = 0.3$).
        }
        \label{fig:permutation.number}
\end{figure}

\section{Discussion}\label{sec:discussion}

We have developed a general and constructive approach to control a
class of selective risks. This approach allows us to unify many
existing methods, provide new insights about these methods, and
develop novel procedures for new risks in the paper. 

Stability for selection in \Cref{defi:stable} is developed for
independent p-values. \textcite{blanchard2008two} introduce a
dependency control condition that can handle a large class of
dependent distributions, but the condition is tailored for the loss
function and decision rule of the multiple testing problem. It would
be of interest to 
develop procedures that work for a wide range of loss functions and dependencies of the data across tasks.

In \Cref{sec:FDR.curve}, it is shown that the BH procedure applied
to p-values computed at a single null may provide simultaneous FDR
control at other locations, too. The FDR curve $1 - q(c)$ may be viewed as
a kind of upper bound on the cumulative distribution function of the
location parameters $\theta_i,~i\in \mathcal{S}$ for the
discoveries. This bears a close conceptual relation to the empirical
Bayes approach in \textcite{efron2012large}. To develop practical
guidance on the choice of the FDR curve $q(c)$, one can put a prior
distribution on $\theta_i$ and use the data to get an empirical
estimate of this prior distribution. Relatedly, it is well known that the actual FDR level of the BH procedure is generally lower than the nominal level, and this can be corrected using a null proportion estimator \parencite{storey2002direct,gao2024adaptive}. The same conservativeness apply to the FDR curves in \Cref{prop:q(c).free}, and it would be useful to use a similar correction.

\if\showComment1
\qz{Given that \Cref{algo:extra.selective.risk} is applicable to a wide range of loss functions, strongly selective procedures, and decision rules, it grants the flexibility to formulate user-specific strategies for tailored risk measures, such as FCR, and incorporate specific structural constraints.
\[
  R_i^{\mathrm{cond}}(d_i, \bm s, \theta) := \EE_{\theta}\left[
    \ell_i(D_i,\theta) \mid \bm S \right].
\]
Note that $R_i^{\mathrm{cond}}$ is a function of $\bm S$ and thus a
random quantity. This is much stronger than the ``average'' risk in
\eqref{eq:selective.risk}, as $R_i^{\mathrm{cond}}(d_i, \bm s, \theta)
\leq q$ for all $i$ immediately implies that $r(\bm d, \bm s,
\theta) \leq q$.
}
\fi


\if\blind0
\section*{Acknowledgement}
QZ is partly supported by the EPSRC grant EP/V049968/1.
The authors would like to thank Asaf Weinstein, Ruth Heller, and Richard Samworth for their helpful suggestions.
\fi

\printbibliography
\clearpage
\appendix

\begin{center}
    \LARGE\bf {Supplementary Materials of ``A constructive approach to selective risk control''}
\end{center}
\normalsize


This supplementary document provides additional definitions and results (\Cref{appe:sec:stable}), examples (\Cref{appe:sec:example}), and numerical results (\Cref{appe:sec:table}). Technical proofs can be found in \Cref{appe:sec:proof}.

\section{Additional definitions, algorithms, and results}\label{appe:sec:stable}

\subsection{Stable selection}

In \Cref{coro:extra.selective.risk}, it is required that the extra selection by $\tilde{\bm s} \circ \tilde{\bm
  d}$ is stable. The next result shows that this is true if the decision and selection strategies for each task are based on non-overlapping data.

\begin{proposition}\label{sufficient condition for extended stable rule}
    Suppose $\tilde{d}_i$ only depends on $\bm X$ through $X_i$ and
    $\tilde{s}_i$ only depends on $D_i$ and $X_i$. Then $\tilde{\bm
      s}\circ \tilde{\bm d}$ is stable.
\end{proposition}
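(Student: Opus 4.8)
The plan is to verify \Cref{defi:stable} directly for the extended selection rule $\bm s := \tilde{\bm s}\circ\tilde{\bm d}$. By the definition of stability for an extended rule, it suffices to fix an arbitrary $\bm w\in\{0,1\}^{\No}$ and show that the ordinary selection rule $\bm X\mapsto \bm s(\bm w,\bm X) = \tilde{\bm s}\bigl(\tilde{\bm d}(\bm w,\bm X),\bm X\bigr)$ is stable, i.e.\ that $\bm S_{-i}\perp\!\!\!\!\perp_v X_i\mid S_i=1,\bm X_{-i}$ holds for every $i\in[\No]$, writing $\bm S = \bm s(\bm w,\bm X)$.

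The crux is a coordinatewise factorization that I would establish first. Because $\tilde d_j$ depends on $\bm X$ only through $X_j$, with $\bm w$ held fixed the $j$-th decision can be written $D_j = \delta^{\bm w}_j(X_j)$ for some map $\delta^{\bm w}_j$; and because $\tilde s_j$ depends only on $D_j$ and $X_j$, the $j$-th selection indicator becomes $S_j = \tilde s_j\bigl(\delta^{\bm w}_j(X_j),X_j\bigr) =: \sigma^{\bm w}_j(X_j)$, a measurable function of $X_j$ alone. In particular, for each $j\ne i$ the coordinate $S_j$ is a function of $X_j$, and $X_j$ is a component of $\bm X_{-i}$.

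From this I would conclude that $\bm S_{-i} = \bigl(\sigma^{\bm w}_j(X_j)\bigr)_{j\ne i}$ is a deterministic function of $\bm X_{-i}$ with no dependence on $X_i$ whatsoever. Hence, upon conditioning on a value of $\bm X_{-i}$ (and, a fortiori, additionally on the event $\{S_i=1\}$, which constrains $X_i$ only), $\bm S_{-i}$ is constant; its range is a singleton that does not vary with $X_i$, which is exactly the variational independence required by \Cref{defi:stable}. Since $i$ and $\bm w$ were arbitrary, $\tilde{\bm s}\circ\tilde{\bm d}$ is stable.

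I do not anticipate a genuine obstacle: the statement is essentially a bookkeeping consequence of the non-overlapping-data hypothesis. The one place to be slightly careful is the conditioning on $\{S_i=1\}$, which does involve $X_i$; this is harmless precisely because $\bm S_{-i}$ carries no dependence on $X_i$, so restricting the sample space to $\{S_i=1\}$ cannot change the (singleton) range of $\bm S_{-i}$. If instead the strategies were randomized, the same argument would go through with ``deterministic function of $\bm X_{-i}$'' replaced by ``conditional law depending on $\bm X_{-i}$ only'', provided the auxiliary randomness is drawn independently across tasks.
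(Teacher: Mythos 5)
Your argument is correct and is essentially the paper's own proof: both reduce the claim to the observation that, with $\bm w$ fixed, each coordinate $s_j(\bm w,\bm X)=\tilde s_j(\tilde d_j(\bm w,X_j),X_j)$ is a function of $X_j$ alone, so $\bm S_{-i}$ is determined by $\bm X_{-i}$ and the required variational independence is immediate. Your extra remark about why conditioning on $\{S_i=1\}$ is harmless is a correct (and slightly more careful) elaboration of the same point.
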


We introduce some weaker definitions of stable selection rules, and extend \Cref{prop:composite}.
\begin{definition}[{Stable selection rule}]\label{defi:stable}
  Consider a deterministic selection rule $s: \calX \to \{0,1\}^m$ and let $\bm S = \bm s(\bm X)$ and the
  selected set $\calS $ be defined as in \Cref{sec:non-pre-post}. We
  say $s$ is
  \begin{enumerate}
  \item [(a)] \emph{stable}, if $\bm S_{-i} \perp\!\!\!\!\perp_v X_i \mid S_i = 1,
    \bm X_{-i}$ for all $i =1,\ldots,m$; 
  \item [(b)] 
 \emph{weakly stable}, if $|\bm S_{-i}| \perp\!\!\!\!\perp_v X_i \mid S_i = 1,
    \bm X_{-i}$ for all $i =1,\ldots,m$;
  \end{enumerate}
Furthermore, when $s$ is a stochastic function of $\bm X$, we say
  \begin{enumerate}[resume]
      \item [(c)] \emph{stochastically stable}, if $\bm S_{-i} \independent X_i \mid
    S_i = 1, \bm X_{-i}$ for all $i =1,\ldots,m$;
  \item [(d)] 
 \emph{weakly stochastically stable}, if $|\bm S_{-i}|
    \independent X_i \mid S_i = 1, \bm X_{-i}$ for all $i =1,\ldots,m$.
  \end{enumerate}
  For an extended selection rule $\bm s:\{0,1\}^m \times \calX \to \{0,1\}^m$, we say it is (weakly) stable
  if $\bm s(\bm w,\cdot)$ is (weakly) stable for every $\bm w \in
  \{0,1\}^{\No}$.
\end{definition}

The relations between these definitions can be found in \Cref{fig:stability} where $\Rightarrow$ means ``implies''.

\begin{figure}[h]
  \centering
\begin{tikzcd}
  \text{stable} \arrow[r,Rightarrow,start anchor={[xshift=5ex]},end anchor={[xshift=-5ex]}] \arrow[d,Rightarrow] &
  \text{weakly stable} \arrow[d,Rightarrow] \\
  \text{stochastically stable} \arrow[r,Rightarrow] & \text{weakly stochastically stable}
\end{tikzcd}
\caption{Relations between different notions of stable selection in \Cref{defi:stable}.}
\label{fig:stability}
\end{figure}
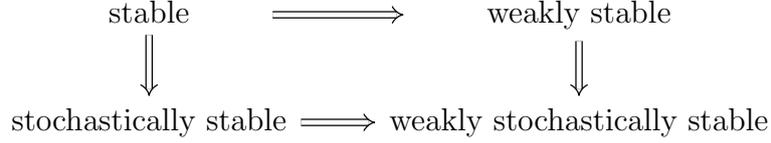

\begin{proposition}[Extension of \Cref{prop:composite}]\label{prop:composite.extended}
For composition and intersection defined above, the following holds:
  \begin{itemize}
      \item[(1)] (Composition of stable selection rules) Let $\bm s:\calX \to \{0,1\}$ be a selection rule and $\bm s':\{0,1\}^m \times
  \calX \to \{0,1\}^m$ be an extended selection rule such that $\bm s'(\bm w; \bm X) \preceq \bm w$ for all $\bm w \in \{0, 1\}^m$. If $\bm s'$ is (weakly) stable and  $\bm s$ is stable, then $\bm s' \circ \bm s$ is (weakly) stable.
  \item[(2)]  (Intersection of stable selection
  rules) Let $\bm s$ and $\bm s':\calX \to \{0,1\}^m$ be two selection rules.
    If $\bm s'$ is (stochastically) stable, and $\bm s$ is (stochastically) stable, then $\bm s \cap \bm s'$ is (stochastically) stable.
  \end{itemize}
  
  \end{proposition}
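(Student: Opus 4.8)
The plan is to reuse the structure of the argument for the plain ``stable'' case (\Cref{prop:composite}), while (i) tracking only the cardinality $|\bm S_{-i}|$ instead of the full vector $\bm S_{-i}$ to get the ``weakly stable'' conclusion in part (1), and (ii) replacing ``constant as a function of $x_i$'' by ``conditionally independent of $X_i$'' to get the ``stochastically stable'' conclusion in part (2). In both parts I would fix an index $i$ and a value $\bm x_{-i}$ of $\bm X_{-i}$ and argue conditionally on $\bm X_{-i} = \bm x_{-i}$.

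For part (1), I would set $\bm S = \bm s(\bm X)$, $\bm S'' = \bm s'(\bm S, \bm X)$, and $A_i = \{x_i : S''_i = 1\}$. Since $\bm s'(\bm w; \bm X) \preceq \bm w$, we have $S_i = 1$ on $A_i$, so stability of $\bm s$ forces $\bm s_{-i}(x_i, \bm x_{-i})$ to be constant as $x_i$ ranges over $\{x_i : S_i = 1\} \supseteq A_i$; together with $S_i = 1$ this means $\bm S$ equals one fixed vector $\bm w^\star \in \{0,1\}^m$ for every $x_i \in A_i$. Hence on $A_i$ we have $\bm S'' = \bm s'(\bm w^\star, (x_i, \bm x_{-i}))$ with $i$-th coordinate $1$, and applying the stability (resp.\ weak stability) of $\bm s'(\bm w^\star, \cdot)$ at index $i$ shows that $\bm s'_{-i}(\bm w^\star, (x_i, \bm x_{-i}))$ (resp.\ its cardinality) is constant over $\{x_i : s'_i(\bm w^\star, (x_i, \bm x_{-i})) = 1\} \supseteq A_i$. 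Thus $\bm S''_{-i}$ (resp.\ $|\bm S''_{-i}|$) is constant over $A_i$, which is (weak) stability of $\bm s' \circ \bm s$; the ``stable'' case recovers \Cref{prop:composite}(1). The crucial point is that $\bm s$ must be \emph{fully} stable, not merely weakly stable: only then does $\bm S$ collapse to the single vector $\bm w^\star$, which is what lets me afterwards invoke the stability of $\bm s'(\bm w, \cdot)$ for a \emph{fixed} argument $\bm w$.

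For part (2), the deterministic ``stable'' case is \Cref{prop:composite}(2); for the stochastically stable case I would use that the randomizations underlying $\bm s$ and $\bm s'$ are independent of each other and of $\bm X$. Setting $\bm S = \bm s(\bm X)$, $\bm S' = \bm s'(\bm X)$, $\bm S'' = \bm S \cap \bm S'$, we have $S''_j = S_j S'_j$ and $\{S''_i = 1\} = \{S_i = 1\} \cap \{S'_i = 1\}$. Conditioning first on $X_i = x_i$ (and $\bm x_{-i}$), independence of the two randomizations makes $(\bm S_{-i}, S_i)$ and $(\bm S'_{-i}, S'_i)$ conditionally independent, so the conditional law of $(\bm S_{-i}, \bm S'_{-i})$ given $X_i = x_i$, $S_i = 1$, $S'_i = 1$ factorizes into the conditional law of $\bm S_{-i}$ given $X_i = x_i$, $S_i = 1$ times that of $\bm S'_{-i}$ given $X_i = x_i$, $S'_i = 1$. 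Stochastic stability of $\bm s$ and of $\bm s'$ makes each factor free of $x_i$, hence so is the product; therefore $(\bm S_{-i}, \bm S'_{-i}) \independent X_i \mid S_i = 1, S'_i = 1, \bm X_{-i}$, and since $\bm S''_{-i}$ is the element-wise product of $\bm S_{-i}$ and $\bm S'_{-i}$, we conclude $\bm S''_{-i} \independent X_i \mid S''_i = 1, \bm X_{-i}$, i.e.\ $\bm s \cap \bm s'$ is stochastically stable.

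The main obstacle is not the mathematics but getting the conditioning right. In part (1) the trap is believing weak stability of $\bm s$ would suffice, when in fact the first argument of $\bm s'$ would then still vary with $x_i$. In part (2) the event $\{S''_i = 1\}$ couples the two rules through the single variable $X_i$, so one has to condition on $X_i$ \emph{before} using independence of the randomizations and only afterwards fold in the two stochastic-stability statements; conditioning in the wrong order is the natural way to produce a broken proof.
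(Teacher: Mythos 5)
Your proposal is correct and follows essentially the same route as the paper's proof: part (1) uses the contracting property to reduce $\tilde S_i=1$ to $S_i=1$, invokes full stability of $\bm s$ to freeze the intermediate selection at a single vector, and then applies (weak) stability of $\bm s'$ at that fixed first argument; part (2) is the same conditional factorization of the joint law of $(\bm S_{-i},\bm S'_{-i})$ given $S_i=S'_i=1$ and $\bm X$ that the paper computes explicitly. The subtleties you flag (full rather than weak stability of $\bm s$ in (1), and conditioning on $\bm X$ before exploiting independence of the randomizations in (2)) are exactly the points the paper's argument relies on.
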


\subsection{Adjustment function for dependent data}

As promised in \Cref{sec:pre-post}, we state a result that gives post-selective risk control with arbitrarily dependent data using the more conservative adjustment function in \textcite{benjamini2001control}; see also \textcite[Theorem 4]{benjamini2005false}.

\begin{theorem}[]\label{prop:arbitrary dependence}
    Suppose
\begin{itemize}
    \item[(i)] $d_i$ controls the $\ell_i$-risk, i.e.\ $r_i(d_{i}(\cdot; q), \theta)
    \le q$, $\forall~q \in [0,1]$;
    \item[(ii)] $\ell_i\left(d_i(\cdot; q), \theta \right)$ is non-decreasing in $q$.
\end{itemize}
Then \Cref{algo:post.selective.risk} with the adjustment function $f(m)
= m \sum_{i=1}^{m} (1/i)$ controls the post-selection risk, that is,
\begin{align}\label{eq:risk.selected}
    r^{\text{post}}(\bm d^1, \theta)
    = \EE_{\theta}\left[\frac{\sum_{i \in \calS^1} \ell_i\left(D_i^1, \theta\right)}{1 \vee |\calS^1|}\right]
    \le q, ~\forall~\theta \in \Theta ~\text{and}~ q \in [0,1].
\end{align}
\end{theorem}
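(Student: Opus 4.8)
The plan is to recycle the arbitrary-dependence (``$H_m$-inflation'') argument of \textcite{benjamini2001control} for FDR control (see also \textcite[Theorem 4]{benjamini2005false}) and transplant it to the general compound-decision loss framework. Write $H_m := \sum_{i=1}^m 1/i$, so that $f(m) = m H_m$, and fix $\theta$. For $t \in [0,1]$ set $\ell_i(t) := \ell_i(d_i(\bm X; t), \theta)$, a random variable depending measurably on $\bm X$; by hypothesis (ii) each $\ell_i(\cdot)$ is non-decreasing, and by hypothesis (i) (which is exactly $r_i(d_i(\cdot;t),\theta)\le t$) we have $\EE_\theta[\ell_i(t)] \le t$. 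Since $q|\calS^1|/f(m) \le qm/(mH_m) = q/H_m \le 1$, every level fed to a decision rule below lies in $[0,1]$, so (i) is always applicable. The first step is to rewrite the post-selection risk as a double sum over tasks $i$ and over the realized cardinality $k$ of $\calS^1$: on the event $\{|\calS^1| = k\}$ we have $D_i^1 = d_i(\bm X; qk/f(m))$ and, whenever $i\in\calS^1$, $1\vee|\calS^1| = k$, so
\[
  \frac{\sum_{i\in\calS^1}\ell_i(D_i^1,\theta)}{1\vee|\calS^1|}
  \;=\; \sum_{i=1}^m \sum_{k=1}^m \frac{1}{k}\,\1_{\{i\in\calS^1,\,|\calS^1|=k\}}\,\ell_i\!\Big(\tfrac{qk}{f(m)}\Big).
\]

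The key step is a telescoping rearrangement of the $1/k$ weights. Using the elementary identity $\frac1k = \sum_{j=k}^m \frac{1}{j(j+1)} + \frac{1}{m+1}$ for $1\le k\le m$, together with the monotonicity bounds $\ell_i(qk/f(m)) \le \ell_i(qj/f(m))$ for $j\ge k$ and $\ell_i(qk/f(m)) \le \ell_i(qm/f(m))$ (both from (ii)), I would swap the order of the $j$- and $k$-summations and bound the remaining indicator sum $\sum_{k=1}^j \1_{\{i\in\calS^1,|\calS^1|=k\}}=\1_{\{i\in\calS^1,\,|\calS^1|\le j\}}\le 1$, obtaining the pointwise inequality
\[
  \sum_{k=1}^m \frac{1}{k}\,\1_{\{i\in\calS^1,\,|\calS^1|=k\}}\,\ell_i\!\Big(\tfrac{qk}{f(m)}\Big)
  \;\le\; \sum_{j=1}^m \frac{1}{j(j+1)}\,\ell_i\!\Big(\tfrac{qj}{f(m)}\Big) + \frac{1}{m+1}\,\ell_i\!\Big(\tfrac{qm}{f(m)}\Big).
\]
The crucial feature is that the right-hand side no longer involves the selection event at all; this is why the argument needs neither independence across tasks nor stability of $\bm s^1$, in contrast to \Cref{theo:post.selective.risk}: we have simply dominated each selected loss by a larger, $k$-deterministic quantity and then discarded the indicators.

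Taking expectations and applying (i) term by term, $\EE_\theta[\ell_i(qj/f(m))]\le qj/f(m)$, the right-hand side becomes
\[
  \frac{q}{f(m)}\Big(\sum_{j=1}^m \frac{1}{j+1} + \frac{m}{m+1}\Big)
  = \frac{q}{f(m)}\,H_m = \frac{q\,H_m}{m\,H_m} = \frac{q}{m},
\]
where I used $\sum_{j=1}^m \frac1{j+1} = H_{m+1}-1$ and $\frac{m}{m+1} = 1 - \frac1{m+1}$, so the two sums collapse to $H_{m+1} - \frac1{m+1} = H_m$; this is precisely the algebraic identity that the choice $f(m) = mH_m$ is calibrated to produce. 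Summing the resulting bound $q/m$ over $i = 1,\dots,m$ yields $r^{\text{post}}(\bm d^1,\theta) \le q$, which is \eqref{eq:risk.selected}.

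There is no genuine analytic difficulty here: the whole content is the combinatorial identity plus careful bookkeeping of the order of summation. The only points needing a little care are (a) checking $qk/f(m)\in[0,1]$ so that (i) applies; (b) the fact that $i\in\calS^1$ forces $1\vee|\calS^1| = |\calS^1|$, making the double-sum decomposition exact; and (c) a routine measurability remark that $\bm X\mapsto D_i^1$ is measurable as a composition. Conceptually, the one thing to appreciate is that the $H_m$-inflation of the adjustment function is exactly what permits the loss of a selected task to be dominated by $k$-deterministic quantities irrespective of how the selection depends on the data --- which is the mechanism of the original arbitrary-dependence FDR proof.
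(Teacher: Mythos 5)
Your proof is correct and follows essentially the same route as the paper's: both are the Benjamini--Yekutieli $H_m$-inflation argument, conditioning on $|\calS^1|=k$, exploiting monotonicity of the loss in the level, and applying the individual risk bound at the deterministic levels $qj/f(m)$ to land on the same quantity $\tfrac{q}{f(m)}\sum_{j=1}^m \tfrac1j = \tfrac{q}{m}$ per task. The only (cosmetic) difference is that you decompose the weight $\tfrac1k=\sum_{j\ge k}\tfrac{1}{j(j+1)}+\tfrac{1}{m+1}$ directly, whereas the paper telescopes the loss increments and then sums by parts --- these are Abel-summation mirror images producing the identical intermediate bound (your version has the minor advantage of not needing the observation that $\ell_i(d_i(\cdot;0),\theta)=0$).
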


\subsection{Multiple selective risks}\label{appe:sec:multiple}

\begin{algorithm}[H]
  \caption{Parallel intersection method for multiple
    extra-selection risk control}\label{algo:multiple extra.selective.risk.parallel}
    \begin{algorithmic}
         \STATE \textbf{Input}: data $\bm X$, first selection rule $\bm s^1:\calX \to \{0,1\}^m$,  
         selection strategies $\tilde{\bm s}_c: \calD \times \calX \to \{0,1\}^m$, decision strategies $\tilde{\bm d_c}:  \{0,1\}^m \times \calX \to \calD$, $c \in [k]$.\\

                 \STATE $t \leftarrow 1$ \\
         \STATE $\bm S^1 \leftarrow \bm s^1(\bm X)$ \\
        \REPEAT
        \FOR{$c \leftarrow 1$ \TO $k$}
        \STATE $\bm D_c^t \leftarrow \tilde{\bm d}_c(\bm S^{t}, \bm X)$ \\
        \STATE $\bm S_{c}^{t+1} \leftarrow \tilde{\bm s}_c(\bm D_c^t, \bm X)$
        \\
        \ENDFOR
        \STATE $\bm S^{t+1} \leftarrow \cap_{c \in [k]} \bm S_c^{t+1}$        
        \STATE $t \leftarrow t+1$ \\
        \UNTIL $\bm S^{t} = \bm S^{t-1}$
      \STATE $T \leftarrow t - 1$
      \STATE \textbf{Output}: $\bm D^T$, $\calS^{T+1}$.
    \end{algorithmic}
\end{algorithm}

\begin{algorithm}[H]
  \caption{Sequential composition method for multiple
    extra-selection risk control.
}\label{algo:multiple extra.selective.risk}
    \begin{algorithmic}
         \STATE \textbf{Input}: data $\bm X$, first selection rule $\bm s^1:\calX \to \{0,1\}^m$,  
         selection strategies $\tilde{\bm s}_c: \calD \times \calX \to \{0,1\}^m$, decision strategies $\tilde{\bm d_c}:  \{0,1\}^m \times \calX \to \calD$, $c \in [k]$.\\

        \STATE $t \leftarrow 1$ \\
         \STATE $\bm S^1 \leftarrow \bm s^1(\bm X)$ \\
        \REPEAT
        \STATE $\bm S^{'t+1}_{0} \leftarrow \bm S^t$
        \FOR{$c \leftarrow 1$ \TO $k$}
        \STATE $\bm D_c^t \leftarrow \tilde{\bm d}_c(\bm S_{c-1}^{'t+1}, \bm X)$ \\
        \STATE $\bm S_{c}^{'t+1} \leftarrow \tilde{\bm s}_c(\bm D_c^t, \bm X)$
        \\
        \ENDFOR
        \STATE $\bm S^{t+1} \leftarrow \bm S_k^{'t+1}$        
        \STATE $t \leftarrow t+1$ \\
        \UNTIL $\bm S^{t} = \bm S^{t-1}$
      \STATE $T \leftarrow t - 1$
      \STATE \textbf{Output}: $\bm D^T$, $\calS^{T+1}$.
    \end{algorithmic}
\end{algorithm}

The next result shows that the parallel intersection procedure (\Cref{algo:multiple extra.selective.risk.parallel}) and the sequential composition procedure (\Cref{algo:multiple extra.selective.risk}) generally converge to the same selected set and decisions. 

\begin{proposition}\label{prop:iterate.equivalence}
Suppose for every $c \in [k]$, the extended selection rule $\tilde{\bm
  s}_c\circ \tilde{\bm d}_c:\{0,1\}^m \times \calX \to \{0,1\}^m$ is
contracting in the sense of \eqref{eq:contracting.selection} and
increasing in the sense that
    \begin{equation}
      \label{eq:increasing.selection}
      \tilde{\bm s}_c \circ \tilde{\bm d}_c(\bm S) \preceq \tilde{\bm s}_c
      \circ \tilde{\bm d}_c(\bm S') ~\text{with probability 1 for
        all}~\bm S, \bm S' \in \{0,1\}^m~\text{such that}~\bm S \preceq
      \bm S'.
    \end{equation}
Then the parallel intersection and sequential composition procedures
converge to the same selected set and decisions.
\end{proposition}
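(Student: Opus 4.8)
The plan is to fix a data realization $\bm X$ (the hypotheses are assumed to hold with probability one, so it suffices to argue pointwise in $\bm X$) and abbreviate $g_c := \tilde{\bm s}_c \circ \tilde{\bm d}_c(\cdot, \bm X) : \{0,1\}^m \to \{0,1\}^m$, each of which is \emph{contracting} ($g_c(\bm S) \preceq \bm S$, by \eqref{eq:contracting.selection}) and \emph{monotone} ($\bm S \preceq \bm S' \Rightarrow g_c(\bm S) \preceq g_c(\bm S')$, by \eqref{eq:increasing.selection}). I would write $G_{\mathrm{par}}(\bm S) = \bigcap_{c=1}^{k} g_c(\bm S)$ for the one-round map of \Cref{algo:multiple extra.selective.risk.parallel} and $G_{\mathrm{seq}} = g_k \circ \cdots \circ g_1$ for that of \Cref{algo:multiple extra.selective.risk}; both iterations start from $\bm S^1 = \bm s^1(\bm X)$. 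A routine check shows that $G_{\mathrm{par}}$ and $G_{\mathrm{seq}}$ are themselves contracting and monotone (coordinate-wise intersection and composition each preserve both properties), so, exactly as in the proof of \Cref{theo:extra.selective.risk}, iterating either map from $\bm S^1$ on the finite poset $\{0,1\}^m$ reaches a fixed point in at most $m+1$ rounds; call these limits $\bm A$ (parallel) and $\bm B$ (sequential).

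The first key step is to identify the fixed-point sets: I claim $\mathrm{Fix}(G_{\mathrm{par}}) = \mathrm{Fix}(G_{\mathrm{seq}}) = \calF := \bigcap_{c} \mathrm{Fix}(g_c)$. The inclusion $\calF \subseteq \mathrm{Fix}(G_{\mathrm{par}}) \cap \mathrm{Fix}(G_{\mathrm{seq}})$ is immediate. For the converse, if $G_{\mathrm{par}}(\bm S) = \bm S$ then in each coordinate $i$ one has $S_i = \min_c (g_c(\bm S))_i$ while $(g_c(\bm S))_i \le S_i$ for every $c$ by contraction, forcing $(g_c(\bm S))_i = S_i$, i.e.\ $g_c(\bm S) = \bm S$ for all $c$. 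If $G_{\mathrm{seq}}(\bm S) = \bm S$, set $\bm S^{(0)} = \bm S$ and $\bm S^{(j)} = g_j(\bm S^{(j-1)})$; contraction gives the chain $\bm S^{(k)} \preceq \cdots \preceq \bm S^{(0)}$ with equal endpoints $\bm S^{(k)} = \bm S = \bm S^{(0)}$, so all the $\bm S^{(j)}$ coincide and hence $g_j(\bm S) = \bm S$ for each $j$. Thus $\bm A, \bm B \in \calF$, and both are $\preceq \bm S^1$.

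The second key step is a monotonicity ``sandwich'': for any $\bm F \in \calF$ with $\bm F \preceq \bm S^1$, every iterate of \emph{either} procedure (started at $\bm S^1$) dominates $\bm F$. I would prove this by induction on the round $t$: the base case is $\bm S^1 \succeq \bm F$, and if the current selected set $\bm S$ satisfies $\bm S \succeq \bm F$, then monotonicity of each $g_c$ together with $g_c(\bm F) = \bm F$ gives $g_c(\bm S) \succeq \bm F$, a property that survives taking the intersection $\bigcap_c g_c(\bm S)$ (parallel case) and the iterated composition $g_k(\cdots g_1(\bm S))$ (sequential case, peeling off one $g_c$ at a time, since every intermediate set still dominates $\bm F$). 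Applying this with $\bm F = \bm B$ yields $\bm S^t_{\mathrm{par}} \succeq \bm B$ for all $t$, hence $\bm A \succeq \bm B$; applying it with $\bm F = \bm A$ yields $\bm B \succeq \bm A$. Therefore $\bm A = \bm B$, so the two procedures converge to the same selected set $\bm S^T$.

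Finally, for the decisions: in the parallel procedure $\bm D^T_c = \tilde{\bm d}_c(\bm S^T, \bm X)$ by construction, while in the sequential procedure convergence means $\bm S^{'T+1}_k = \bm S^{'T+1}_0 = \bm S^T$, and since each of the $k$ steps $\bm S^{'T+1}_{c-1} \mapsto g_c(\bm S^{'T+1}_{c-1})$ is contracting, the chain $\bm S^{'T+1}_k \preceq \cdots \preceq \bm S^{'T+1}_0$ with equal endpoints collapses, giving $\bm S^{'T+1}_{c-1} = \bm S^T$ for all $c$ and hence again $\bm D^T_c = \tilde{\bm d}_c(\bm S^T, \bm X)$. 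So the output decisions agree as well. The only mildly delicate points are the two fixed-point-set identifications and the collapse of the sequential sub-steps at convergence, both of which hinge on the same elementary observation that a chain of $\preceq$'s with equal endpoints consists of equalities; everything else is bookkeeping on the finite lattice $\{0,1\}^m$.
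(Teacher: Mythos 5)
Your proof is correct and follows essentially the same route as the paper's: both arguments exploit the contracting and monotone structure of the one-round maps on the finite lattice $\{0,1\}^m$, and both derive the two dominations $\bm S^{\ast}_{\mathrm{seq}} \preceq \bm S^{\ast}_{\mathrm{par}}$ and $\bm S^{\ast}_{\mathrm{par}} \preceq \bm S^{\ast}_{\mathrm{seq}}$ from the observation that the parallel limit is fixed by every $\tilde{\bm s}_c \circ \tilde{\bm d}_c$ (your identification of both fixed-point sets with $\bigcap_c \mathrm{Fix}(g_c)$). The only differences are cosmetic: you obtain $\bm A \succeq \bm B$ by a second application of the same symmetric sandwich lemma rather than the paper's direct one-round comparison $\bm s_{\mathrm{seq}}(\bm S) \preceq \tilde{\bm s}_{\mathrm{par}} \circ \tilde{\bm d}_{\mathrm{par}}(\bm S')$ for $\bm S \preceq \bm S'$, and you spell out why the output decisions coincide, a point the paper leaves implicit.
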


It can be verified that all the selection and decision strategies in
the examples in \Cref{sec:example} satisfy the increasing property
\eqref{eq:increasing.selection}.

As mentioned in \Cref{sec:FDR.curve}, one can modify the BH procedure to control FDR at multiple locations (or a whole FDR curve) using the adjusted p-values in \eqref{eq:p.value.FDR.curve}. Pseudo-code for this modified BH procedure can be found in \Cref{algo:FDR.curve}.

\begin{algorithm}[H]
\caption{Modified BH procedure for controlling the
    FDR curve}\label{algo:FDR.curve}
    \begin{algorithmic}
    \STATE \textbf{Input}: test statistics $\bm X$, p-value function
    $p_i:\mathcal{X}_i \times \mathcal{I} \to [0,1]$, target FDR curve
    $q:\mathcal{I} \to
    \mathbb{R}_{>0}$. \\
    \FOR{$i \leftarrow 1$ \TO $m$}
      \STATE Compute $P_{i,\sup}$ using \eqref{eq:p.value.FDR.curve}.
    \ENDFOR
    \STATE $t \leftarrow 1$ \\
    \STATE $\mathcal{S}^1 \leftarrow \{1,\dots,m\}$ \\
    \REPEAT
        \STATE $\calS^{t+1} \leftarrow \{i \in \calS^{t}: P_{i,\sup} \le |\calS^{t}|/\No\}$ \\
        \STATE $t \leftarrow t+1$ \\
        \UNTIL $\mathcal{S}^{t} = \calS^{t-1}$
      \STATE $T \leftarrow t - 1$
      \STATE \textbf{Output}: $\calS^T$.
    \end{algorithmic}
\end{algorithm}

\section{Additional examples}\label{appe:sec:example}

\begin{example}[FDR control under arbitrary dependence]
This example iterates the decision strategy $\tilde{\bm d}$ as
defined in the for loop in \Cref{algo:post.selective.risk}, which is
motivated by \textcite[Theorem 1.3]{benjamini2001control}. The fact
that this procedure controls the FDR under arbitrary dependence is a
corollary of \Cref{theo:extra.selective.risk} and \Cref{prop:arbitrary
  dependence}. A direct proof of this result was given by
\textcite[p.\ 1182-1183]{benjamini2001control}.
\end{example}

Now we consider examples based on \Cref{coro:extra.selective.risk} for independent individual data,
which requires that the extra selection by $\tilde{\bm s} \circ \tilde{\bm
  d}$ is stable. \Cref{sufficient condition for extended stable rule} provides a useful result that shows this is true if the
decision and selection strategies for each task are based on
non-overlapping data.

\begin{example}[FDR] \label{exam:FDR-proof}
  Continuing from \Cref{exam:FCR} and \Cref{exam:FDR}, the
  FDR-controlling property of the BH procedure with independent
  p-values is a corollary of \Cref{theo:extra.selective.risk},
  \Cref{coro:extra.selective.risk}, and \Cref{sufficient condition for
    extended stable rule}. As noted by
  \textcite{benjamini2005false}, this can further be strengthened to
  control the directional FDR. This is investigated in
  \Cref{exam:directional-fdr} in the Appendix, and the risk controlling
  property almost immediately follows from writing directional FDR as
  an extra-selection risk with the $i$-th decision space being
  $\mathcal{D}_i = \{+, -, \mathrm{nil}\}$ where $\mathrm{nil}$ means
  undecided sign.
\end{example}

\begin{example}[Directional FDR] \label{exam:directional-fdr}

In the context of drug experiments, researchers aim to assess whether the new drug demonstrates effectiveness to unit $i$, denoted as $\theta_i > 0$, or harmful effects, denoted as $\theta_i < 0$.
Using our notations, each task $i$ consists of two directional hypotheses $H_{i}^{-}: \theta \in  \Theta_i^- := \{\theta_i \le 0\}$ and $H_{i}^{+}: \theta \in  \Theta_i^+ := \{\theta_i \ge 0\}$.
The associated decision space contains three elements $\calD_i := \{-,+,\emptyset\}$, where $D_i = +$ indicates a positive $\theta_i$, $D_i = -$ indicates a negative $\theta_i$, and $D_i = \emptyset$ means that the sign of $\theta_i$ is undetermined.
  The loss function is defined as
  \[
    \ell_i(d_i, \theta) =
    \begin{cases}
    1, & \text{if}~\theta \in \Theta_i^-~\text{and}~d_i = + ~\text{or}~ \theta \in \Theta_i^+~\text{and}~d_i = -,\\
    0, & \text{otherwise}.
    \end{cases}
  \]
  The extra-selection risk is the expected fraction of parameters with inaccurately identified signs among those whose signs are declared.
  In particular, we have data $\bm p^-= (p_1^-,\ldots, p_m^-)$, where $P_i^-$ is a p-value for $H_i^-$, and we adopt the decision strategy $\tilde d_i(P_i^-)$ equals $+$ if $1 - P_i^- \le q$, $-$ if $P_i^- \le q$, and $\emptyset$ otherwise, where $q < 0.5$. And $\tilde s_i(\bm D, \bm p^-) = 1$ if and only if $D_i \neq \emptyset$.
  Then \Cref{algo:extra.selective.risk} recovers the directional BH procedure in \parencite{benjamini2005false}.
\end{example}

\begin{example}[Partial conjunction map]\label{exam:partial.conjunction.map}
    We continue with the group structure discussed in
    \Cref{exam:FDR.FWE}. But instead of rejecting individual
    hypotheses, we are satisfied with just obtaining some evidence of
    replicability for the selected groups. One way to formalize this
    is to provide a confidence lower bound for the total number of
    non-nulls, denoted by $n_{1,i}(\theta)$, if group $i$ is
    selected. Let the decision space and loss function be defined,
    respectively, as
    \[
      \calD_i = [n_i] \cup \{0\} \quad \text{and} \quad
      \ell_i(d_i, \theta) = \1_{\{d_i > n_{1,i}(\theta)\}}.
    \]
    Let $P^{(k)}_i$ be a p-value for the partial conjunction null
    hypothesis $H_{i}^{(k)}: n_{1,i}(\theta) < k$. This is usually
    obtained by taking the maximum of all the p-values for testing the
    intersection of any $k$ hypotheses in $H_{ij}, j \in [n_i]$. For
    simplicity, let $P^{(0)}_i = 0$. Iterating the decision strategy
    $\tilde d_i(\bm S, \bm X) = \sup\{k \geq 0: P_{i}^{(k)} \le q |\bm
    S| / m\}$ and selection
    strategy $\tilde s_i(\bm D, \bm X) = \1_{\{D_i \geq 1\}}$, $i \in [m]$ in
    \Cref{algo:extra.selective.risk} recovers the procedure of 
    \textcite[Section 4]{benjamini2008screening}, who called the
    corresponding extra-selection risk ``the FDR for screening at all
    levels''.
\end{example}

Finally, we consider an example controlling multiple selection risks.
\begin{example}[FDR of FWE and FCR of confidence bounds] \label{exam:multiple.risk.confidence.bound}
Consider the multiple testing problem in \Cref{exam:FDR.FWE} 
with a group structure. By
applying the parallel intersection procedure to the decision and
selection strategies defined in those examples, we can obtain a
selected set $\mathcal{S}$ and a lower confidence bound $D_i$ for the
number of non-nulls for each $i \in \mathcal{S}$ such that the
FDR of FWE (a selective risk for $\mathcal{S}$) and the FCR of the
confidence bounds (a selective risk for $d_i,i\in\mathcal{S}$) are
both controlled at desired levels. This is generally not possible if
one simply takes the intersection of the selected sets given by the
procedures in \Cref{exam:FDR.FWE,exam:partial.conjunction.map}.
\end{example}

\section{Additional tables and figures}\label{appe:sec:table}

\begin{table}[h]
\centering
\caption{Data of \Cref{fig:iterate.FCR}. 
The q-value is defined as $Q_i = P_i / (r_i / \No)$, $\No = 20$, where $r_i$ denotes the rank of $P_i$. The BH procedure finds the largest $P_i$ such that $Q_i \le q = 0.3$ and rejects all hypotheses with smaller or equal p-values. We use $U_i^t$ to denote the upper boundary of the one-sided confidence interval for parameter $\theta_i$ in the $t$-th iteration.
}\label{tab:iterative.FCR}
\begin{tabular}{c|c|ccc|c|c}
\toprule
\multirow{2}{*}{Rank} & \multirow{2}{*}{$X_i$} & \multicolumn{3}{c|}{$U_i^t$}                                             & \multirow{2}{*}{$P_i$} & \multirow{2}{*}{$Q_i$} \\
                  &                     & \multicolumn{1}{c}{$t = 1$} & \multicolumn{1}{c}{$t = 2$} & \multicolumn{1}{c|}{$t = 3$} &                     &                     \\ \midrule
1                 & -2.59               & -2.07                & -1.42                & -1.25                & 0.00473             & 0.0947              \\
2                 & -2.16               & -1.63                & -0.982               & -0.816               & 0.0155              & 0.155               \\
3                 & -2.14               & -1.62                & -0.968               & -0.803               & 0.016               & 0.107               \\
4                 & -2.02               & -1.49                & -0.84                & -0.674               & 0.0219              & 0.11                \\
5                 & -1.88               & -1.35                & -0.703               & -0.537               & 0.0302              & 0.121               \\
6                 & -1.68               & -1.15                & -0.504               & -0.339               & 0.0465              & 0.155               \\ \midrule
7                 & -1.1                & -0.575               & 0.0754               & 0.241                & 0.136               & 0.388               \\
8                 & -0.755              & -0.231               & 0.42                 & 0.586                & 0.225               & 0.563               \\ \midrule
9                 & -0.158              & 0.366                & 1.02                 & 1.18                 & 0.437               & 0.971               \\
10                & -0.136              & 0.388                & 1.04                 & 1.2                  & 0.446               & 0.892               \\
11                & -0.0408             & 0.484                & 1.13                 & 1.3                  & 0.484               & 0.88                \\
12                & -0.0293             & 0.495                & 1.15                 & 1.31                 & 0.488               & 0.814               \\
13                & 0.167               & 0.692                & 1.34                 & 1.51                 & 0.566               & 0.871               \\
14                & 0.245               & 0.769                & 1.42                 & 1.59                 & 0.597               & 0.852               \\
15                & 0.499               & 1.02                 & 1.67                 & 1.84                 & 0.691               & 0.921               \\
16                & 0.702               & 1.23                 & 1.88                 & 2.04                 & 0.759               & 0.948               \\
17                & 0.755               & 1.28                 & 1.93                 & 2.1                  & 0.775               & 0.911               \\
18                & 0.779               & 1.3                  & 1.95                 & 2.12                 & 0.782               & 0.869               \\
19                & 1.01                & 1.53                 & 2.19                 & 2.35                 & 0.844               & 0.888               \\
20                & 1.88                & 2.4                  & 3.05                 & 3.22                 & 0.97                & 0.97          \\ \bottomrule
\end{tabular}
\end{table}

\Cref{tab:iterative.FCR} contains the data for \Cref{fig:iterate.FCR} and additional illustration of BH as the fixed-point iteration of BY.


\Cref{fig:permutation.number.hyperparameter} evaluates \Cref{algo:permutation} in \Cref{sec:accelerate.permutation} on the prostate cancer data. 
We assess the reliability of our proposal \Cref{algo:permutation} with different hyper-parameters:
\begin{itemize}
    \item [(a)] $\varepsilon \in \{0.05, 0.1, 0.15, 0.2, 0.25\}$, $\delta = 0.3$;
    \item [(b)] $\delta \in \{0.1, 0.2, 0.3, 0.4, 0.5\}$, $\varepsilon = 0.2$.
\end{itemize}
\Cref{fig:permutation.number.hyperparameter} (a), (b) show the average number of permutations consumed increases only very slowly as $\varepsilon$ decreases, but is more sensitive to $\delta$.

\begin{figure}[t]
        \centering
        \begin{minipage}{0.45\textwidth}
                \centering
                \includegraphics[clip, trim = 0.5cm 0cm 0cm 0cm, height = 6cm]{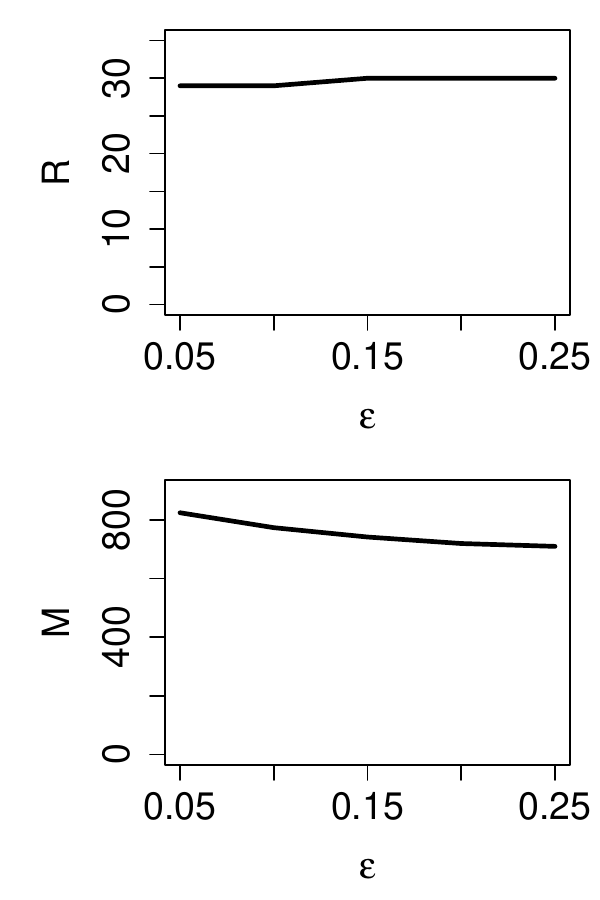}
                \subcaption*{\quad\quad (a)}
        \end{minipage}
        \begin{minipage}{0.45\textwidth}
                \centering
                \includegraphics[clip, trim = 0.5cm 0cm 0cm 0cm, height = 6cm]{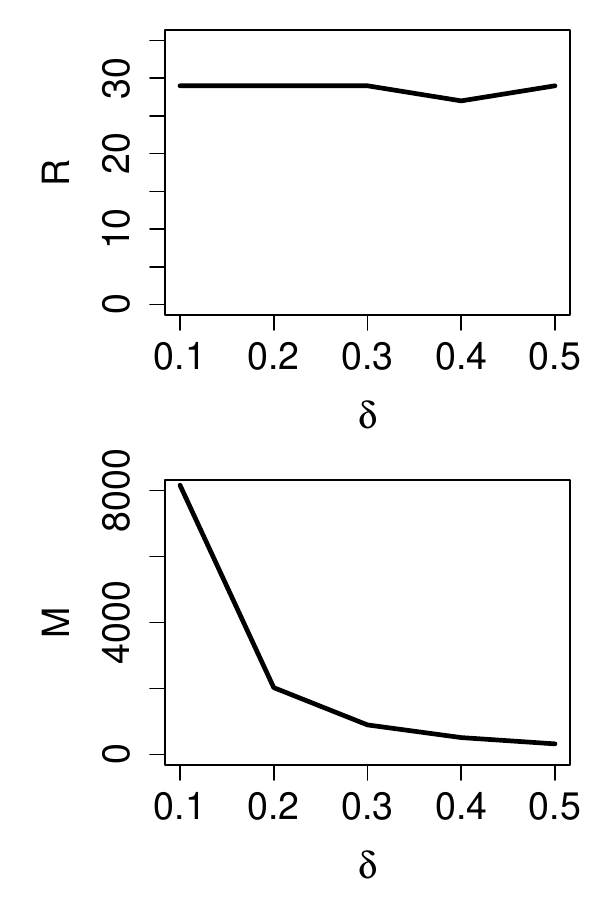}
                \subcaption*{\quad\quad (b)}
        \end{minipage}
        \caption{
        Evaluation of \Cref{algo:permutation} on the prostate cancer data.
        We adopt the data and methods in \Cref{sec:cancer.data}.
        The default hyper-parameters are $\varepsilon = 0.2$, $\delta = 0.3$.
        In panel (a) and (b), we evaluate the stability of the number of rejections produced by \Cref{algo:permutation} and the average number of permutations consumed regarding the hyper-parameters $\varepsilon$, $\delta$, respectively.
        }
        \label{fig:permutation.number.hyperparameter}
\end{figure}

\section{Technical proofs}\label{appe:sec:proof}

\subsection{\Cref{sec:intro}}

\begin{proof}[Proof of \Cref{prop:fcr-to-fdr}]
    Since $|\calS^{1}| < \infty$ and that $|\calS^{t+1}| \neq
    |\calS^{t}|$ implies $|\calS^{t+1}| \le |\calS^{t}| - 1$, the
    algorithm converges in finite time.
    Next, it suffices to show that  $|\calS^T| = I^*$ for $I^*$ in \eqref{eq:bh}.
    On one hand, if $|\calS^t| \ge I^*$, then
    \begin{equation} \label{eq:bh-iteration}
        \calS^{t+1}
        = \{i:  U_i^t \le 0\}
        = \{i: P_i \le q|\calS^{t}|/\No\}
        \supseteq \{i: P_i \le qI^*/\No\}.
    \end{equation}
    This implies $|\calS^{t+1}| \ge I^*$, and eventually $|\calS^{T}| \ge I^*$. On the other hand,
    \begin{align*}
        \calS^{T+1} = \{i: P_i \le q|\calS^{T}|/\No\}
        = \calS^{T}
    \end{align*}
    indicates that $P_{(|\calS^{T}|)} \le q|\calS^{T}|/\No$.
    By the maximality of $I^*$ in \eqref{eq:bh}, we obtain $I^* \ge |\calS^{T}|$.
\end{proof}

\subsection{\Cref{sec:constr-appr-post}}

\begin{proof}[Proof of \Cref{theo:post.selective.risk}]

In fact, this theorem also holds when $\bm{s}$ is  weakly stochastically stable. So we will prove this theorem under weakly stochastic stability.

We denote the conditional distribution of $|\calS^1| \mid \bm X_{-i}, S^1_i = 1$ as $\PP_{i}(\cdot; \bm X_{-i})$.
Let $Z_i$ be a random variable such that $Z_i \independent (X_i, S_i^1)\mid \bm X_{-i}$  and $Z_i\mid \bm X_{-i}$ follows the distribution $\PP_{i}$.
Then we define ${|\calS^1|}^{(-i)} = |\calS^1|$ if $S^1_i = 1$ and  ${|\calS^1|}^{(-i)} =Z_i$ if  $S^1_i = 0$.
By \Cref{lemm:independence} below, we have ${|\calS^1|}^{(-i)} \independent X_i\mid \bm X_{-i}$.
Together with $X_i \independent \bm X_{-i}$, we have $X_i \independent {|\calS^1|}^{(-i)}$ by the contraction rule of conditional independence.
For simplicity, let $\tilde{q} := q/m$.
For task $i$,
    \begin{align*}
    \EE\left[\frac{S^1_i \ell_i( d_i(\bm S^1, X_i;|\calS^1|\tilde{q} ), \theta)}{1 \vee |\calS^1|}\right] 
        &= \EE\left[\frac{S^1_i \ell_i\left( d_i(\bm S^1, X_i;|\calS^1|^{(-i)}\tilde{q} ), \theta\right)}{1 \vee {|\calS^1|}^{(-i)}}\right] 
        \tag*{$(\text{Definition of}~|\calS^1|^{(-i)})$}
        \\
        &\le \EE\left[\frac{ \ell_i\left( d_i(\bm S^1, X_i;|\calS^1|^{(-i)}\tilde{q} ),\theta\right)}{1 \vee {|\calS^1|}^{(-i)}}\right] 
        \tag*{$(S^1_i \le 1, \ell_i \ge 0)$}
        \\   &=\EE\left[\EE\left[\frac{\ell_i\left( d_i(\bm S^1, X_i;|\calS^1|^{(-i)}\tilde{q} ),\theta\right)}{1 \vee |\calS^1|^{(-i)}} \mid |\calS^1|^{(-i)}\right] \right] 
        \tag*{$(\text{Tower property})$}
        \\
        &=\EE\left[\frac{\EE\left[\ell_i\left( d_i(\bm S^1, X_i;|\calS^1|^{(-i)}\tilde{q} ),\theta\right)\mid |\calS^1|^{(-i)}\right]}{1 \vee |\calS^1|^{(-i)}}  \right]\\
        &\le \EE\left[\frac{|\calS^1|^{(-i)} \tilde{q}}{1 \vee |\calS^1|^{(-i)}}\right] 
        \tag*{$(|{\calS^1}|^{(-i)} \independent X_i, d_i~\text{controls $\ell_i$-risk})$}  
        \\
        &\le \tilde{q}.
    \end{align*}
Sum over $i \in [\No]$, and we have finished the proof of post-selection risk control.
\end{proof}

\begin{lemma}\label{lemm:independence}
    ${|\calS^1|}^{(-i)} \independent X_i \mid \bm X_{-i}$.
\end{lemma}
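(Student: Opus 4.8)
The statement to prove is Lemma \ref{lemm:independence}: ${|\calS^1|}^{(-i)} \independent X_i \mid \bm X_{-i}$, where ${|\calS^1|}^{(-i)}$ is the "patched" version of $|\calS^1|$ that equals $|\calS^1|$ on the event $\{S_i^1 = 1\}$ and equals an independent draw $Z_i$ from the conditional law $\PP_i(\cdot;\bm X_{-i})$ of $|\calS^1| \mid \bm X_{-i}, S_i^1 = 1$ on the event $\{S_i^1 = 0\}$. The plan is to verify the conditional independence by directly computing the conditional law of ${|\calS^1|}^{(-i)}$ given $(X_i, \bm X_{-i})$ and showing it does not depend on $X_i$ — equivalently, showing it equals $\PP_i(\cdot;\bm X_{-i})$ almost surely, for every value of $X_i$.

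\textbf{Main steps.} First I would condition throughout on $\bm X_{-i}$ and treat $\bm X_{-i}$ as fixed; all the randomness is then in $X_i$ (which is independent of $\bm X_{-i}$ by hypothesis) and in the auxiliary variable $Z_i$. Fix a value $k$ in the range of $|\calS^1|$. I want to compute $\PP(\,{|\calS^1|}^{(-i)} = k \mid X_i = x_i, \bm X_{-i}\,)$. Splitting on whether $S_i^1 = 1$,
\[
\PP({|\calS^1|}^{(-i)} = k \mid X_i = x_i, \bm X_{-i})
= \PP(S_i^1 = 1, |\calS^1| = k \mid X_i = x_i, \bm X_{-i})
+ \PP(S_i^1 = 0, Z_i = k \mid X_i = x_i, \bm X_{-i}).
\]
For the second term, since $Z_i \independent (X_i, S_i^1) \mid \bm X_{-i}$ and $Z_i \mid \bm X_{-i} \sim \PP_i(\cdot;\bm X_{-i})$, it factors as $\PP(S_i^1 = 0 \mid X_i = x_i, \bm X_{-i})\cdot \PP_i(\{k\};\bm X_{-i})$. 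The key structural input is the stability of the first selection rule (in the weak stochastic form, $|\bm S^1_{-i}| \independent X_i \mid S_i^1 = 1, \bm X_{-i}$, which together with the trivial $S_i^1$-contribution gives $|\calS^1| \independent X_i \mid S_i^1 = 1, \bm X_{-i}$): on the event $S_i^1 = 1$, the conditional law of $|\calS^1|$ given $(X_i, \bm X_{-i})$ is exactly $\PP_i(\cdot;\bm X_{-i})$, independent of $x_i$. Hence the first term equals $\PP(S_i^1 = 1 \mid X_i = x_i, \bm X_{-i})\cdot \PP_i(\{k\};\bm X_{-i})$.

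\textbf{Conclusion of the plan.} Adding the two terms, the factor $\PP_i(\{k\};\bm X_{-i})$ comes out, and the remaining sum $\PP(S_i^1 = 1 \mid X_i = x_i,\bm X_{-i}) + \PP(S_i^1 = 0 \mid X_i = x_i, \bm X_{-i}) = 1$, so
\[
\PP({|\calS^1|}^{(-i)} = k \mid X_i = x_i, \bm X_{-i}) = \PP_i(\{k\};\bm X_{-i}),
\]
which does not depend on $x_i$. Since this holds for every $k$ and (a.e.) $x_i$, the conditional law of ${|\calS^1|}^{(-i)}$ given $(X_i,\bm X_{-i})$ equals its conditional law given $\bm X_{-i}$, i.e.\ ${|\calS^1|}^{(-i)} \independent X_i \mid \bm X_{-i}$. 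The step I expect to require the most care is the first term: justifying rigorously that on $\{S_i^1 = 1\}$ the conditional distribution of $|\calS^1|$ is $\PP_i(\cdot;\bm X_{-i})$ and unchanged by $X_i$ — this is precisely where Definition \ref{defi:stable} (weak stochastic stability of $\bm s^1$) is used, and one has to be careful about conditioning on the (possibly null) event $\{S_i^1 = 1\}$ and about the measure-theoretic meaning of "the conditional distribution does not depend on $X_i$" when working with variationally-independent versus stochastically-independent formulations. If only variational (not stochastic) stability is assumed, one should note — as the proof of Theorem \ref{theo:post.selective.risk} does — that the argument is run under the weak \emph{stochastic} stability hypothesis, so this subtlety does not arise here.
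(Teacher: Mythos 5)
Your proposal is correct and follows essentially the same route as the paper's proof: condition on $\bm X_{-i}$, decompose on $\{S_i^1=1\}$ versus $\{S_i^1=0\}$, use the (weak stochastic) stability of $\bm s^1$ on the first event and the defining properties of $Z_i$ on the second, and observe that the common factor $\PP_i(\{k\};\bm X_{-i})$ multiplies probabilities summing to one, so the conditional law is free of $X_i$. Your remark that $|\calS^1|\independent X_i\mid S_i^1=1,\bm X_{-i}$ follows from weak stochastic stability via $|\calS^1|=|\bm S^1_{-i}|+1$ on $\{S_i^1=1\}$ is a harmless elaboration consistent with the paper.
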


\begin{proof}[Proof of \Cref{lemm:independence}]

For simplicity, we use $\calS$ to denote $\calS^1$. For $k \in \{1, \ldots,m\}$,
\begin{align*}
    &\quad~\PP\left({|\calS|}^{(-i)} = k \mid X_i,  \bm X_{-i} \right)\\
    &=\PP\left({|\calS|}^{(-i)} = k \mid S_i = 1, X_i, \bm X_{-i} \right) \PP(S_i = 1 \mid X_i,  \bm X_{-i} ) \\
    &\quad~+ \PP\left({|\calS|}^{(-i)} = k \mid S_i = 0, X_i,  \bm X_{-i} \right) \PP(S_i = 0 \mid \bm
 X_i,  \bm X_{-i} ) 
 \tag*{$(\text{Tower property})$}
 \\
    &=\PP\left(|\calS| = k \mid S_i = 1, X_i,  \bm X_{-i} \right) \PP(S_i = 1 \mid X_i,  \bm X_{-i}) \\
    &\quad~+ \PP\left(Z_i = k \mid S_i = 0, X_i,  \bm X_{-i} \right) \PP(S_i = 0 \mid X_i,  \bm X_{-i}) 
    \tag*{$(\text{Definition of}~{|\calS|}^{(-i)})$}
    \\
    &=\PP\left(|\calS| = k \mid S_i = 1, \bm X_{-i} \right) \PP(S_i = 1 \mid X_i, \bm X_{-i}) \\
    &\quad~+ \PP\left(Z_i = k \mid \bm X_{-i}\right) \PP(S_i = 0 \mid X_i, \bm X_{-i}) 
    \tag*{$(|\calS| \independent X_i \mid S_i = 1, \bm X_{-i}; Z_i \independent (X_i, S_i) \mid \bm X_{-i})$}
    \\
    &= \PP_i(k ; \bm X_{-i}) \left(\PP(S_i = 1 \mid X_i, \bm X_{-i}) + \PP(S_i = 0 \mid X_i, \bm X_{-i})\right)
    \quad 
    \tag*{($Z_i \mid \bm X_{-i} ~\text{and}~|\calS|\mid \bm X_{-i}, S_i = 1 \sim \PP_i(\cdot; \bm X_{-i})$)} 
    \\
    &= \PP_i(k ; \bm X_{-i}).
\end{align*}
Therefore, $\PP\left({|\calS|}^{(-i)} = k \mid X_i,  \bm X_{-i} \right)$ does not depende on $X_i$, which is equivalent to ${|\calS|}^{(-i)} \independent X_i | \bm X_{-i}$.
\end{proof}

\begin{proof}[Proof of \Cref{exam:group.size.balance}]

  Although the conditions in \Cref{sufficient condition for extended stable rule} are not satisfied in this example, it can be shown that the implied extended selection rule is still stable, as proved below.
  
  For simplicity, we assume there are no ties among the p-values. The
  modified selection strategy is then defined as
  \[
    \tilde{s}_i(\bm D, \bm X) =
    \begin{cases}
      1, & \text{if}~D_i = 1~\text{and}~\sum_{j=1}^m \1_{\{X_{0j} =
           X_{0i}\}} \1_{\{P_i \ge P_j\}} \leq \Gamma
           \min \{n_1(\bm D), n_2(\bm D)\}, \\
      0, & \text{otherwise}.
    \end{cases}
  \]
  It is easy to see that this extra selection meets the balance
  constraint. We now show that the resulting extended selection rule
  $\tilde{\bm s} \circ \tilde{\bm d}$ is stable.
    Conditioning on $\bm X_{-i}$ and $S_i = 1$, we know $D_i = 1$ and $D_j$ is fixed for $j \neq i$.
    Consequently, $n_1(\bm D)$, $n_2(\bm D)$ are fixed.
    Without loss of generality, suppose $n_1(\bm D) \ge n_2(\bm D)$.
    If $i$ belongs to category $2$, the selected hypotheses will not
    change because $\tilde{\bm s}$ only removes hypotheses in category
    $1$. If $i$ belongs to category $1$, we denote the order
    statistics of the rejected p-values in category $1$ besides $P_i$ as
    $\tilde{P}_1 < \dots < \tilde{P}_{n_1(\bm D) - 1}$.
    The modified selection strategy $\tilde{\bm s}$ will always select
    $H_i$ (because $S_i = 1$ by assumption) and the
    first $\lfloor  \Gamma n_2(\bm D) \rfloor - 1$ hypotheses corresponding to
    the p-values $\tilde{P}_{1}, \ldots, \tilde{P}_{\lfloor \Gamma
      n_2(\bm D)
      \rfloor -1}$.
    This shows that the extended selection rule $\tilde{\bm s}\circ
    \tilde{\bm d}$ is stable.
\end{proof}

\begin{example}(Counterexample of the composition of stochastically stable procedures)\label{exam:counterexample.composite.stochastic.stable}
    We follow the same notations as in the proof of \Cref{prop:composite.extended}.
    Suppose $m=3$ and we have data $\bm X = (x_1, x_2, x_3)$, where $x_1, x_2, x_3 \in (0,1)$.
    Let $\bm s$ is a random selection with probability $\frac{1}{2}$ to select task $i$. And $\bm s'$ is the BH procedure with level $q$. Then we consider $\pr(\bm s'\circ \bm s(\bm X) = (1,0, 0)\mid \tilde S_1 = 1, \bm X)$.

    When $\tilde S_1 = 1$, we must have $\bm s(\bm X) = (1,0,0), (1,1,0), (1,0,1)$ or $(1,1,1)$. Suppose $\bm X = (\frac{3}{4}q, \frac{1}{2}q, 1), \bm X' = (\frac{7}{12}q, \frac{1}{2}q, 1)$. One can check that
    \begin{align*}
        \pr(\bm s'\circ \bm s(\bm X) = (1,0,0)\mid \tilde S_1 = 1, \bm X) &= \frac{1}{2},  \\
        \pr(\bm s'\circ \bm s(\bm X') = (1,0, 0)\mid \tilde S_1 = 1, \bm X')  &=  \frac{1}{3}.
    \end{align*}

\end{example}

\subsection{\Cref{sec:multiple-risks}}
\label{app:sec:multiple-risks}




\begin{proof}[Proof of \Cref{prop:FDR.curve}]
    For each $i$, define $c^*_i := \argmin \{q(c): H_{i,c} = 0\}$. We let $c^*_i = -\infty$ if $H_{i,c} \neq 0$ for $c \in \calI$.
    Notice that,
    \begin{align*}
        \sup_{c \in \calI}\frac{\sum_{H_{i,c}  = 0} \1_{\{i \in \calS\}}}{q(c) (1 \vee |\calS|)}
        &= \sup_{c \in \calI} \sum_{i=1}^{\No} \frac{ \1_{\{H_{i,c} = 0, i \in \calS\}}}{q(c) (1 \vee |\calS|)}
        \le \sum_{i=1}^{\No} \sup_{c \in \calI}  \frac{ \1_{\{H_{i,c} = 0, i \in \calS\}}}{q(c) (1 \vee |\calS|)}\\
        &= \sum_{c_i^* \neq -\infty} \frac{ \1_{\{H_i(c_i^*) = 0, i \in \calS\}}}{q(c_i^*) (1 \vee |\calS|)}
        \le \sum_{c_i^* \neq -\infty} \frac{ \1_{\{H_i(c_i^*) = 0, P_i(X_i;c_i^*) \le q(c_i^*)|\calS|/\No\}}}{q(c_i^*) (1 \vee |\calS|)}.
    \end{align*}
    Finally, we apply the proof of \Cref{theo:post.selective.risk} to $P_i(X_i; c_i^*)$ with the selected task set $\calS$ and finish the proof.
\end{proof}

\begin{proof}[Proof of \Cref{prop:q(c).free}]
    Let $\calS_0$ denote the rejection set of applying BH to $P_i(0) = p_i(X_i, 0)$, and $\calS_{\sup}$ denote that of  \Cref{algo:FDR.curve} with $q^*(c)$. Since $q^*(0) = q$ and $q P_{i, \sup} \ge q p_i(X_i, 0)/q^*(0) = p_i(X_i, 0)$, then $\calS_{\sup} \subseteq \calS_0$.
    If $\calS_0 = \emptyset$, then $\calS_{\sup} = \calS_0$.
    If $\calS_0 \neq \emptyset$ and $i \in \calS_0$, then $p_i(X_i, 0) \le q|\calS_0|/\No$.    
    As $p(x, c)$ increases in $x$, let
    $p_i(\underline{\tau}_i, 0) = q/\No$,
    $p_i(\overline{\tau}_i, 0) = q$, then $X_i \le \overline{\tau}_i$ and 
    \begin{equation}\label{proof:prop:q(c).free.1}
    \begin{aligned}
        P_{i, \sup}
        =&\sup_{c \in \calI} \frac{p_i(X_i, c)}{q^*(c)} =   \sup_{c \in \calI} \frac{p_i(X_i, c)}{ \sup_{i\in [\No]} \sup_{\underline{\tau}_i \le x_i 
        \le \overline{\tau}_i} q\frac{p_i(x_i, c)}{p_i(x_i, 0)}} \\ 
         \le& \sup_{c \in \calI} \frac{p_i(X_i, c)}{q \cdot \sup_{\underline{\tau}_i \le x_i 
        \le \overline{\tau}_i} \frac{p_i(x_i, c)}{p_i(x_i, 0)}} \\
        \le& \sup_{c \in \calI} \frac{p_i(X_i \vee \underline{\tau}_i,c)}{q \cdot \frac{p_i(X_i \vee \underline{\tau}_i, c)}{p_i(X_i \vee \underline{\tau}_i, 0)}}
        = \frac{p_i(X_i \vee \underline{\tau}_i, 0)}{q}
        \le \frac{|\calS_0|}{\No},
    \end{aligned}
    \end{equation}
    where we use the rejection rule of BH in the last inequality. Eq.~\eqref{proof:prop:q(c).free.1} implies that
    \begin{align*}
       \left\{i: P_{i, \sup} \le \frac{|\calS_0|}{\No}  \right\}
        \supseteq \left\{i: p(X_i,0)/q \le \frac{|\calS_0|}{m} \right\}
        = \calS_0,
    \end{align*}
    where we use the property of the rejection set of the BH in the last equality.
    Together with $\calS_{\sup} \subseteq \calS_0$, we have $\calS_{\sup} = \calS_0$.

    For arbitrary $\calI'$, we first  define:
    \begin{align*}
        P'_{i,\sup} = \sup_{c\in \calI'} \frac{p_i(X_i,c)}{q(c)}, \quad P_{i,\sup} =  \sup_{c\in \calI} \frac{p_i(X_i,c)}{q^*(c)}.
    \end{align*}
    And let $\calS_0$ denote the rejection set of applying BH to $P'_{i,\sup}$ an ``adjusted FDR level'' $q = 1$, and $\calS_{\sup}$ denote that of  \Cref{algo:FDR.curve} with $q^*(c)$.

    Because $q^*(c) \le q(c)$, $\calI' \subseteq \calI$, we have $P'_{i,\sup} \le P_{i,\sup}$, thus $\calS_{\sup} \subseteq \calS_0$. To prove $\calS_{0} \subseteq \calS_{\sup}$, if $i \in \calS_0$, then $P'_{i, \sup} \le \frac{|\calS_0|}{m}$. Define  $\underline{\tau}^{c'}_i$ and $\overline{\tau}^{c'}_i$  such that $p_i(\underline{\tau}^{c'}_i, c') = q(c')/\No$,
    $p_i(\overline{\tau}^{c'}_i, c') = q(c')$, we have
    \begin{align*}
        P_{i, \sup}
        =&\sup_{c \in \calI} \frac{p_i(X_i, c)}{q^*(c)} =   \sup_{c \in \calI} \frac{p_i(X_i, c)}{\inf_{c'\in \calI'} \sup_{i\in [\No]} \sup_{\underline{\tau}^{c'}_i \le x_i 
        \le \overline{\tau}^{c'}_i}  q(c')\frac{p_i(x_i, c)}{p_i(x_i, c')}}\\
        =& \sup_{c \in \calI} \sup_{c'\in \calI'} \frac{p_i(X_i, c)}{ \sup_{i\in [\No]} \sup_{\underline{\tau}^{c'}_i \le x_i 
        \le \overline{\tau}^{c'}_i}  q(c')\frac{p_i(x_i, c)}{p_i(x_i, c')}}\\
        =& \sup_{c'\in \calI'} \sup_{c \in \calI}  \frac{p_i(X_i, c)}{ \sup_{i\in [\No]} \sup_{\underline{\tau}^{c'}_i \le x_i 
        \le \overline{\tau}^{c'}_i}  q(c')\frac{p_i(x_i, c)}{p_i(x_i, c')}}. 
    \end{align*}
    It suffices to show for any $c' \in \calI'$,
    \begin{align*}
         \sup_{c \in \calI}  \frac{p_i(X_i, c)}{ \sup_{i\in [\No]} \sup_{\underline{\tau}^{c'}_i \le x_i 
        \le \overline{\tau}^{c'}_i}  q(c')\frac{p_i(x_i, c)}{p_i(x_i, c')}} 
        \le \frac{|\calS_0|}{m}.
    \end{align*}
    Similar to Eq.~\eqref{proof:prop:q(c).free.1}, we obtain
    \begin{align*}
       &\quad~\sup_{c \in \calI} \frac{p_i(X_i, c)}{ \sup_{i\in [\No]} 
       \sup_{\underline{\tau}_i^{c'} \le x_i 
        \le \overline{\tau}_i^{c'}} q(c')\frac{p_i(x_i, c)}{p_i(x_i, c')}} 
        \le \frac{p_i(X_i \vee \underline{\tau}_i^{c'}, c')}{q(c')}\\
        &=  \frac{p_i(X_i, c')}{q(c')} \vee \frac{p_i(\underline{\tau}_i^{c'}, c')}{q(c')}
        \le P_{i, \sup}' \vee \frac{q(c')/m}{q(c')}
        \le \frac{|\calS_0|}{\No},
    \end{align*}
    this completes our proof.

\end{proof}

\subsection{\Cref{sec:accelerate.permutation}}
\label{sec:crefs}

\begin{proof}[Proof of \Cref{prop:permutation}]
We adapt the proof of \Cref{theo:post.selective.risk}.
Let $\tilde{X}_i = (P_i(M_r))_{r \in [\No]}$, then the selected set $\bm S$ of \Cref{algo:permutation} satisfies $\bm S_{-i} \perp\!\!\!\!\perp_v \tilde{X}_{i} \mid S_i = 1, \tilde{\bm X}_{-i}$.
Since $X_i \independent \bm X_{-i}$ for $H_i = 0$ and the permutations are independently generated across all tasks, then 
\begin{align*}
    T_i, (T_{ij})_{j \in [M_1]} \independent  \bm T_{-i}, (\bm T_{-i,j})_{j \in [M_1]}, 
\end{align*}
which further implies $\tilde{X}_i \independent \bm \tilde{X}_{-i}$.

We let $\tilde{q} := q/m$.
For task $i$ with $H_i = 0$,
    \begin{align*}
        \EE\left[\frac{S_i \ell_i( d_i(\bm S, \tilde{X}_i;|\calS|\tilde{q} ), \theta)}{1 \vee |\calS|}\right]
        &= \EE\left[\frac{S_i \ell_i\left( d_i(\bm S, \tilde{X}_i;|\calS|^{(-i)}\tilde{q} ), \theta\right)}{1 \vee {|\calS|}^{(-i)}}\right] 
        \tag*{$(\text{Definition of}~|\calS|^{(-i)})$}
        \\
        &\le \EE\left[\frac{ \ell_i\left( d_i(\bm S, \tilde{X}_i;|\calS|^{(-i)}\tilde{q} ),\theta\right)}{1 \vee {|\calS|}^{(-i)}}\right] 
        \tag*{$(S_i \le 1, \ell_i \ge 0)$}
        \\   &=\EE\left[\EE\left[\frac{\ell_i\left( d_i(\bm S, \tilde{X}_i;|\calS|^{(-i)}\tilde{q} ),\theta\right)}{1 \vee |\calS|^{(-i)}} \mid |\calS|^{(-i)}\right] \right] 
        \tag*{(\text{Tower property})}
        \\
        &=\EE\left[\frac{\EE\left[\ell_i\left( d_i(\bm S, \tilde{X}_i;|\calS|^{(-i)}\tilde{q} ),\theta\right)\mid |\calS|^{(-i)}\right]}{1 \vee |\calS|^{(-i)}}  \right]\\
        &\le \EE\left[\frac{|\calS|^{(-i)} \tilde{q}}{1 \vee |\calS|^{(-i)}}\right] 
        \tag*{$(|{\calS}|^{(-i)}(\tilde{\bm X}_{-i}) \independent \tilde{X}_i, d_i~\text{controls $\ell_i$-risk})$}  
        \\
        &\le \tilde{q}.
    \end{align*}
Sum over $i \in [\No]$, and we have finished the proof of post-selection risk control. Then by the same argument of \Cref{coro:extra.selective.risk}, \Cref{algo:permutation} controls the FDR (extra-selective risk).
\end{proof}

\begin{proof}[Proof of \Cref{prop:permutation.power}]
    Given a hypothesis $H_i$, let $X_{il}$, $l \in [N_i!]$ be i.i.d. Bernoulli with parameter $P_i^\infty \in [0,1]$.
    If $i \in \calS^\infty(q/(1+\delta)$, then $P_i^\infty \le q / (1+\delta)$, and further there exists $\kappa_i \in [\No]$ such that $q/(1+\delta) \cdot (\kappa_i-1)/\No < P_i^\infty \le q/(1+\delta) \cdot
 \kappa_i/\No$.
    According to the BH procedure,
    \begin{align*}
        \left\{\calS^\infty(q/(1+\delta)) \subseteq \calS(q; (M_{r})_{r \in [\No]})\right\}
        &\supseteq
        \left\{\sup_{1 \le r \le \kappa_i} P_i(M_r) - \frac{q r}{\No} \le 0, ~\text{for all}~P_i^\infty \le \frac{q}{1+\delta} \right\} \\
        &=
        \left(\bigcup_{P_i^\infty \le \frac{q}{1+\delta}}\left\{\sup_{1 \le r \le \kappa_i} P_i(M_r) - \frac{q r}{\No} > 0 \right\}\right)^c,
    \end{align*}
    where for an event $\calA$, $\calA^c$ denotes its complement.

    Next, we provide an upper bound of the probability of the event $\{\sup_{1 \le r \le \kappa_i} P_i(M_r) - {q r}/{\No} > 0 \}$.
    Let $C_{q, \varepsilon, \No, \delta} := (2/q) \cdot \left(\log\left({1}/{\varepsilon}\right) + \log(\No) \right)\cdot {(1 + 4\delta/3 + \delta^2/3)}/{\delta^2}$, then $M_r = C_{q, \varepsilon, \No, \delta} \cdot \No/r$,
    \begin{align*}
        \PP\left(\sup_{1 \le r \le \kappa_i} P_i(M_r) - \frac{q r}{\No} > 0 \right)
        &= \PP\left(\exists r \in [\kappa_i], \frac{1 + \sum_{l=1}^{M_r} X_{il}}{1 + M_r} - \frac{q r}{\No} > 0 \right) \\
        &\le \PP\left(\exists r \in [\kappa_i], {1 + \sum_{l=1}^{M_r} X_{il}} \ge \frac{q r M_r}{\No} \right) \\
        &= \PP\left(\exists r \in [\kappa_i], {1 + \sum_{l=1}^{M_r} X_{il}} \ge {q C_{q, \varepsilon, \No, \delta}}\right) \\
        &= \PP\left({1 + \sum_{l=1}^{M_{\kappa_i}} X_{il}} \ge {q C_{q, \varepsilon, \No, \delta}}\right) \\
        &= \PP\left({\sum_{l=1}^{M_{\kappa_i}} \left(X_{il} - P_{i}^\infty \right)} \ge {q C_{q, \varepsilon, \No, \delta}} - M_{\kappa_i} P_i^\infty - 1\right) \\
        &\le \PP\left({\sum_{l=1}^{M_{\kappa_i}} \left(X_{il} - P_{i}^\infty \right)} \ge \frac{\delta q C_{q, \varepsilon, \No, \delta}}{1+\delta} - 1 \right),
    \end{align*}
    where we use $M_{\kappa_i} P_i^\infty \le q C_{q, \varepsilon, \No, \delta}/(1+\delta)$ in the last inequality.
    By the Bernstein's inequality, write $t := {\delta q C_{q, \varepsilon, \No, \delta}}/{(1+\delta)} - 1$, we obtain
    \begin{align*}
        \PP\left({\sum_{l=1}^{M_{\kappa_i}} \left(X_{il} - P_{i}^\infty \right)} \ge t \right)
        \le e^{-\frac{t^2/2}{t/3 + M_{\kappa_i} P_i^\infty(1-P_i^\infty)}}
        \le e^{-\frac{t^2/2}{t/3 + (1+t)/\delta}}
        \le \frac{\varepsilon}{\No},
    \end{align*}
    where the last equality comes from the definition of $C_{q, \varepsilon, \No, \delta}$ and that $\No \ge 2$, $\varepsilon \le 0.5$, $\delta \le 1$.
    Finally, we employ the union bound and arrive at
    \begin{align*}
         \PP\left(\calS^\infty(q/(1+\delta)) \subseteq \calS(q; (M_{r})_{r \in [\No]})\right)
         \ge 1 - \sum_{i=1}^\No  \PP\left(\sup_{1 \le r \le \kappa_i} P_i(M_r) - \frac{q r}{\No} > 0 \right)
         \ge 1 - \varepsilon.
    \end{align*}

    For the computation complexity, let $T$ denote the total number of iterations, and $r_t$ denote the size of the selected set input to the $t$-th iteration, $r_1 = \No$.
    Then the total number of permutations takes the form,
    \begin{align*}
        r_1 M_{r_1} + \sum_{t=2}^T r_t (M_{r_t} - M_{r_{t-1}})
        &= r_T M_{r_T} + \sum_{t=2}^{T-1} (r_{t-1} - r_{t}) M_{r_{t-1}}\\
        &= r_T M_{r_T} + \sum_{t=2}^{T-1} \sum_{l =  r_{t} + 1}^{ r_{t-1}}  M_{r_{t-1}}\\
        &\le \sum_{l=1}^{r_T} M_{l} + \sum_{t=2}^{T} \sum_{l =  r_{t} + 1}^{ r_{t-1}}  M_{l}
        =  \sum_{l=1}^{\No} M_{l} 
        \tag*{$ (\text{$M_r$ decreases in $r$} )$}
        \\
        &= C_{q, \varepsilon, \No, \delta} \cdot \No \cdot \sum_{l=1}^{\No} \frac{1}{l}
        \le C_{q, \varepsilon, \No, \delta}  \cdot \No \cdot \left(\log(\No) + 1\right).
    \end{align*}
\end{proof}

\begin{example}\label{exam:permutation.no.lower.bound}
    Consider $P_i^\infty = q/(1+\delta)$, $i \in [\No]$.
    Let $M_i$ denote the number of permutations consumed by task $i$, $P_i(M_i)$ be the corresponding independent permutation p-values, and $\calS(q; (M_{i})_{i \in [\No]})$ be the rejection set of applying BH to $P_i(M_i)$ at level $q$.  Without loss of generality, we assume $M_i$ is increasing.
    Let $D(p; p') = p \log(p/p') + (1-p)\log((1-p)/(1-p'))$, and $C(\delta, q) = 1/D(q; q/(1+\delta))$. Here $C(\delta, q) \le 6/(q \delta^2)$ for $\delta \le 1/2$.
    If the total number of permutations $\sum_i M_i \le C(\delta, q) \No \log(\No)$, then $M_i \le 2 C(\delta, q) \log(\No)$ for all $i \le \No/2$.
    We use $C' := 2 \cdot C(\delta, q)$ for notation simplicity.

    For $q < 1$, by the anti-concentration bound of Binomial distribution,
    \begin{align*}
        &\quad~\PP\left(P_i(M_i) = \frac{1 + \sum_{l=1}^{M_i} X_{il}}{1 + M_i} > q \right)
        \ge \PP\left(\frac{\sum_{l=1}^{M_i} X_{il}}{M_i} \ge q \right) \\
        &\ge \frac{1}{\sqrt{2 M_i}} e^{-M_i D\left(q; \frac{q}{1+\delta}\right)}
        \ge \frac{1}{\sqrt{2 C'\log(\No)}} e^{-C'\log(\No) D\left(q; \frac{q}{1+\delta}\right)}\\
        &= \frac{1}{\sqrt{2 C'\log(\No)}} \cdot \frac{1}{\No^2}
        \ge \frac{2\log(1/\varepsilon)}{\No},
    \end{align*}
    for $\No$ sufficiently large depending on $\delta$, $\varepsilon$, and $q$.
    Notice that applying BH to $P_i^{\infty}$ at level $q/(1+\delta)$ rejects all the hypotheses, then
\begin{align*}
    &\quad~\PP\left(\calS^\infty(q/(1+\delta)) \not\subseteq \calS\left(q; (M_{i})_{i \in [\No]}\right)\right)\\
    &\ge \PP\left(\exists i \le \frac{\No}{2}, P_i(M_i) > q \right)
    = 1 - \PP\left(\forall i \le \frac{\No}{2}, P_i(M_i) \le q \right) \\
    &= 1 - \PP\left(P_i(M_i) \le q \right)^{\frac{\No}{2}}
    \ge 1 - \left(1 - \frac{2\log(1/\varepsilon)}{\No}\right)^{\frac{\No}{2}}
    \ge 1 - \varepsilon.
\end{align*}
This implies at least a total of $C(\delta, q) \No \log(\No)$ permutations are required.
\end{example}

\subsection{\Cref{appe:sec:stable}}

\begin{proof}[Proof of \Cref{sufficient condition for extended stable rule}]
  Denote $\bm s = \tilde{\bm s} \circ \tilde{\bm d}$.
  The assumptions mean that, with an abuse of notation,
  $\tilde{d}_i(\cdot, \bm X) =
  \tilde{d}_i(\cdot, X_i)$ and $\tilde{s}_i(\bm D, \bm X) =
  \tilde{s}_i( D_i, X_i)$. Thus, $s_i(\cdot, \bm X) = s_i(\cdot, X_i)$. It
  follows from the definition that $\bm s$ is stable. 
\end{proof}

\begin{proof}[Proof of \Cref{prop:composite.extended} (1)]
     We prove for  stable $\bm s'$, and the proof is similar when $\bm s'$ is weakly stable.
     Let $\bm X'$ satisfy that $\bm X_{-i}' = \bm X_{-i}$.
     Since $\bm s$ is stable, given $s_i(\bm X) = s_i(\bm X') = 1$, we have $s_j(\bm X) = s_j (\bm X'), j\in [\No]$. Since $\bm s'$ is stable, for any $\bm r\in \{0,1\}^m$, given $s'_i(\bm r,\bm X) = s'_i(\bm r, \bm X') = 1$, we have $s'_j(\bm r, \bm X) = s'_j (\bm r, \bm X'), j\in [\No]$.
    Let $\tilde{S}_i$ denote whether task $i$ is selected by the composition $\bm s' \circ \bm s$, then $\tilde S_i = 1$ implies $s_i(\bm X) = 1$ and $s'_i(\bm s(\bm X), \bm X) = 1$.
    Therefore, given $\tilde S_i= 1$ and $\bm X_{-i}' = \bm X_{-i}$,
    \begin{align*}
        \bm s'\circ \bm s (\bm X) = \bm s'(\bm s(\bm X), \bm X)
        = \bm s'(\bm s(\bm X'), \bm X)
        = \bm s'(\bm s(\bm X'), \bm X')
        = \bm s'\circ \bm s (\bm X'),
    \end{align*}
    meaning that the composition $\bm s'\circ \bm s$ is stable.
\end{proof}

\begin{proof}[Proof of \Cref{prop:composite.extended} (2)]
We first prove that    if $\bm s$, $\bm s'$ are stochastically stable, that is $\bm s$ and $\bm s'$ are random functions of $\bm X$ (which contain deterministic functions as a special case), then $\bm s \cap \bm s'$ is stochastically stable.

    We denote $\bm S = \bm s(\bm X)$ and $\bm S' = \bm s'(\bm X)$. From the definition of $\bm s$ and $\bm s'$,  we have
    \begin{align*}
        \pr(\bm S = \bm w_1\mid S_i = 1, X_i, \bm X_{-i}) &= \pr(\bm S = \bm w_1\mid S_i = 1, \bm X_{-i}), \\
        \pr(\bm S' = \bm w_2\mid S'_i = 1, X_i, \bm X_{-i}) &= \pr(\bm S' = \bm w_2\mid S'_i = 1, \bm X_{-i}).
    \end{align*}
    then for the intersection of $\bm s$ and $\bm s'$, notice that $\bm s(\bm X)
    \independent \bm s'(\bm X) \mid \bm X$, we have
    \begin{align*}
    &\pr(\bm s \cap \bm s'(\bm X) = \bm w \mid  S_i=1, S'_i = 1, \bm X)  \\
    =&\frac{\pr(\bm s \cap \bm s'(\bm X) = \bm w, S_i=1, S'_i = 1, \bm X)}{\pr( S_i=1, S'_i = 1, \bm X)}  \\
    =& \frac{\sum_{\bm w_1,\bm w_2 ~s.t. \bm w_1\bm w_2 = \bm w }\pr(\bm S = \bm w_1, \bm S' = \bm w_2, S_i=1, S'_i = 1, \bm X)}{\pr( S_i=1, S'_i = 1, \bm X)} \\
    =& \frac{\sum_{\bm w_1,\bm w_2 ~s.t. \bm w_1\bm w_2 = \bm w}\pr(\bm S = \bm w_1,  S_i=1\mid  \bm X) \pr(\bm S' = \bm w_2, S'_i = 1\mid \bm X)}{\pr( S_i=1\mid \bm X)\pr(S'_i =1\mid \bm X )} \\
    =& \sum_{\bm w_1,\bm w_2 ~s.t. \bm w_1\bm w_2 = \bm w} \pr(\bm S = \bm w_1\mid S_i=1, \bm X) \pr(\bm S' = \bm w_2\mid S'_i = 1, \bm X)  \\
    =& \sum_{\bm w_1,\bm w_2 ~s.t. \bm w_1\bm w_2 = \bm w} \pr(\bm S = \bm w_1\mid  S_i=1, \bm X_{-i}) \pr(\bm S' = \bm w_2\mid S'_i = 1, \bm X_{-i}),
    \end{align*}
which does not depend on $X_i$. So $\bm s \cap \bm s'$ is stochastically stable. In particular, when $\bm s$ and $\bm s'$ are stable, $\bm s \cap \bm s'$ are stochastically stable. Because $\bm s  \cap \bm s'$ is deterministic, $\bm s \cap \bm s'$ is stable.
\end{proof}

\begin{proof}[Proof of \Cref{prop:arbitrary dependence}]
    The proof is adapted from \textcite{benjamini2001control}. We
    define $\tilde q = q/f(m)$ which
    only concerns FDR (extra-selection risk). Consider any $i$ such
    that $H_i = 0$ ($H_i$ is true). We have
        \begin{align*}
            &\quad~\EE\left[\frac{S^1_i \ell_i( d_i(\bm S^1, X_i;|\calS^1|\tilde{q} ), \theta)}{1 \vee {|\calS^1|}}\right] 
            \tag*{(risk attributed to task $i$)}
            \\
            &\le \EE\left[\frac{ \ell_i( d_i(\bm S^1, X_i;|\calS^1|\tilde{q} ), \theta)}{1 \vee {|\calS^1|}}\right] 
            \tag*{$(S^1_i \le 1, \ell_i \ge 0)$}
            \\
            &= \EE \left[ \sum_{k=1}^{m} \frac{\1_{ \{ {|\calS^1|} = k \} }}{k}\ell_i( d_i(\bm S^1, X_i;k\tilde{q} ), \theta)  \right] 
            \tag*{(enumerate over values of $|\calS^1|$)} 
            \\
            &= \EE \left[ \sum_{k=1}^{m} \frac{\1_{ \{ {|\calS^1|} = k \} }}{k} \sum_{j=1}^k \left(\ell_i( d_i(\bm S^1, X_i;j\tilde{q} ), \theta)
            -\ell_i( d_i(\bm S^1, X_i;(j-1)\tilde{q} ), \theta)\right)\right] 
            \tag*{(telescoping sum and~$\ell_i( d_i(\bm S^1, X_i;0), \theta) = 0 ~\text{by}~ r_i( d_i(\bm S^1, X_i;0), \theta) = 0$)} 
            \\
            &= \EE \left[ \sum_{j=1}^{m} \left(\ell_i( d_i(\bm S^1, X_i;j\tilde{q} ), \theta)
            -\ell_i( d_i(\bm S^1, X_i;(j-1)\tilde{q} ), \theta)\right) \sum_{k=j}^{m} \frac{\1_{ \{ {|\calS^1|} = k \} }}{k} \right] 
            \tag*{(switching the order of summation indexed by~ $k, j$)}
            \\
            &\le \EE \left[ \sum_{j=1}^{m} \left(\ell_i( d_i(\bm S^1, X_i;j\tilde{q} ), \theta)
            -\ell_i( d_i(\bm S^1, X_i;(j-1)\tilde{q} ), \theta)\right) \cdot \frac{1}{j} \right] 
            \tag*{($\ell_i(d_i(\cdot, q), \theta)$ is non-decreasing in $q$ and~ $\sum_{j=k}^{m} \1_{ \{ {|\calS^1|} = k \}} \le 1$)}
            \\
            &= \EE \left[\ell_i( d_i(\bm S^1, X_i;m\tilde{q} ), \theta) \cdot \frac{1}{m} + \sum_{j=1}^{m-1}
            \ell_i( d_i(\bm S^1, X_i;j\tilde{q} ), \theta) \cdot \left(\frac{1}{j} - \frac{1}{j+1}\right)
            \right] 
            \tag*{(\text{summation by parts and}~ $\ell_i( d_i(\bm S^1, X_i;0), \theta) = 0$)}
            \\
            &\le \tilde q + \sum_{j=1}^{m-1}j\tilde q  \cdot \frac{1}{j(j+1)} 
            \tag*{($d_i$~\text{controls $\ell_i$-risk})}
            \\
            &= \frac{q}{m}.
        \end{align*}
        Finally, we sum over $H_i = 0$ for $i \in [\No]$ and finish the proof.
\end{proof}

\begin{proof}[Proof of \Cref{prop:iterate.equivalence}]
  Let $T_{\text{par}},T_{\text{seq}} \leq \No $ be the number of
  iterations for parallel intersection and sequential composition to
  converge, respectively. Then the claim is that
  \[
    \bm S_{\text{par}}^\ast:=(\tilde{\bm s}_{\text{par}} \circ \tilde{\bm
      d}_{\text{par}})^{T_{\text{par}}}(\bm S^1) = (\bm s_{\text{seq}})^{T_{\text{seq}}}(\bm S^1) =: \bm S_{\text{seq}}^\ast,
  \]
  and the decisions are the same, too.
  
  Under the contracting and  increasing property of $\tilde{\bm s}_{c}\circ \tilde{\bm d}_{c}, \forall c\in [k]$, we have $\bm s_{\text{seq}}(\bm S) \preceq \tilde{\bm s}_{\text{par}}\circ \tilde{\bm d}_{\text{par}}(\bm S')$ if $\bm S\preceq \bm S'$. 
  As the two algorithms start with the same selected set $\bm S^1$,  we have $\bm{S}^t_{\text{seq}} \preceq \bm{S}^t_{\text{par}}, \forall t \geq 1$. Choosing $t=m$, we have $\bm{S}^m_{\text{seq}} \preceq \bm{S}^m_{\text{par}}$. Because $\bm{S}^\ast_{\text{seq}} = \bm{S}^{T_{\text{seq}}}_{\text{seq}} = \bm{S}^m_{\text{seq}}$ and $\bm{S}^\ast_{\text{par}} = \bm{S}^{T_{\text{par}}}_{\text{par}} = \bm{S}^m_{\text{par}}$, we get $\bm{S}^\ast_{\text{seq}} \preceq \bm{S}^\ast_{\text{par}}$. 
  While for the other side, because $\bm{S}^\ast_{\text{par}}$ is a fixed point of $\tilde{\bm s}_{\text{par}}\circ \tilde{\bm d}_{\text{par}}$, i.e. $\tilde{\bm s}_{\text{par}}\circ \tilde{\bm d}_{\text{par}}(\bm{S}^\ast_{\text{par}}) = \bm{S}^\ast_{\text{par}}$, which means that $\tilde{\bm s}_c\circ \tilde{\bm d}_c(\bm{S}^\ast_{\text{par}}) = \bm{S}^\ast_{\text{par}}, \forall c\in [k]$. Then by definition we have $\bm s_{\text{seq}}(\bm{S}^\ast_{\text{par}}) = \bm{S}^\ast_{\text{par}}$. So $\bm{S}^\ast_{\text{par}}$ is also a fixed point of $\bm s_{\text{seq}}$. 
  Then because $\bm{S}^\ast_{\text{par}} \preceq \bm S^1 =\bm S^1_{\text{seq}} $, applying the extended selection rule $\bm s_{\text{seq}}$ to both sides, we have $\bm{S}^\ast_{\text{par}} = \bm s_{\text{seq}}(\bm{S}^\ast_{\text{par}}) \preceq \bm s_{\text{seq}}(\bm S^1) = \bm S^2_{\text{seq}}$ by the increasing property of $\bm s_{seq}$. Similarly by induction, we get $\bm{S}^\ast_{\text{par}} = \bm s_{\text{seq}}(\bm{S}^\ast_{\text{par}}) \preceq \bm s_{\text{seq}}(\bm{S}^t_{\text{seq}}) = \bm{S}^{t+1}_{\text{seq}}, \forall t \geq 1$.  Therefore when the sequential composition algorithm converges ($t= T_{\text{seq}}$), $ \bm{S}^\ast_{\text{par}} \preceq \bm S^{T_{\text{seq}}}_{\text{seq}} =  \bm S^\ast_{\text{seq}} $.
\end{proof}



\end{document}